\documentclass[a4paper,12pt]{article}
\usepackage{authblk}
\usepackage{fullpage}
\usepackage{ae,lmodern}
\usepackage[english,french]{babel}
\usepackage[utf8]{inputenc}  
\usepackage[T1]{fontenc}
\usepackage{url,csquotes}
\usepackage{float}
\usepackage{amsfonts,amssymb,enumerate}
\usepackage{amsmath}
\usepackage[shortlabels]{enumitem}
\allowdisplaybreaks[1]
\usepackage{amsthm}
\usepackage{graphicx}
\usepackage{booktabs}
\usepackage{siunitx}
\usepackage{subcaption}
\usepackage{bbm}
\usepackage{listings}
\usepackage{titling}
\usepackage{dsfont}
\usepackage{color}
\usepackage{mathrsfs}
\usepackage{enumitem}
\usepackage{bmpsize}
\usepackage{multibib}
\usepackage[hidelinks,hyperfootnotes=false]{hyperref}

 \theoremstyle{plain}   
 \newtheorem{thm}{Theorem}[section]
\newtheorem{lemma}[thm]{Lemma}
\newtheorem{coro}[thm]{Corollary}
\newtheorem{prop}[thm]{Proposition}

\numberwithin{equation}{section}

\title{Importance Sampling Enhanced with the COS Method for the Portfolio Risk Allocation}
\date{}

\author{
    Fang Fang\thanks{\href{https://fsquaredquant.nl}{FF Quant Advisory B.V.}, 3531 WR Utrecht, the Netherlands (\href{mailto:fang.fang@ffquant.nl}{fang.fang@ffquant.nl}) and Delft Institute of Applied Mathematics, Delft University of Technology, 2628 CD Delft, the Netherlands  ( \href{mailto:f.fang@tudelft.nl}{f.fang@tudelft.nl}).}
    \and    
    Xiaoyu Shen\thanks{\href{https://fsquaredquant.nl}{FF Quant Advisory B.V.}, 3531 WR Utrecht, the Netherlands (\href{mailto:xiaoyu.shen@ffquant.nl}{xiaoyu.shen@ffquant.nl}).} 
    \and
    Qinling Wang\thanks{Delft Institute of Applied Mathematics, Delft University of Technology, 2628 CD Delft, the Netherlands
    (\href{mailto:q.wang-7@tudelft.nl}{q.wang-7@tudelft.nl}).}
}

\date{\today}

\begin{document}

\selectlanguage{english}
\maketitle

\begin{abstract}
We study the calibration of importance-sampling proposals for rare portfolio
losses and obligor-level tail-risk contributions in multi-factor credit
models.  Cross-entropy importance sampling approximates the common-state
distribution conditional on a tail event, but its standard implementation
(CEIS) estimates the required moments from binary tail indicators and can be
unstable when few pilot paths reach an extreme threshold.  We propose ISCOS,
which replaces each indicator by a filtered COS approximation of
\(q_x(u)=\mathbb P(L\geq x\mid U=u)\).  Conditional on the common state,
defaults are independent and the portfolio-loss characteristic function is
available in closed form, so these smooth calibration weights can be computed
without an additional default simulation.  We derive Gaussian proposals for
the systematic factors and Gaussian--inverse-Gamma product proposals for
Student \(t\)-copulas.  With exact conditional probabilities, the ISCOS
raw-moment estimator is the Rao--Blackwellisation of CEIS and has no larger
covariance matrix.  If the filtered COS error is uniformly \(O(K^{-p})\), the
induced proposal parameters inherit the same order, yielding a fixed-threshold
calibration error of
\(O_{\mathbb P}(M_0^{-1/2})+O(K^{-p})\).  On an eleven-factor benchmark, exact
conditional block convolution confirms convergence of the Gaussian
COS-calibrated moments.  Under matched simulation budgets, ISCOS gives higher
event-weight effective sample sizes and shorter average nominal pointwise intervals
than CEIS in the reported Gaussian and Student \(t\)-copula configurations,
particularly for tail contributions, while end-to-end costs remain comparable.
The Gaussian comparison also includes the paper-aligned Glasserman's proposal.
\end{abstract}

\noindent\textbf{Keywords.}
Credit portfolio; importance sampling; cross-entropy method;
Fourier--cosine method; conditional Monte Carlo; rare events;
Gaussian copula; Student \(t\)-copula; risk allocation.

\begingroup
\small
\tableofcontents
\endgroup

\section{Introduction}
\label{sec:introduction}

Credit-portfolio risk measurement is a rare-event problem.  For a portfolio
loss
\[
L=\sum_{n=1}^{N} l_n Y_n,
\]
Value-at-Risk, threshold-based expected shortfall, and their obligor-level
allocations depend on events such as \(\{L\geq x\}\) and \(\{L=x\}\).  At a
confidence level of \(99.9\%\), an ordinary Monte Carlo sample of size \(M\)
contains only about \(M/1000\) tail observations on average, while the
exact-level event used for a VaR contribution may be much rarer.  Accurate
allocation can therefore require substantially more simulation than estimation
of an unconditional portfolio quantity.

Importance sampling makes these events more frequent under the simulation law
and corrects the result by a likelihood ratio.  In factor-copula credit models,
a natural strategy changes the distribution of the common factors and then
exponentially twists the conditional Bernoulli defaults.  Its performance
depends strongly on the factor proposal.  A proposal that is too mild produces
few useful tail observations; one that is too concentrated can generate
unstable likelihood ratios and may fail the moment conditions needed for
standard error estimates.

This paper studies how that proposal should be calibrated.  For a fixed loss
threshold \(x\), the natural target is the distribution of the common state
\(U\) conditional on \(A_x=\{L\geq x\}\), whose density is proportional to
\[
f_U(u)q_x(u),
\qquad
q_x(u)=\mathbb P(L\geq x\mid U=u).
\]
The cross-entropy projection of this distribution onto a Gaussian family is
therefore determined by tail-conditional factor moments.  Standard
cross-entropy importance sampling (CEIS) estimates those moments from the
binary indicators \(\mathbf 1_{\{L\geq x\}}\).  At an extreme threshold, most
pilot observations receive zero weight, making the fitted covariance noisy or
nearly singular.

We propose ISCOS, which replaces the indicator by a finite COS approximation
\(q_{x,K}(U)\) of the conditional tail probability.  Conditional on the common
state, obligor defaults are independent and the conditional loss
characteristic function is available analytically.  The COS method can thus
integrate out the idiosyncratic default simulation at the calibration stage.
Each pilot state is weighted by its chance of producing a tail loss, rather
than by the outcome of one conditional Bernoulli draw.  CEIS and ISCOS then use
the same weighted proposal fit and the same production importance-sampling
estimator.  ISCOS is therefore a proposal-calibration method, not a direct COS
estimator of the final risk contributions.

\subsection{Related Literature and Methodological Position}
\label{subsec:intro_literature}

Gaussian factor-copula models are standard in portfolio credit risk, while
Student \(t\)-copulas provide a heavy-tailed extension with tail dependence;
see, for example,
\cite{GlassermanLi2005,Glasserman2005,DemartaMcNeil2005}.  Marginal VaR and
expected-shortfall contributions can be represented as conditional expected
obligor losses and, under suitable conditions, provide full allocations of the
portfolio quantity \cite{Kalkbrener2005,Glasserman2005}.

Glasserman and Li~\cite{GlassermanLi2005} developed a two-stage rare-event
simulation method that combines a change of measure for the common factors
with conditional default twisting.  Glasserman~\cite{Glasserman2005} extended
this framework to marginal risk contributions and studied factor mean shifts,
covariance changes, and related asymptotic approximations.  Algorithms~6.1 and~6.2
use the paper-aligned Gaussian proposal $\mathcal N(\mu_x^{\mathrm{GL}},I)$,
where the mean is selected by a saddle-point optimisation.  This mean-only
proposal provides the Glasserman benchmark in our Gaussian experiment; the
Hessian covariance extension is not used in that comparison.

The cross-entropy method instead projects the formal zero-variance law onto a
tractable family by minimising Kullback--Leibler divergence
\cite{RubinsteinKroese2004,deBoerEtAl2005}.  For an exponential family, this is
a sufficient-statistic moment-matching problem.  Cross-entropy therefore gives
a global description of the tail-conditioned factor distribution, but its
usual indicator-based implementation remains vulnerable to a small number of
pilot tail hits.

The COS method recovers distributional quantities from characteristic
functions using Fourier--cosine coefficients \cite{FangOosterlee2008}.
Filtered discrete COS expansions, needed to control oscillations near jumps,
are analysed in \cite{ShenFangLiu2024}.  ISCOS uses this machinery only to
approximate \(q_x(u)\) during calibration.  The final CVaR and CES contributions
are still estimated from independent importance-sampling runs with the full
factor and conditional-default likelihood ratios.

\subsection{Contributions and Paper Organisation}
\label{subsec:intro_contributions}

The paper makes four contributions.

\begin{enumerate}[label=(\roman*)]
\item We formulate cross-entropy proposal construction in the low-dimensional
common-state space.  For the Gaussian component,
\[
\mu_x^\star=\mathbb E[Z\mid L\geq x],
\qquad
\Sigma_x^\star=\operatorname{Cov}(Z\mid L\geq x).
\]
For the Student \(t\)-copula, we derive a Gaussian--inverse-Gamma product
proposal; the scale component matches the tail-conditional sufficient
statistics \(\log W\) and \(W^{-1}\).

\item We introduce ISCOS calibration based on clipped COS weights
\(q_{x,K}(U)\).  The construction handles the left limit required by a
discrete non-strict tail event and supports filtering, batching, exact grouping
of homogeneous obligors, and covariance regularisation.

\item We separate the statistical and numerical errors.  With the exact
conditional probability, the ISCOS raw-moment estimator is the
Rao--Blackwellisation of its CEIS counterpart and has a no-larger covariance
matrix.  If
\[
\sup_u\lvert q_{x,K}(u)-q_x(u)\rvert=O(K^{-p}),
\]
then the induced proposal parameters inherit the same order, and the total
fixed-threshold calibration error is
\[
O_{\mathbb P}(M_0^{-1/2})+O(K^{-p}).
\]

\item We report Gaussian and Student \(t\)-copula experiments on the same
eleven-factor portfolio.  For the Gaussian model, exact conditional block
convolution validates the COS-calibrated moments, followed by a matched-budget
comparison of the Glasserman mean-shift proposal $\mathcal N(\mu_x^{\mathrm{GL}},I)$, CEIS, and ISCOS.  For the
Student \(t\)-copula, CEIS-t and ISCOS-t fit the same
Gaussian--inverse-Gamma family from a shared pilot.  In both reported
configurations, ISCOS gives higher event-weight effective sample sizes and
shorter average nominal pointwise intervals than CEIS, especially for tail
contributions, while the additional calibration cost remains small relative to
production simulation.  These numerical findings are descriptive and are not
presented as a universal variance ordering.
\end{enumerate}

The remainder of the paper introduces the portfolio models and risk-allocation
conventions, derives the population cross-entropy proposals, presents the CEIS
and ISCOS estimators, and analyses their calibration errors.  The final two
numerical sections treat the Gaussian and Student \(t\)-copula models before the
paper concludes.

\paragraph{Declaration on the use of generative AI} ChatGPT was used during the preparation of this manuscript to assist with language editing and to support parts of the research code. Anything generated by AI was reviewed by authors.

%

\section{Credit-Portfolio Models and Tail-Risk Quantities}
\label{sec:credit_portfolio_models}

We consider a portfolio observed over a fixed time horizon. Loss-given-default
amounts are deterministic, while dependence across obligors is generated by a
small number of common factors. The Gaussian and Student~\(t\) specifications
used below differ in their common state, but both make defaults conditionally
independent. This is the property that later allows us to evaluate conditional
tail probabilities with the COS method.

\subsection{Factor-copula loss model}
\label{subsec:portfolio_loss}

For obligor \(n\), let \(X_n\) be a latent creditworthiness variable,
\(\xi_n\) its default threshold, and \(l_n>0\) its loss-at-default. We use the
lower-tail convention
\begin{equation}
Y_n=\mathbf{1}_{\{X_n\leq\xi_n\}},
\qquad
L=\sum_{n=1}^{N}l_nY_n,
\label{eq:portfolio_loss}
\end{equation}
so that \(p_n=\mathbb{P}(X_n\leq\xi_n)\) is the unconditional default
probability. The portfolio loss has finite support in \([0,L_{\max}]\), where
\(L_{\max}=\sum_{n=1}^{N}l_n\).

Let \(Z\sim\mathcal{N}_d(0,I_d)\) denote the systematic factors and let
\(\varepsilon_1,\ldots,\varepsilon_N\) be independent standard normal
idiosyncratic factors, independent of \(Z\). Obligor \(n\) has loading vector
\(\beta_n\in\mathbb{R}^d\), with \(\|\beta_n\|_2^2<1\), and idiosyncratic
loading
\[
b_n=\sqrt{1-\|\beta_n\|_2^2}.
\]

\paragraph{Gaussian copula.}
The latent variable is
\[
X_n=\beta_n^\top Z+b_n\varepsilon_n,
\qquad
\xi_n=\Phi^{-1}(p_n),
\]
where \(\Phi\) is the standard normal CDF. Conditional on \(Z=z\), the default
probability is
\begin{equation}
p_n(z)
=
\Phi\left(\frac{\xi_n-\beta_n^\top z}{b_n}\right).
\label{eq:gaussian_conditional_default_probability}
\end{equation}

\paragraph{Student \(t\)-copula.}
Let \(V\sim\chi_\nu^2\) be independent of \(Z\) and the idiosyncratic factors,
and set
\[
W=\frac{\nu}{V}
\sim
\operatorname{InvGamma}\left(\frac{\nu}{2},\frac{\nu}{2}\right).
\]
We use the shape--scale parameterisation
\begin{equation}
f_W(w;\alpha,\beta)
=
\frac{\beta^\alpha}{\Gamma(\alpha)}
 w^{-\alpha-1}\exp\left(-\frac{\beta}{w}\right),
\qquad w>0.
\label{eq:inverse_gamma_density}
\end{equation}
The common scale \(W\) multiplies both the systematic and idiosyncratic parts:
\begin{equation}
X_n
=
\sqrt{W}\left(\beta_n^\top Z+b_n\varepsilon_n\right).
\label{eq:full_t_latent_variable}
\end{equation}
Then \(X_n\) has a standard Student \(t_\nu\) marginal,
\(\xi_n=T_\nu^{-1}(p_n)\), and
\begin{equation}
p_n(z,w)
=
\Phi\left(
\frac{\xi_n/\sqrt{w}-\beta_n^\top z}{b_n}
\right).
\label{eq:full_t_conditional_default_probability}
\end{equation}

It is convenient to write the common state as
\begin{equation}
U=
\begin{cases}
Z, & \text{Gaussian copula},\\[0.2em]
(Z,W), & \text{Student \(t\)-copula}.
\end{cases}
\label{eq:common_state_summary}
\end{equation}
In either model, we write \(p_n(u)=\mathbb{P}(Y_n=1\mid U=u)\) for the
corresponding conditional default probability.

\subsection{Conditional loss distribution}
\label{subsec:conditional_loss_distribution}

Conditional on \(U=u\), the indicators \(Y_1,\ldots,Y_N\) are independent
Bernoulli variables with success probabilities \(p_1(u),\ldots,p_N(u)\).
Hence the conditional characteristic function of the portfolio loss is
\begin{equation}
\varphi_{L\mid u}(\omega)
=
\mathbb{E}\!\left[e^{\mathrm{i}\omega L}\mid U=u\right]
=
\prod_{n=1}^{N}
\left[1+p_n(u)\left(e^{\mathrm{i}\omega l_n}-1\right)\right].
\label{eq:conditional_characteristic_function}
\end{equation}
This closed-form expression is the main model input used by ISCOS.

For a loss threshold \(x\), define
\begin{equation}
\mathcal{A}_x=\{L\geq x\},
\qquad
q_x(u)=\mathbb{P}(L\geq x\mid U=u).
\label{eq:conditional_tail_probability}
\end{equation}
Because \(L\) is discrete,
\[
q_x(u)=1-F_{L\mid u}(x^-),
\]
where \(F_{L\mid u}\) is the conditional CDF. Averaging over the common state
gives
\[
\mathbb{P}(L\geq x)=\mathbb{E}[q_x(U)].
\]
The COS approximation developed later is applied to
\eqref{eq:conditional_characteristic_function} to evaluate \(q_x(u)\).

\subsection{Tail-risk quantities and allocations}
\label{subsec:tail_risk_measures}

Let \(F_L(x)=\mathbb{P}(L\leq x)\). For \(\alpha\in(0,1)\),
\begin{equation}
\mathrm{VaR}_\alpha
=
\inf\{x\in\mathbb{R}:F_L(x)\geq\alpha\}.
\label{eq:var_definition}
\end{equation}
For a discrete loss distribution, the probability mass at
\(\mathrm{VaR}_\alpha\) need not vanish; in particular,
\[
\mathbb{P}(L>\mathrm{VaR}_\alpha)
\leq 1-\alpha
\leq \mathbb{P}(L\geq\mathrm{VaR}_\alpha).
\]
We therefore use the threshold-based convention
\[
\mathrm{ES}_\alpha
=
\mathbb{E}[L\mid L\geq\mathrm{VaR}_\alpha].
\]

The obligor-level contribution to VaR and the conditional
expected-shortfall contribution are defined by
\begin{align}
\mathrm{CVaR}_{n,\alpha}
&=
\mathbb{E}\!\left[
 l_nY_n
 \,\middle|\,
 L=\mathrm{VaR}_\alpha
\right],
\label{eq:var_contribution}\\
\mathrm{CES}_{n,\alpha}
&=
\mathbb{E}\!\left[
 l_nY_n
 \,\middle|\,
 L\geq\mathrm{VaR}_\alpha
\right].
\label{eq:ces_contribution}
\end{align}
Since
\(L=\sum_n l_nY_n\), these quantities satisfy
\[
\sum_{n=1}^{N}\mathrm{CVaR}_{n,\alpha}=\mathrm{VaR}_\alpha,
\qquad
\sum_{n=1}^{N}\mathrm{CES}_{n,\alpha}=\mathrm{ES}_\alpha.
\]
In smooth models these conditional allocations agree with the familiar Euler
contributions. For a finite discrete portfolio, however, VaR need not be
differentiable with respect to every exposure, so we take the conditional
expectations above as the primary definitions.

When the exact level event is too sparse, it can be replaced by the local band
\begin{equation}
\mathcal{B}_{\alpha,h}
=
\{|L-\mathrm{VaR}_\alpha|\leq h\},
\qquad h\geq0.
\label{eq:local_loss_band}
\end{equation}
The corresponding local VaR contribution is obtained by conditioning on
\(\mathcal{B}_{\alpha,h}\) instead of \(\{L=\mathrm{VaR}_\alpha\}\). For an
integer-valued portfolio with an attainable VaR, one may take \(h=0\).

The importance-sampling construction below therefore depends on the chosen
copula only through the density of \(U\) and the conditional probabilities
\(p_n(u)\); the COS method supplies the smooth tail weight \(q_x(u)\).

\section[Cross-Entropy Importance Sampling in Factor Space]{Cross-Entropy Importance Sampling\\ in Factor Space}
\label{sec:ce_factor_space}

We now choose an importance-sampling proposal for the common state~\(U\).
The idea is straightforward: factor realisations that are often followed by a
large portfolio loss should be sampled more frequently. The conditional tail
probability
\[
q_x(u)=\mathbb{P}(L\geq x\mid U=u)
\]
measures exactly how strongly a factor state~\(u\) is associated with the tail
event. This section uses that quantity to define the ideal, population-level
proposal. Section~\ref{sec:estimating_ce_proposal} explains how the proposal is
estimated in practice.

\subsection{The factor distribution behind a tail loss}
\label{subsec:rare_event_target}

Fix a threshold~\(x\) and write
\begin{equation}
\mathcal{A}_x=\{L\geq x\},
\qquad
\kappa_x=\mathbb{P}(\mathcal{A}_x)>0.
\label{eq:ce_tail_event}
\end{equation}
Bayes' rule gives the density of the common state conditional on this event:
\begin{equation}
\pi_{U,x}^{\star}(u)
=
\frac{f_U(u)q_x(u)}{\kappa_x}
=
f_{U\mid\mathcal{A}_x}(u).
\label{eq:optimal_factor_target_density}
\end{equation}
Thus, a factor state receives more mass whenever it makes the loss event more
likely. Equivalently, for any integrable statistic~\(T\),
\begin{equation}
\mathbb{E}[T(U)\mid\mathcal{A}_x]
=
\frac{\mathbb{E}[T(U)q_x(U)]}{\mathbb{E}[q_x(U)]}.
\label{eq:tail_conditional_moment_identity}
\end{equation}
This identity is the basis of all proposal parameters derived below.

The density in \eqref{eq:optimal_factor_target_density} is the factor marginal
of the usual zero-variance distribution for estimating~\(\kappa_x\). It is
also a zero-variance proposal for the conditional Monte Carlo representation
\(\kappa_x=\mathbb{E}[q_x(U)]\). It is not directly available, however, because
both its normalising constant and the function~\(q_x\) are unknown in closed
form. We therefore approximate it within a tractable family. The same
construction can be used for an exact-level event~\(\{L=x\}\) or a local band
by replacing~\(q_x\) with the corresponding conditional probability.

The word ``optimal'' should be read in this specific sense. The target in
\eqref{eq:optimal_factor_target_density} is natural for conditional Monte
Carlo, and its cross-entropy projection is optimal within the chosen proposal
family. This does not imply that the fitted proposal minimises the variance of
every downstream risk estimator.

\subsection{Cross-entropy as a weighted fit}
\label{subsec:cross_entropy_projection}

Let \(\mathcal{G}=\{g_\theta:\theta\in\Theta\}\) be a parametric family of
factor densities. The cross-entropy parameter is defined by
\begin{equation}
\theta_x^\star
\in
\arg\min_{\theta\in\Theta}
D_{\mathrm{KL}}\!\left(\pi_{U,x}^\star\,\|\,g_\theta\right).
\label{eq:ce_projection_definition}
\end{equation}
Terms that do not depend on~\(\theta\) can be removed from the objective, so
\eqref{eq:ce_projection_definition} is equivalent to
\begin{equation}
\theta_x^\star
\in
\arg\max_{\theta\in\Theta}
\mathbb{E}\!\left[q_x(U)\log g_\theta(U)\right].
\label{eq:ce_weighted_objective}
\end{equation}
The factor~\(\kappa_x^{-1}\) has been omitted because it does not affect the
maximiser. In other words, population cross-entropy calibration is simply a
weighted likelihood fit: factor states are weighted by their conditional
probability of producing a tail loss.

For an exponential family
\[
g_\theta(u)
=
c(u)\exp\{\theta^\top T(u)-A(\theta)\},
\]
the first-order condition for an interior optimum is
\begin{equation}
\mathbb{E}_{g_{\theta_x^\star}}[T(U)]
=
\mathbb{E}[T(U)\mid\mathcal{A}_x].
\label{eq:exponential_family_moment_matching}
\end{equation}
Hence the cross-entropy projection matches the tail-conditional sufficient
statistics. The Gaussian and inverse-Gamma proposals below are both examples
of this rule; see \cite{RubinsteinKroese2004,deBoerEtAl2005} for the general
cross-entropy method.

\subsection{Gaussian proposal for the systematic factors}
\label{subsec:gaussian_ce_projection}

Under both copula specifications, the original systematic factor is
\(Z\sim\mathcal{N}_d(0,I_d)\). We approximate its tail-conditioned marginal by
\(\mathcal{N}_d(\mu,\Sigma)\). Since the Gaussian sufficient statistics are
\(Z\) and \(ZZ^\top\), moment matching gives
\begin{align}
\mu_x^\star
&=
\mathbb{E}[Z\mid\mathcal{A}_x]
=
\frac{\mathbb{E}[Zq_x(U)]}{\mathbb{E}[q_x(U)]},
\label{eq:optimal_gaussian_mean_q}\\
\Sigma_x^\star
&=
\operatorname{Cov}(Z\mid\mathcal{A}_x)
=
\frac{
\mathbb{E}\!\left[
 (Z-\mu_x^\star)(Z-\mu_x^\star)^\top q_x(U)
\right]
}{
\mathbb{E}[q_x(U)]
}.
\label{eq:optimal_gaussian_covariance_q}
\end{align}
Thus the Gaussian proposal is fitted to the global location and spread of the
factor states associated with a large loss. For the Gaussian copula,
\(U=Z\); for the Student~\(t\)-copula, the expectations are over
\(U=(Z,W)\).

If only a mean shift is allowed, the optimal mean is still~\(\mu_x^\star\) and
the covariance is fixed at~\(I_d\). Within a diagonal Gaussian family, the
optimal variances are the diagonal entries of~\(\Sigma_x^\star\). These
restricted proposals can be useful when the pilot sample is too small to fit a
stable full covariance. Finite-sample regularisation is discussed in
Section~\ref{subsec:implementation_covariance_regularisation}.

\subsection{Student \texorpdfstring{$t$}{t}-copula product proposal}
\label{subsec:student_product_proposal}

For the Student~\(t\)-copula, the common state is~\((Z,W)\). Although \(Z\) and
\(W\) are independent under the original model, conditioning on a large loss
generally makes them dependent: a tail loss may be caused by an adverse factor
direction, a large common scale, or both. For tractability, we use the product
family
\begin{equation}
g_{Z,W}(z,w)
=
\phi_d(z;\mu,\Sigma)
 f_W(w;\alpha,\beta),
\label{eq:student_product_proposal}
\end{equation}
where \(f_W\) is the inverse-Gamma density in
\eqref{eq:inverse_gamma_density}. Because the log-density in
\eqref{eq:student_product_proposal} is additive, the cross-entropy objective
separates into a Gaussian fit for~\(Z\) and an inverse-Gamma fit for~\(W\).
The Gaussian parameters are therefore still given by
\eqref{eq:optimal_gaussian_mean_q}--\eqref{eq:optimal_gaussian_covariance_q}.

For the scale component, define the two tail-conditional sufficient-statistic
moments
\begin{align}
m_{\log,x}
&=
\mathbb{E}[\log W\mid\mathcal{A}_x]
=
\frac{\mathbb{E}[(\log W)q_x(Z,W)]}{\mathbb{E}[q_x(Z,W)]},
\label{eq:conditional_log_w_moment}\\
m_{-1,x}
&=
\mathbb{E}[W^{-1}\mid\mathcal{A}_x]
=
\frac{\mathbb{E}[W^{-1}q_x(Z,W)]}{\mathbb{E}[q_x(Z,W)]}.
\label{eq:conditional_inverse_w_moment}
\end{align}

\begin{thm}[Inverse-Gamma cross-entropy projection]
\label{thm:inverse_gamma_ce_projection}
Assume that
\(
\mathbb{E}[|\log W|+W^{-1}\mid\mathcal{A}_x]<\infty
\)
and that \(W\mid\mathcal{A}_x\) is non-degenerate. The inverse-Gamma component
of the product proposal is uniquely determined by
\begin{equation}
\beta_x^\star
=
\frac{\alpha_x^\star}{m_{-1,x}},
\qquad
\psi(\alpha_x^\star)-\log\alpha_x^\star
+m_{\log,x}+\log m_{-1,x}=0,
\label{eq:inverse_gamma_ce_system}
\end{equation}
where \(\psi\) is the digamma function.
\end{thm}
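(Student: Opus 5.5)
Because the log-density of the product family \eqref{eq:student_product_proposal} is additive, the weighted objective \eqref{eq:ce_weighted_objective} splits into a Gaussian part and a scale part, so it suffices to maximise
\[
J(\alpha,\beta)=\mathbb{E}\!\left[q_x(U)\log f_W(W;\alpha,\beta)\right]
\]
over \(\alpha,\beta>0\). Dividing by \(\kappa_x\) and using the identity \eqref{eq:tail_conditional_moment_identity}, this is equivalent to maximising
\[
\ell(\alpha,\beta)=\alpha\log\beta-\log\Gamma(\alpha)-(\alpha+1)m_{\log,x}-\beta\, m_{-1,x}.
\]
The integrability assumption guarantees that \(m_{\log,x}\) and \(m_{-1,x}\) are finite, so \(\ell\) is well defined and finite on \((0,\infty)^2\).

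The first step is the stationarity condition in \(\beta\). For fixed \(\alpha\), \(\ell\) is strictly concave in \(\beta\), since \(\partial_\beta^2\ell=-\alpha/\beta^2<0\). Its unique maximiser is therefore \(\beta=\alpha/m_{-1,x}\). This requires \(m_{-1,x}>0\), which holds because \(W>0\).

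The second step profiles out \(\beta\). Substituting \(\beta=\alpha/m_{-1,x}\) gives, up to constants,
\[
h(\alpha)=\alpha\log\alpha-\alpha-\alpha\log m_{-1,x}-\log\Gamma(\alpha)-\alpha\, m_{\log,x}.
\]
Its derivative is
\[
h'(\alpha)=\log\alpha-\psi(\alpha)-\log m_{-1,x}-m_{\log,x}.
\]
Setting \(h'(\alpha)=0\) gives exactly the second equation in \eqref{eq:inverse_gamma_ce_system}.

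Uniqueness reduces to showing that this equation has exactly one root. I will use the following facts about \(g(\alpha)=\log\alpha-\psi(\alpha)\):
\begin{itemize}
\item \(g\) is strictly decreasing on \((0,\infty)\), since \(g'(\alpha)=1/\alpha-\psi'(\alpha)<0\) by the standard bound \(\psi'(\alpha)>1/\alpha+1/(2\alpha^2)\).
\item \(g(\alpha)\to+\infty\) as \(\alpha\downarrow0\).
\item \(g(\alpha)\to0\) as \(\alpha\to\infty\).
\end{itemize}
Since \(g\) is strictly decreasing with range \((0,\infty)\), a unique root exists if and only if
\[
c:=\log m_{-1,x}+m_{\log,x}>0.
\]
The constant \(c\) can be rewritten as
\[
c=\log\mathbb{E}[W^{-1}\mid\mathcal{A}_x]-\mathbb{E}[\log W^{-1}\mid\mathcal{A}_x].
\]
By Jensen's inequality for the strictly concave logarithm, \(c\ge0\). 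Equality holds only if \(W^{-1}\) is a.s.\ constant under the conditional law, so the non-degeneracy assumption gives the strict inequality \(c>0\). The same monotonicity of \(h'\) shows that \(h\) is strictly concave. The stationary point is therefore the global maximiser, and the pair \((\alpha_x^\star,\beta_x^\star)\) is unique.

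The main obstacle is this last part: establishing strict monotonicity of \(\log\alpha-\psi(\alpha)\) together with its limits, and identifying the Jensen gap as exactly the condition needed for existence. Everything else is direct differentiation. I would cite the classical inequality \(\psi'(\alpha)>1/\alpha\), obtained for instance from the series \(\psi'(\alpha)=\sum_{k\ge0}(\alpha+k)^{-2}\) compared with \(\int_\alpha^\infty t^{-2}\,dt\). I would take the asymptotics \(\psi(\alpha)=\log\alpha-1/(2\alpha)+O(\alpha^{-2})\) and \(\psi(\alpha)\sim-1/\alpha\) as \(\alpha\downarrow0\) as known.

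Finally, I would check that no maximiser escapes to the boundary of the parameter space. This follows from the coercivity of \(h\) implied by these limits, since \(h'\to+\infty\) as \(\alpha\downarrow0\) and \(h'\to-c<0\) as \(\alpha\to\infty\).
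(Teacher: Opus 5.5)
Your proposal is correct and follows the paper's proof essentially step for step. Like the paper, you profile out $\beta$ via $\beta(\alpha)=\alpha/m_{-1,x}$, use $\psi'(\alpha)>1/\alpha$ (from the series-versus-integral comparison) to get both strict concavity of the profiled objective and strict monotonicity of $\log\alpha-\psi(\alpha)$, and apply strict Jensen to $W^{-1}$ to obtain $c_x>0$ and hence a unique root. Your closing coercivity check is harmless but redundant, since a stationary point of a strictly concave function on $(0,\infty)$ is already its unique global maximiser.
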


\begin{proof}
For \(\alpha,\beta>0\), the inverse-Gamma log-density is
\[
\log g_{\alpha,\beta}(w)
=
\alpha\log\beta-\log\Gamma(\alpha)
-(\alpha+1)\log w-\frac{\beta}{w}.
\]
Hence the tail-conditional cross-entropy objective is
\[
J_x(\alpha,\beta)
=
\alpha\log\beta-\log\Gamma(\alpha)
-(\alpha+1)m_{\log,x}
-\beta m_{-1,x}.
\]
The assumed integrability ensures that this quantity is finite, and
\(m_{-1,x}>0\).

For fixed \(\alpha>0\),
\[
\frac{\partial J_x}{\partial\beta}
=
\frac{\alpha}{\beta}-m_{-1,x},
\qquad
\frac{\partial^2 J_x}{\partial\beta^2}
=
-\frac{\alpha}{\beta^2}<0.
\]
Therefore the unique maximiser in \(\beta\) is
\[
\beta(\alpha)=\frac{\alpha}{m_{-1,x}}.
\]
Substitution gives the profiled objective
\[
\overline J_x(\alpha)
=
\alpha\log\alpha
-\alpha\log m_{-1,x}
-\log\Gamma(\alpha)
-(\alpha+1)m_{\log,x}
-\alpha .
\]
Let
\[
c_x
=
m_{\log,x}+\log m_{-1,x}.
\]
Then
\[
\overline J_x'(\alpha)
=
\log\alpha-\psi(\alpha)-c_x,
\]
and
\[
\overline J_x''(\alpha)
=
\frac1\alpha-\psi_1(\alpha)<0,
\]
where \(\psi_1\) is the trigamma function. Indeed,
\[
\psi_1(\alpha)
=
\sum_{k=0}^{\infty}\frac{1}{(\alpha+k)^2}
>
\int_0^\infty\frac{dt}{(\alpha+t)^2}
=
\frac1\alpha.
\]
Thus \(\overline J_x\) is strictly concave.

Moreover, strict Jensen inequality gives
\[
\begin{aligned}
c_x
&=
\log\mathbb{E}[W^{-1}\mid\mathcal A_x]
-
\mathbb{E}[\log(W^{-1})\mid\mathcal A_x]
\\
&>0,
\end{aligned}
\]
because \(W\mid\mathcal A_x\) is non-degenerate. Define
\[
h(\alpha)=\log\alpha-\psi(\alpha).
\]
The preceding trigamma inequality shows that \(h\) is strictly decreasing,
while the standard digamma asymptotics give
\[
\lim_{\alpha\downarrow0}h(\alpha)=+\infty,
\qquad
\lim_{\alpha\uparrow\infty}h(\alpha)=0.
\]
Since \(c_x>0\), there is therefore a unique
\(\alpha_x^\star>0\) satisfying
\[
h(\alpha_x^\star)=c_x,
\]
or equivalently,
\[
\psi(\alpha_x^\star)-\log\alpha_x^\star
+m_{\log,x}+\log m_{-1,x}=0.
\]
Because \(\overline J_x\) is strictly concave, this root is its unique global
maximiser. The corresponding scale parameter is uniquely given by
\[
\beta_x^\star
=
\frac{\alpha_x^\star}{m_{-1,x}}.
\]
\end{proof}

The inverse-Gamma fit therefore matches \(\log W\) and~\(W^{-1}\), not the
ordinary mean of~\(W\). The product form in
\eqref{eq:student_product_proposal} is a deliberate approximation: it does not
represent the dependence between \(Z\) and~\(W\) induced by the tail event. A
richer conditional proposal, such as a Gaussian law for \(Z\mid W\), is
possible but is outside the first implementation considered here.

\subsection{Pilot reweighting and likelihood ratios}
\label{subsec:generic_pilot_distribution}

The identities above are written under the original common-state density
\(f_U\), but the pilot sample may come from another density~\(h\). Assume that
\(h(u)>0\) wherever \(f_U(u)q_x(u)>0\), and define
\begin{equation}
R_h(u)=\frac{f_U(u)}{h(u)}.
\label{eq:generic_pilot_likelihood_ratio}
\end{equation}
Then
\begin{equation}
\mathbb{E}[T(U)\mid\mathcal{A}_x]
=
\frac{
\mathbb{E}_h[T(U)R_h(U)q_x(U)]
}{
\mathbb{E}_h[R_h(U)q_x(U)]
}.
\label{eq:generic_pilot_conditional_moment}
\end{equation}
When \(h=f_U\), the weight is simply~\(q_x(U)\). Formula
\eqref{eq:generic_pilot_conditional_moment} also permits iterative
cross-entropy schemes in which the pilot distribution is updated between
stages.

Once a proposal has been fitted, production estimators must be corrected back
to the original measure. If the original Gaussian law is
\(\mathcal{N}_d(0,I_d)\) and the proposal is
\(\mathcal{N}_d(\mu,\Sigma)\), the factor likelihood ratio is
\begin{equation}
R_Z(z;\mu,\Sigma)
=
|\Sigma|^{1/2}
\exp\left\{
\frac12(z-\mu)^\top\Sigma^{-1}(z-\mu)
-\frac12z^\top z
\right\}.
\label{eq:factor_gaussian_likelihood_ratio}
\end{equation}
For a mean shift, \(\Sigma=I_d\), this reduces to
\[
R_Z(z;\mu,I_d)
=
\exp\left(-\mu^\top z+\tfrac12\mu^\top\mu\right).
\]
If the original scale law is
\(\operatorname{InvGamma}(\alpha_0,\beta_0)\) and the proposal uses
\(\operatorname{InvGamma}(\alpha,\beta)\), then
\begin{equation}
R_W(w;\alpha,\beta)
=
\frac{\beta_0^{\alpha_0}\Gamma(\alpha)}
     {\Gamma(\alpha_0)\beta^\alpha}
 w^{\alpha-\alpha_0}
 \exp\left(\frac{\beta-\beta_0}{w}\right).
\label{eq:inverse_gamma_likelihood_ratio}
\end{equation}
Under the product proposal, \(R_U(z,w)=R_Z(z)R_W(w)\). Any change to the
conditional Bernoulli default probabilities contributes a separate likelihood
ratio in the production stage. All ratios are evaluated on the logarithmic
scale in the implementation.

\section{Estimating the Cross-Entropy Proposal}
\label{sec:estimating_ce_proposal}

Section~\ref{sec:ce_factor_space} shows that the cross-entropy proposal is
obtained from moments of the common state under the tail event
\(\mathcal A_x=\{L\geq x\}\). In practice, proposal calibration therefore
comes down to one question: how much weight should each pilot factor draw
receive? CEIS uses the outcome of one simulated portfolio loss, whereas ISCOS
uses an approximation of the conditional probability of a tail loss. Once the
weights have been computed, both methods fit the proposal in exactly the same
way.

Throughout this section, the threshold \(x\) is fixed. In the numerical
experiments it is an attainable estimate of \(\operatorname{VaR}_\alpha\)
obtained from a preliminary Monte Carlo run. The proposal is calibrated to the non-strict event
\(\{L\geq x\}\). The same threshold is then used for the exact-level
and tail estimators in the production stage.

Let \(h\) be the pilot density of the common state and write
\(R_h(u)=f_U(u)/h(u)\). The first implementation takes \(h=f_U\), so that
\(R_h\equiv1\); the more general notation also covers an updated pilot
proposal.

\subsection{One pilot sample, two weights}
\label{subsec:weighted_ce_estimators}

Draw \(U^{(1)},\ldots,U^{(M_0)}\) independently from \(h\). For CEIS, also
generate a conditional default vector and its loss
\(L^{(m)}=\sum_{n=1}^N l_nY_n^{(m)}\). ISCOS does not need this additional
default draw for calibration. The two weights are

\begin{equation}
\omega_m
=
R_h\bigl(U^{(m)}\bigr)
\begin{cases}
\mathbf 1_{\{L^{(m)}\geq x\}}, & \text{CEIS},\\[0.3em]
q_{x,K}\bigl(U^{(m)}\bigr), & \text{ISCOS}.
\end{cases}
\label{eq:ceis_iscos_weight_summary}
\end{equation}

For any nonnegative weights \(\omega_1,\ldots,\omega_{M_0}\), let
\(S_\omega=\sum_m\omega_m\). Provided that \(S_\omega>0\), the Gaussian
proposal is fitted by the weighted mean and covariance

\begin{align}
\widehat\mu_\omega
&=
\frac{1}{S_\omega}
\sum_{m=1}^{M_0}\omega_m Z^{(m)},
\label{eq:weighted_gaussian_mean}
\\
\widehat\Sigma_{\omega,\mathrm{raw}}
&=
\frac{1}{S_\omega}
\sum_{m=1}^{M_0}\omega_m
\bigl(Z^{(m)}-\widehat\mu_\omega\bigr)
\bigl(Z^{(m)}-\widehat\mu_\omega\bigr)^\top.
\label{eq:weighted_gaussian_covariance}
\end{align}

Substituting the first or second line of \eqref{eq:ceis_iscos_weight_summary}
gives the CEIS or ISCOS estimate, respectively. The denominator is the total
weight, rather than an unbiased-sample correction, because the target is the
population covariance in Section~\ref{subsec:gaussian_ce_projection}.

For the Student \(t\)-copula, the same weights are also used to estimate

\[
\widehat m_{\log,\omega}
=
\frac{\sum_m\omega_m\log W^{(m)}}{S_\omega},
\qquad
\widehat m_{-1,\omega}
=
\frac{\sum_m\omega_m(W^{(m)})^{-1}}{S_\omega}.
\]

Inserting these two moments into \eqref{eq:inverse_gamma_ce_system} gives the
fitted inverse-Gamma parameters.

When \(h=f_U\), CEIS assigns weight one only to simulated tail hits. At an
extreme threshold, there may be too few such observations to estimate a
full covariance matrix reliably. ISCOS instead assigns a probability between
zero and one to each factor state. States that are stressful but happen not
to produce a tail loss in one conditional simulation can therefore still
contribute to the fit. Section~\ref{sec:error_analysis} makes this
conditional-Monte-Carlo comparison precise.

\subsection{Computing the smooth COS weights}
\label{subsec:cos_conditional_tail_approximation}

ISCOS requires
\(
q_x(u)=\mathbb P(L\geq x\mid U=u)
\).
Conditional on \(U=u\), the loss characteristic function is available from
\eqref{eq:conditional_characteristic_function}. Choose an interval \([a,b]\)
that contains the loss support in its interior, and let \(K\) be the number
of COS modes, including the zero mode. With
\(
\omega_k=k\pi/(b-a)
\)
and filter coefficients
\(
\sigma_{k,K}=\sigma(k/K)
\), the filtered conditional CDF is approximated by

\begin{align}
F_{L\mid u,K}^{\sigma}(y)
&=
\frac{y-a}{b-a}
+
\frac{2}{\pi}
\sum_{k=1}^{K-1}
\frac{\sigma_{k,K}}{k}
\operatorname{Re}
\left\{
\varphi_{L\mid u}(\omega_k)e^{-\mathrm i\omega_k a}
\right\}
\sin\left(k\pi\frac{y-a}{b-a}\right).
\label{eq:filtered_cos_conditional_cdf}
\end{align}

Because the event is \(L\geq x\), the relevant CDF value is the left limit
\(F_{L\mid u}(x^-)\). We therefore evaluate the expansion at a point
\(y_x<x\) and set

\begin{align}
\widetilde q_{x,K}(u)
&=1-F_{L\mid u,K}^{\sigma}(y_x),
\label{eq:raw_cos_tail_probability}
\\
q_{x,K}(u)
&=\Pi_{[0,1]}\bigl(\widetilde q_{x,K}(u)\bigr).
\label{eq:clipped_cos_tail_probability}
\end{align}

The left-limit convention matters for a discrete loss. If all losses lie on
a lattice with step \(\Delta\) and \(x\) is attainable, we use

\begin{equation}
a=-\frac{\Delta}{2},
\qquad
b=L_{\max}+\frac{\Delta}{2},
\qquad
y_x=x-\frac{\Delta}{2}.
\label{eq:lattice_cos_interval_and_cutoff}
\end{equation}

Then \(F_{L\mid u}(y_x)=\mathbb P(L<x\mid U=u)\), so the COS expansion is not
evaluated at a jump of the CDF. For a non-lattice portfolio, \(y_x\) can be
chosen just below \(x\), with the convention reported alongside the numerical
results.

Clipping removes small Fourier over- and undershoots. Since the true
probability lies in \([0,1]\), it cannot make the pointwise error larger:

\begin{equation}
\bigl|q_{x,K}(u)-q_x(u)\bigr|
\leq
\bigl|\widetilde q_{x,K}(u)-q_x(u)\bigr|.
\label{eq:clipping_nonexpansive}
\end{equation}

The number of modes controls the trade-off between numerical cost and the
accuracy of the smooth weight. Section~\ref{sec:error_analysis} studies how
the finite-COS error propagates to the fitted proposal parameters.

\subsection{Practical calibration}
\label{subsec:implementation_covariance_regularisation}

A preliminary simulation is first used to choose an attainable threshold
\(
\widehat x_\alpha=\inf\{x:\widehat F_{M_0}(x)\geq\alpha\}
\).
The pilot factor draws can then be reused for CEIS and ISCOS, but the final
importance-sampling run should be independent of this calibration sample.

A weighted covariance can be nearly singular when only a small effective
sample contributes. We first symmetrise the raw estimate and, in the baseline
implementation, add a small ridge:

\begin{align}
\widehat\Sigma_{x,\mathrm{sym}}
&=
\frac{1}{2}\left(
\widehat\Sigma_{x,\mathrm{raw}}
+
\widehat\Sigma_{x,\mathrm{raw}}^\top
\right),
\label{eq:covariance_symmetrisation}
\\
\widehat\Sigma_{x,\rho}
&=
\widehat\Sigma_{x,\mathrm{sym}}+\rho I_d,
\qquad \rho>0.
\label{eq:covariance_ridge}
\end{align}

If more regularisation is needed, one may shrink towards a spherical
covariance,

\begin{equation}
\widehat\Sigma_{x,\lambda}
=
(1-\lambda)\widehat\Sigma_{x,\mathrm{sym}}
+
\lambda\tau I_d,
\qquad
\tau=\frac{\operatorname{tr}(\widehat\Sigma_{x,\mathrm{sym}})}{d},
\label{eq:covariance_shrinkage}
\end{equation}

or floor very small eigenvalues. These are finite-sample safeguards rather
than part of the population cross-entropy projection.

A useful summary of the calibration weights is their self-normalised effective
sample size,

\begin{equation}
M_{\mathrm{eff}}
=
\frac{\left(\sum_{m=1}^{M_0}\omega_m\right)^2}
{\sum_{m=1}^{M_0}\omega_m^2}.
\label{eq:calibration_effective_sample_size}
\end{equation}

If \(M_{\mathrm{eff}}\) is comparable to, or smaller than, the factor
dimension, a full covariance fit should be treated cautiously. For ISCOS we
also record the range of the raw COS probabilities, the amount of clipping,
and the average smooth weight. Conditional characteristic functions are
evaluated in batches; identical obligors can be grouped, reducing the direct
\(\mathcal O(M_0NK)\) cost when the portfolio has repeated exposures and
loadings.

The calibration procedure can be summarised as follows.

\begin{enumerate}[label=\arabic*.]
\item Estimate an attainable tail threshold and retain the pilot common states.
\item Compute either the CEIS indicators or the ISCOS conditional-tail
probabilities.
\item Fit the weighted Gaussian moments and, for the Student \(t\)-copula, the
two inverse-Gamma moments.
\item Regularise the covariance when needed and check the weight diagnostics.
\item Use the fitted proposal in an independent production importance-sampling
run.
\end{enumerate}

The two methods therefore differ at only one point: CEIS uses one binary loss
outcome per pilot state, whereas ISCOS uses a smooth estimate of the chance
that the same state produces a tail loss. The next section studies the
statistical gain from this replacement and the approximation error introduced
by the finite COS expansion.

%
%

\section{Error Analysis for ISCOS Calibration}
\label{sec:error_analysis}

The cross-entropy proposal is determined by moments of the common state under
\(\mathcal A_x=\{L\geq x\}\). In practice, these moments are not observed
directly. CEIS estimates them with the binary indicator
\(I_x=\mathbf 1_{\mathcal A_x}\), whereas ISCOS replaces this indicator by a
finite COS approximation \(q_{x,K}(U)\) of the conditional tail probability
\[
q_x(U)=\mathbb P(\mathcal A_x\mid U).
\]
These two substitutions affect the calibration in different ways. Replacing
\(I_x\) by the exact conditional probability \(q_x(U)\) removes the
idiosyncratic default noise that remains after the common state has been drawn.
Replacing \(q_x\) by \(q_{x,K}\), on the other hand, introduces a deterministic
numerical error.

It is useful to keep these effects separate. Let \(\vartheta_x^\star\) denote
the exact population cross-entropy parameters, let
\(\vartheta_{x,K}^\star\) denote the population parameters obtained with the
finite COS weight \(q_{x,K}\), and let
\(\widehat\vartheta_{x,K,M_0}\) be the corresponding estimator from a pilot
sample of size \(M_0\). Then
\begin{equation}
\widehat\vartheta_{x,K,M_0}-\vartheta_x^\star
=
\underbrace{
\bigl(\widehat\vartheta_{x,K,M_0}-\vartheta_{x,K}^\star\bigr)
}_{\text{pilot Monte Carlo error}}
+
\underbrace{
\bigl(\vartheta_{x,K}^\star-\vartheta_x^\star\bigr)
}_{\text{finite-COS error}}.
\label{eq:error_basic_decomposition}
\end{equation}
The first term is statistical and decreases with \(M_0\); the second is
deterministic and decreases with the number of COS modes \(K\). Before
studying this decomposition, we compare CEIS with the ideal estimator that
uses the exact conditional probability.

Throughout this section, the loss threshold \(x\) is fixed. Thus, all results
refer to the calibration of a proposal for a prescribed event
\(\mathcal A_x\). The additional effect of replacing \(x\) by a preliminary
empirical quantile is discussed in
Section~\ref{subsec:error_scope_and_threshold}.

\subsection{Exact conditional weighting and Rao--Blackwellisation}
\label{subsec:error_raoblackwell}

The proposal parameters are ratios of unnormalised tail moments. It is
therefore convenient to compare the underlying raw moments first, before the
normalisation is applied. Let \(T(U)\in\mathbb R^r\) collect the common-state
statistics required by the chosen proposal family. For the Gaussian
component, one may take
\begin{equation}
T_{\mathrm G}(U)
=
\left(
1,
Z^\top,
\operatorname{vech}(ZZ^\top)^\top
\right)^\top.
\label{eq:error_gaussian_statistic_vector}
\end{equation}
The first coordinate estimates the tail probability, the next \(d\)
coordinates estimate the numerator of the tail-conditional mean, and the
remaining coordinates estimate the numerator of the second moment. For the
Student~\(t\)-copula proposal, the inverse-Gamma sufficient statistics are
appended:
\begin{equation}
T_t(U)
=
\left(
T_{\mathrm G}(U)^\top,
\log W,
W^{-1}
\right)^\top.
\label{eq:error_t_statistic_vector}
\end{equation}

Let \(h\) be the pilot density and
\(R_h(u)=f_U(u)/h(u)\) the corresponding likelihood ratio. Define the
one-observation raw-moment vectors
\begin{equation}
G_x^{\mathrm{CEIS}}
:=R_h(U)I_xT(U),
\qquad
G_x^{\mathrm{RB}}
:=R_h(U)q_x(U)T(U).
\label{eq:error_ceis_rb_raw_vectors}
\end{equation}
Here \(\mathrm{RB}\) denotes the ideal Rao--Blackwellised estimator.
Conditional on \(U\), the factor \(R_h(U)T(U)\) is fixed, and therefore
\begin{equation}
\mathbb E\!\left[G_x^{\mathrm{CEIS}}\mid U\right]
=R_h(U)T(U)\mathbb E[I_x\mid U]
=G_x^{\mathrm{RB}}.
\label{eq:error_rb_conditional_expectation}
\end{equation}
Consequently, the two vectors estimate exactly the same population moments:
\begin{equation}
\mathbb E_h\!\left[G_x^{\mathrm{CEIS}}\right]
=
\mathbb E_h\!\left[G_x^{\mathrm{RB}}\right]
=
\mathbb E_{f_U}\!\left[q_x(U)T(U)\right].
\label{eq:error_equal_raw_moment_means}
\end{equation}
The difference lies only in their sampling covariance.

\begin{thm}[Exact Rao--Blackwell covariance reduction]
\label{thm:error_exact_covariance_reduction}
Assume that
\[
\mathbb E_h\!\left[R_h(U)^2\|T(U)\|_2^2\right]<\infty.
\]
Then
\begin{align}
&\operatorname{Cov}_h\!\left(G_x^{\mathrm{CEIS}}\right)
-
\operatorname{Cov}_h\!\left(G_x^{\mathrm{RB}}\right)
\notag\\
&\qquad=
\mathbb E_h\!\left[
R_h(U)^2q_x(U)\bigl(1-q_x(U)\bigr)
T(U)T(U)^\top
\right]
\succeq0.
\label{eq:error_exact_covariance_difference}
\end{align}
Equivalently, for every \(a\in\mathbb R^r\),
\begin{equation}
\operatorname{Var}_h\!\left(a^\top G_x^{\mathrm{RB}}\right)
\leq
\operatorname{Var}_h\!\left(a^\top G_x^{\mathrm{CEIS}}\right).
\label{eq:error_scalar_variance_order}
\end{equation}
\end{thm}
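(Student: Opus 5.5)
The plan is to apply the law of total covariance, conditioning on the common state \(U\), and to use the conditional-mean identity \eqref{eq:error_rb_conditional_expectation}. First I will check that both covariance matrices are finite. Since \(0\le I_x\le1\) and \(0\le q_x(U)\le1\),
\[
\|G_x^{\mathrm{CEIS}}\|_2^2\le R_h(U)^2\|T(U)\|_2^2
\quad\text{and}\quad
\|G_x^{\mathrm{RB}}\|_2^2\le R_h(U)^2\|T(U)\|_2^2 .
\]
The assumed moment condition therefore puts both vectors in \(L^2(\mathbb P_h)\). The CEIS vector is defined on the joint space of \(U\) and the conditional defaults. Under the pilot measure, \(U\sim h\) and the defaults are drawn from their original conditional law given \(U\), so \(\mathbb E[I_x\mid U]=q_x(U)\) holds under \(\mathbb P_h\) as well.

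Next I will apply the decomposition
\[
\operatorname{Cov}_h(G_x^{\mathrm{CEIS}})
=\mathbb E_h\bigl[\operatorname{Cov}(G_x^{\mathrm{CEIS}}\mid U)\bigr]
+\operatorname{Cov}_h\bigl(\mathbb E[G_x^{\mathrm{CEIS}}\mid U]\bigr).
\]
By \eqref{eq:error_rb_conditional_expectation}, the second term equals \(\operatorname{Cov}_h(G_x^{\mathrm{RB}})\). For the first term, conditional on \(U\) the vector \(R_h(U)T(U)\) is deterministic and \(I_x\) is Bernoulli with parameter \(q_x(U)\). This gives
\[
\operatorname{Cov}(G_x^{\mathrm{CEIS}}\mid U)=R_h(U)^2\,q_x(U)\bigl(1-q_x(U)\bigr)\,T(U)T(U)^\top .
\]
Taking expectations and rearranging yields \eqref{eq:error_exact_covariance_difference}. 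The expectation is finite because the integrand is dominated in norm by \(\tfrac14R_h^2\|T\|_2^2\).

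For positive semidefiniteness, I will fix \(a\in\mathbb R^r\) and compute
\[
a^\top(\cdot)\,a=\mathbb E_h\bigl[R_h^2q_x(1-q_x)(a^\top T)^2\bigr]\ge0 .
\]
The integrand is nonnegative pointwise. Since \(a^\top\operatorname{Cov}(G)\,a=\operatorname{Var}(a^\top G)\), this also gives \eqref{eq:error_scalar_variance_order}.

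There is no real obstacle in this argument. The only points needing care are the measure-theoretic details: the joint pilot law must preserve the conditional Bernoulli structure given \(U\), and the integrability must be justified so that the total covariance decomposition holds componentwise. Alternatively, one can avoid the decomposition by computing directly. Because \(I_x^2=I_x\),
\[
\mathbb E_h\bigl[G^{\mathrm{CEIS}}G^{\mathrm{CEIS}\top}\bigr]=\mathbb E_h\bigl[R_h^2q_xTT^\top\bigr]
\quad\text{and}\quad
\mathbb E_h\bigl[G^{\mathrm{RB}}G^{\mathrm{RB}\top}\bigr]=\mathbb E_h\bigl[R_h^2q_x^2TT^\top\bigr].
\]
The means are equal by \eqref{eq:error_equal_raw_moment_means}, so subtracting these two second moments gives the same difference.
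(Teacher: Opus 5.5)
Your proposal is correct and follows the paper's own proof. Both condition on \(U\), use the conditional Bernoulli covariance \(R_h(U)^2q_x(U)(1-q_x(U))T(U)T(U)^\top\) in the law of total covariance, and obtain positive semidefiniteness (hence the scalar variance order) from the nonnegative quadratic form. Your explicit \(L^2\) bounds, the remark that the pilot law keeps the conditional default structure, and the alternative second-moment computation via \(I_x^2=I_x\) are valid but not needed beyond what the paper does.
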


\begin{proof}
Conditional on \(U\), the only remaining randomness in
\(G_x^{\mathrm{CEIS}}\) is the Bernoulli variable \(I_x\). Since
\[
\mathbb E[I_x\mid U]=q_x(U),
\qquad
\operatorname{Var}(I_x\mid U)
=q_x(U)\bigl(1-q_x(U)\bigr),
\]
we have \eqref{eq:error_rb_conditional_expectation} and
\[
\operatorname{Cov}\!\left(G_x^{\mathrm{CEIS}}\mid U\right)
=
R_h(U)^2q_x(U)\bigl(1-q_x(U)\bigr)T(U)T(U)^\top.
\]
The law of total covariance gives
\begin{align*}
\operatorname{Cov}_h\!\left(G_x^{\mathrm{CEIS}}\right)
&=
\operatorname{Cov}_h\!\left(
\mathbb E[G_x^{\mathrm{CEIS}}\mid U]
\right)
+
\mathbb E_h\!\left[
\operatorname{Cov}(G_x^{\mathrm{CEIS}}\mid U)
\right]\\
&=
\operatorname{Cov}_h\!\left(G_x^{\mathrm{RB}}\right)
+
\mathbb E_h\!\left[
R_h(U)^2q_x(U)\bigl(1-q_x(U)\bigr)T(U)T(U)^\top
\right],
\end{align*}
which proves \eqref{eq:error_exact_covariance_difference}. The last matrix is
positive semidefinite because it is an expectation of positive-semidefinite
rank-one matrices. Equation~\eqref{eq:error_scalar_variance_order} follows by
pre- and post-multiplying by \(a^\top\) and \(a\).
\end{proof}

The result has a direct componentwise interpretation. When the original pilot
law is used, so that \(R_h\equiv1\), every scalar statistic \(T_\ell(U)\)
satisfies
\begin{equation}
\operatorname{Var}\!\left(T_\ell(U)I_x\right)
-
\operatorname{Var}\!\left(T_\ell(U)q_x(U)\right)
=
\mathbb E\!\left[
T_\ell(U)^2q_x(U)\bigl(1-q_x(U)\bigr)
\right].
\label{eq:error_componentwise_variance_gain}
\end{equation}
Thus, the denominator, every first moment, and every second moment used in the
Gaussian fit all benefit from the same conditional-variance removal. The gain
is strict in a direction \(a\) whenever
\(a^\top R_h(U)T(U)\neq0\) with positive probability on a region where
\(0<q_x(U)<1\). If \(q_x(U)\in\{0,1\}\) almost surely, the common state already
determines whether the event occurs, so there is no conditional default noise
to remove.

The exact order in Theorem~\ref{thm:error_exact_covariance_reduction} concerns
the unnormalised moments. The fitted CE parameters are nonlinear functions of
these moments. To make this connection explicit, write the Gaussian raw moment
vector as
\begin{equation}
m_x=(\kappa_x,b_x,\operatorname{vech}(C_x)),
\qquad
b_x=\mathbb E_h[R_hq_xZ],
\qquad
C_x=\mathbb E_h[R_hq_xZZ^\top],
\label{eq:error_gaussian_raw_moments}
\end{equation}
where \(\kappa_x=\mathbb E_h[R_hq_x]>0\). The parameter map
\(\mathcal G\) is
\begin{equation}
\mu_x^\star=\frac{b_x}{\kappa_x},
\qquad
\Sigma_x^\star
=\frac{C_x}{\kappa_x}-\mu_x^\star(\mu_x^\star)^\top.
\label{eq:error_gaussian_parameter_map}
\end{equation}
For perturbations \((\dot\kappa,\dot b,\dot C)\), its differential is
\begin{align}
D\mu_x^\star
&=
\frac{\dot b-\mu_x^\star\dot\kappa}{\kappa_x},
\label{eq:error_gaussian_mean_differential}\\
D\Sigma_x^\star
&=
\frac{\dot C}{\kappa_x}
-
\frac{C_x\dot\kappa}{\kappa_x^2}
-
(D\mu_x^\star)(\mu_x^\star)^\top
-
\mu_x^\star(D\mu_x^\star)^\top.
\label{eq:error_gaussian_covariance_differential}
\end{align}
Hence the map is continuously differentiable whenever \(\kappa_x>0\).

\begin{coro}[Asymptotic covariance of the fitted proposal]
\label{coro:error_parameter_covariance_reduction}
Assume the conditions of
Theorem~\ref{thm:error_exact_covariance_reduction}, let \(\kappa_x>0\), and
suppose that the parameter map \(\mathcal G\) is continuously differentiable
at the common raw-moment vector \(m_x\). Define
\begin{equation}
\widehat m_x^{\,j}
=
\frac1{M_0}\sum_{m=1}^{M_0}G_{x,m}^{j},
\qquad
j\in\{\mathrm{CEIS},\mathrm{RB}\}.
\label{eq:error_raw_moment_sample_averages}
\end{equation}
Then
\begin{equation}
\sqrt{M_0}
\left(
\mathcal G(\widehat m_x^{\,j})-\mathcal G(m_x)
\right)
\xrightarrow{\mathrm d}
\mathcal N(0,\mathcal V_j),
\label{eq:error_parameter_clt}
\end{equation}
where
\begin{equation}
\mathcal V_j
=
D\mathcal G(m_x)
\operatorname{Cov}_h(G_x^j)
D\mathcal G(m_x)^\top.
\label{eq:error_parameter_asymptotic_covariance}
\end{equation}
Moreover,
\begin{equation}
\mathcal V_{\mathrm{RB}}
\preceq
\mathcal V_{\mathrm{CEIS}}.
\label{eq:error_parameter_asymptotic_covariance_order}
\end{equation}
\end{coro}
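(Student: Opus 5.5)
The plan is the standard multivariate CLT followed by the delta method, with the ordering inherited from Theorem~\ref{thm:error_exact_covariance_reduction} through a congruence transformation.

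\emph{Step 1: CLT for the raw moments.} For each \(j\in\{\mathrm{CEIS},\mathrm{RB}\}\), the pilot draws are i.i.d., so \(G_{x,1}^{j},\ldots,G_{x,M_0}^{j}\) are i.i.d. copies of \(G_x^j\). By \eqref{eq:error_equal_raw_moment_means}, both vectors have the common mean \(m_x\). Finiteness of the second moments follows from the hypothesis \(\mathbb E_h[R_h(U)^2\|T(U)\|_2^2]<\infty\), since \(|I_x|\le 1\) and \(0\le q_x\le 1\) give
\[
\|G_x^j\|_2\le R_h(U)\|T(U)\|_2 .
\]
For the Gaussian statistic \eqref{eq:error_gaussian_statistic_vector}, the raw-moment vector \(m_x\) consists of selected coordinates of \(\mathbb E_h[G_x^j]\), so it is a linear image of the full mean. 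The multivariate CLT then yields
\[
\sqrt{M_0}\,(\widehat m_x^{\,j}-m_x)\xrightarrow{\mathrm d}\mathcal N\bigl(0,\operatorname{Cov}_h(G_x^j)\bigr).
\]
Here \(\operatorname{Cov}_h(G_x^j)\) is understood in the coordinates on which \(\mathcal G\) acts. If \(T\) carries redundant coordinates, I would compose \(\mathcal G\) with the coordinate projection; this changes nothing below.

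\emph{Step 2: delta method.} \(\mathcal G\) is continuously differentiable at \(m_x\), and \(\kappa_x>0\); for the Gaussian case this is the explicit differential \eqref{eq:error_gaussian_mean_differential}--\eqref{eq:error_gaussian_covariance_differential}. The first coordinate of \(\widehat m_x^{\,j}\) converges almost surely to \(\kappa_x\) by the SLLN, so \(\widehat m_x^{\,j}\) lies in a neighbourhood of \(m_x\) where \(\mathcal G\) is defined with probability tending to one. This point deserves a sentence, because for CEIS \(S_\omega\) can be zero in finite samples. The first-order delta method then gives \eqref{eq:error_parameter_clt}, with \(\mathcal V_j\) as in \eqref{eq:error_parameter_asymptotic_covariance}. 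For the Student \(t\) case the same argument applies: \(\mathcal G\) additionally includes the implicit map of Theorem~\ref{thm:inverse_gamma_ce_projection}. That map is \(C^1\) by the implicit function theorem, because \(h'(\alpha)=1/\alpha-\psi_1(\alpha)\neq0\). In any case, this smoothness is granted by the hypotheses.

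\emph{Step 3: ordering.} Let \(D=D\mathcal G(m_x)\), which is the same matrix for both estimators because the means coincide. Then
\[
\mathcal V_{\mathrm{CEIS}}-\mathcal V_{\mathrm{RB}}=D\bigl(\operatorname{Cov}_h(G_x^{\mathrm{CEIS}})-\operatorname{Cov}_h(G_x^{\mathrm{RB}})\bigr)D^\top .
\]
By \eqref{eq:error_exact_covariance_difference}, the middle factor is positive semidefinite. A congruence \(D\,\cdot\,D^\top\) preserves positive semidefiniteness, since \(v^\top DAD^\top v=(D^\top v)^\top A(D^\top v)\ge0\) for every \(v\). This gives \eqref{eq:error_parameter_asymptotic_covariance_order}.

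The main obstacle is not the matrix algebra but identifying the two expansions at the same point. The argument depends on both estimators sharing the limit \(m_x\), and hence the same Jacobian. That is exactly what \eqref{eq:error_equal_raw_moment_means} provides, so I would state it explicitly at the start of Step 3.
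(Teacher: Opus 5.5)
Your proposal is correct and follows the same route as the paper: the multivariate CLT for the raw-moment averages, the delta method at the common mean \(m_x\), and the congruence \(D\mathcal G(m_x)(\cdot)D\mathcal G(m_x)^\top\), which transfers the positive-semidefinite order of Theorem~\ref{thm:error_exact_covariance_reduction} to \(\mathcal V_{\mathrm{CEIS}}-\mathcal V_{\mathrm{RB}}\). Your added details, namely finite second moments from \(0\le I_x,q_x\le 1\) and a random denominator that stays away from zero with probability tending to one, make explicit what the paper's three-line proof leaves implicit.
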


\begin{proof}
The multivariate central limit theorem applies to the raw-moment averages in
\eqref{eq:error_raw_moment_sample_averages}. The delta method then yields
\eqref{eq:error_parameter_clt}. Finally,
\begin{align*}
\mathcal V_{\mathrm{CEIS}}-\mathcal V_{\mathrm{RB}}
&=
D\mathcal G(m_x)
\left[
\operatorname{Cov}_h(G_x^{\mathrm{CEIS}})
-
\operatorname{Cov}_h(G_x^{\mathrm{RB}})
\right]
D\mathcal G(m_x)^\top,
\end{align*}
which is positive semidefinite by
Theorem~\ref{thm:error_exact_covariance_reduction}.
\end{proof}

For the Student~\(t\)-copula, the same conclusion applies to the Gaussian
parameters and to the inverse-Gamma parameters. The latter are smooth
functions of the two sufficient-statistic moments \(m_{\log,x}\) and
\(m_{-1,x}\) because the derivative of the inverse-Gamma shape equation is
non-zero at the population solution; this point is made explicit in
Corollary~\ref{coro:error_inverse_gamma_parameter_rate} below.

Corollary~\ref{coro:error_parameter_covariance_reduction} is a first-order
asymptotic statement for the normalised proposal parameters. It does not give
an exact finite-sample variance order for the ratio estimators
\(\widehat\mu_x\) and \(\widehat\Sigma_x\). It also does not yet describe
practical ISCOS, because practical ISCOS uses \(q_{x,K}\), not the exact
conditional probability \(q_x\).

\subsection{A uniform COS error for the conditional tail probability}
\label{subsec:error_uniform_cos}

We next study the deterministic difference between \(q_{x,K}\) and \(q_x\).
The required bound must hold uniformly over the common state: the subsequent
CE moments integrate over all possible values of \(U\), so a pointwise-in-
\(u\) convergence statement would not by itself control the moment ratios.
The finite support of the conditional portfolio loss makes the desired
uniformity available.

Let
\begin{equation}
\mathcal S_L=\{s_1,\ldots,s_J\}\subset(a,b)
\label{eq:error_finite_loss_support}
\end{equation}
be the distinct attainable losses. This set depends only on the deterministic
losses-at-default and not on the common state. Conditional on \(U=u\), write
\begin{equation}
\pi_j(u)=\mathbb P(L=s_j\mid U=u),
\qquad
\pi_j(u)\geq0,
\qquad
\sum_{j=1}^J\pi_j(u)=1.
\label{eq:error_conditional_atom_probabilities}
\end{equation}
The conditional CDF is therefore the convex combination
\begin{equation}
F_{L\mid u}(y)
=
\sum_{j=1}^J\pi_j(u)\mathbf 1_{\{s_j\leq y\}}.
\label{eq:error_conditional_cdf_atom_mixture}
\end{equation}

Let \(y_x<x\) be the continuity point used in Section~4 to represent the
left limit, so that
\(q_x(u)=1-F_{L\mid u}(y_x)\). For a unit point mass at \(s_j\), denote by
\(H_{j,K}^{\sigma}(y_x)\) the filtered COS approximation to its CDF at
\(y_x\). The discrete-COS results in \cite{ShenFangLiu2024} give, for the
chosen filter, interval, and evaluation point, an atomwise estimate of the
form
\begin{equation}
\left|
H_{j,K}^{\sigma}(y_x)-\mathbf 1_{\{s_j\leq y_x\}}
\right|
\leq
C_{j,x,\sigma}K^{-p},
\qquad j=1,\ldots,J,
\label{eq:error_atomwise_discrete_cos_rate}
\end{equation}
where \(p>0\) is the proved convergence order of the filtered CDF
approximation.

\begin{thm}[Uniform conditional COS error]
\label{thm:error_uniform_conditional_cos_rate}
Suppose that \(\mathcal S_L\) is finite, lies strictly inside \((a,b)\), and
\(y_x\notin\mathcal S_L\). If
\eqref{eq:error_atomwise_discrete_cos_rate} holds, then there is a finite
constant \(C_{x,\sigma}\), independent of \(u\), such that
\begin{equation}
\sup_u
\left|
\widetilde q_{x,K}(u)-q_x(u)
\right|
\leq
C_{x,\sigma}K^{-p}.
\label{eq:error_uniform_raw_tail_rate}
\end{equation}
The clipped COS probability in
\eqref{eq:clipped_cos_tail_probability} satisfies the same order:
\begin{equation}
\epsilon_{x,K}
:=
\sup_u\left|q_{x,K}(u)-q_x(u)\right|
=O(K^{-p}).
\label{eq:error_uniform_clipped_tail_rate}
\end{equation}
\end{thm}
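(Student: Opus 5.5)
The plan is to use linearity of the filtered COS operator in the underlying distribution, together with the atom decomposition \eqref{eq:error_conditional_cdf_atom_mixture}. The filtered COS coefficients depend on the conditional law only through the characteristic function. Because the support $\mathcal S_L$ is finite, the conditional characteristic function is a finite combination
\[
\varphi_{L\mid u}(\omega)=\sum_{j=1}^J\pi_j(u)e^{\mathrm i\omega s_j}.
\]
Substituting this into \eqref{eq:filtered_cos_conditional_cdf} gives
\[
F^{\sigma}_{L\mid u,K}(y_x)=\sum_{j=1}^J\pi_j(u)H^{\sigma}_{j,K}(y_x).
\]
Here $H^{\sigma}_{j,K}$ is the filtered COS CDF of the unit mass at $s_j$. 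The affine term $(y-a)/(b-a)$ passes through the combination because $\sum_j\pi_j(u)=1$. The key point is that the same $a,b,K$, filter and evaluation point $y_x$ are used for every $u$, so the functions $H^{\sigma}_{j,K}$ do not depend on $u$.

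Next, I subtract the exact expression $q_x(u)=1-\sum_j\pi_j(u)\mathbf 1_{\{s_j\leq y_x\}}$ from $\widetilde q_{x,K}(u)=1-F^{\sigma}_{L\mid u,K}(y_x)$. This gives
\[
\bigl|\widetilde q_{x,K}(u)-q_x(u)\bigr|\leq\sum_{j=1}^J\pi_j(u)\bigl|H^{\sigma}_{j,K}(y_x)-\mathbf 1_{\{s_j\leq y_x\}}\bigr|.
\]
Applying the atomwise estimate \eqref{eq:error_atomwise_discrete_cos_rate} bounds the right-hand side by $\max_j C_{j,x,\sigma}K^{-p}$, using $\sum_j\pi_j(u)=1$ and $\pi_j\ge0$. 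I then set $C_{x,\sigma}=\max_{1\le j\le J}C_{j,x,\sigma}$. This is finite because $J$ is finite, and it is independent of $u$ because neither the atoms nor the constants involve $u$. Taking the supremum over $u$ yields \eqref{eq:error_uniform_raw_tail_rate}.

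The clipped statement follows immediately from the nonexpansiveness \eqref{eq:clipping_nonexpansive} of projection onto $[0,1]$, since $q_x(u)\in[0,1]$. Hence $\epsilon_{x,K}\le C_{x,\sigma}K^{-p}$.

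The main obstacle is the linearity step, which requires checking that \eqref{eq:filtered_cos_conditional_cdf} really is a linear functional of the law. The sine coefficients are linear in $\varphi$, and the real part commutes with real convex combinations. The hypotheses that $s_j\in(a,b)$ and $y_x\notin\mathcal S_L$ enter only through \eqref{eq:error_atomwise_discrete_cos_rate}. This is where the discrete-COS analysis of \cite{ShenFangLiu2024} is actually used, and I will not reprove it.
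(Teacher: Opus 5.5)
Your proposal is correct and follows the paper's proof essentially step for step. Both arguments use linearity of the filtered COS reconstruction in the atom probabilities, the triangle inequality with the convex weights $\pi_j(u)$, the $u$-independent constant $\max_j C_{j,x,\sigma}$, and nonexpansiveness of the projection onto $[0,1]$. Your remark that the affine term $(y-a)/(b-a)$ passes through because $\sum_j\pi_j(u)=1$ is a small piece of care that the paper leaves implicit when it calls the reconstruction ``linear''.
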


\begin{proof}
The filtered COS reconstruction is linear in the characteristic function and
therefore linear in the atom probabilities. Hence
\begin{equation}
F_{L\mid u,K}^{\sigma}(y_x)
=
\sum_{j=1}^J\pi_j(u)H_{j,K}^{\sigma}(y_x).
\label{eq:error_cos_cdf_atom_mixture}
\end{equation}
Combining \eqref{eq:error_conditional_cdf_atom_mixture} and
\eqref{eq:error_cos_cdf_atom_mixture} gives
\begin{align*}
\left|
F_{L\mid u,K}^{\sigma}(y_x)-F_{L\mid u}(y_x)
\right|
&\leq
\sum_{j=1}^J\pi_j(u)
\left|
H_{j,K}^{\sigma}(y_x)-\mathbf 1_{\{s_j\leq y_x\}}
\right|\\
&\leq
\left(\max_{1\leq j\leq J}C_{j,x,\sigma}\right)
K^{-p}
\sum_{j=1}^J\pi_j(u)\\
&=
C_{x,\sigma}K^{-p}.
\end{align*}
The constant is independent of \(u\) because the conditional atom
probabilities form a convex combination. Since
\(\widetilde q_{x,K}=1-F_{L\mid u,K}^{\sigma}(y_x)\), the same bound holds for
the raw tail approximation. Finally, projection onto \([0,1]\) is
non-expansive relative to any point in \([0,1]\), so
\eqref{eq:clipping_nonexpansive} gives the same bound for \(q_{x,K}\).
\end{proof}

The assumption \(y_x\notin\mathcal S_L\) is essential for approximating the
non-strict event \(\{L\geq x\}\). At a jump of a discrete CDF, a symmetric
Fourier reconstruction naturally approaches a midpoint value rather than the
left limit required by \(1-F_{L\mid u}(x^-)\). For a lattice portfolio, the
half-step convention in
\eqref{eq:lattice_cos_interval_and_cutoff} places \(y_x\) strictly between two
attainable losses and therefore avoids this ambiguity.

Two features of Theorem~\ref{thm:error_uniform_conditional_cos_rate} are worth
emphasising. First, the common state changes the probabilities
\(\pi_j(u)\), but not the support points \(s_j\); this is why the bound is
uniform in both the Gaussian and Student~\(t\)-copula models. Second, the
symbol \(p\) denotes the convergence order actually established for the
selected filtered CDF approximation. It should not be identified
mechanically with a parameter in the name of a filter. For example, if the
available theorem for a formal order-\(r\) filter gives only
\(O(K^{1-r})\), then the subsequent results should be read with
\(p=r-1\). A sharper value of \(p\) should be used only when it is proved
for the selected filter and parameterisation. Filter-specific endpoint terms
and the position of \(y_x\) relative to the support and the COS interval
affect the constant and, in some cases, the available order.

\subsection{Propagation of the COS error to the CE parameters}
\label{subsec:error_parameter_propagation}

The finite-COS proposal is defined through normalised weighted moments, so the
next step is to propagate the uniform error in
\eqref{eq:error_uniform_clipped_tail_rate} through these ratios. We first give
a simple identity that will be used repeatedly.

\begin{lemma}[Perturbation identity for a weighted moment]
\label{lemma:error_weighted_ratio_identity}
Let \(q\) and \(q_K\) be integrable weights with
\(\kappa=\mathbb E[q(U)]>0\) and
\(\kappa_K=\mathbb E[q_K(U)]>0\). For an integrable vector- or matrix-valued
function \(g\), define
\[
\bar g
=
\frac{\mathbb E[g(U)q(U)]}{\kappa},
\qquad
\bar g_K
=
\frac{\mathbb E[g(U)q_K(U)]}{\kappa_K}.
\]
With \(e_K=q_K-q\),
\begin{equation}
\bar g_K-\bar g
=
\frac{
\mathbb E\!\left[(g(U)-\bar g)e_K(U)\right]
}{\kappa_K}.
\label{eq:error_weighted_ratio_identity}
\end{equation}
Consequently, if \(\|e_K\|_\infty\leq\epsilon_K\), then
\begin{equation}
\|\bar g_K-\bar g\|
\leq
\frac{\epsilon_K}{\kappa_K}
\mathbb E\!\left[\|g(U)-\bar g\|\right].
\label{eq:error_weighted_ratio_bound}
\end{equation}
\end{lemma}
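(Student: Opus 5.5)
The identity is pure algebra. I will first subtract \(\bar g\) inside the perturbed moment, then check that the extra terms cancel. The starting point is the observation that
\[
\mathbb E\!\left[(g(U)-\bar g)q(U)\right]=\mathbb E[g(U)q(U)]-\bar g\,\kappa=0
\]
by the definition of \(\bar g\). Next, because \(\bar g\) is a constant, the shifted moment under \(q_K\) is
\[
\mathbb E\!\left[(g(U)-\bar g)q_K(U)\right]=\mathbb E[g(U)q_K(U)]-\bar g\,\kappa_K=\kappa_K(\bar g_K-\bar g).
\]
Writing \(q_K=q+e_K\) and using the first display, the left-hand side equals \(\mathbb E[(g(U)-\bar g)e_K(U)]\). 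Dividing by \(\kappa_K>0\) then gives \eqref{eq:error_weighted_ratio_identity}. Every expectation used here is finite: \(gq\) and \(gq_K\) are integrable, since the moments \(\bar g\) and \(\bar g_K\) are assumed to exist. For the bounded weights of interest, with \(q,q_K\in[0,1]\), this follows directly from the integrability of \(g\).

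For the bound \eqref{eq:error_weighted_ratio_bound}, I will take norms in the identity and apply the triangle (Jensen) inequality for Bochner integrals, \(\|\mathbb E[X]\|\le\mathbb E\|X\|\). This holds for any norm on vectors or matrices, because a norm is convex. It remains to bound the integrand pointwise:
\[
\|(g(U)-\bar g)e_K(U)\|=|e_K(U)|\,\|g(U)-\bar g\|\le\epsilon_K\|g(U)-\bar g\|.
\]
Taking expectations of this bound yields the claimed inequality. The right-hand side is finite because \(g\) is integrable and \(\bar g\) is a constant.

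There is no genuine obstacle in this argument. The only points that need care are bookkeeping: the normalisation must be by \(\kappa_K\) rather than \(\kappa\), which comes out of the rearrangement automatically, and the integrability of \(gq_K\) and \(ge_K\) must be justified, which the boundedness of the weights or the stated assumptions provide. It is worth noting that \(\kappa_K\ge\kappa-\epsilon_K\) in this setting, so later results can replace \(\kappa_K\) by \(\kappa/2\) for large \(K\). That step is not needed for the lemma itself.
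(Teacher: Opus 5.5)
Your proposal is correct and follows essentially the paper's argument: both expand \(\kappa_K(\bar g_K-\bar g)=\mathbb E[gq_K]-\kappa_K\bar g\) with \(q_K=q+e_K\) and \(\mathbb E[gq]=\kappa\bar g\), then take norms to get the bound. The only differences are cosmetic: you phrase the cancellation as the centred identity \(\mathbb E[(g-\bar g)q]=0\) where the paper writes \(\kappa_K=\kappa+\mathbb E[e_K]\), and you spell out the integrability bookkeeping that the paper leaves implicit.
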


\begin{proof}
Using \(q_K=q+e_K\) and
\(\mathbb E[gq]=\kappa\bar g\),
\begin{align*}
\kappa_K(\bar g_K-\bar g)
&=
\mathbb E[gq_K]-\kappa_K\bar g\\
&=
\mathbb E[gq]+\mathbb E[ge_K]
-(\kappa+\mathbb E[e_K])\bar g\\
&=
\mathbb E[(g-\bar g)e_K].
\end{align*}
The norm bound follows immediately.
\end{proof}

For later use, define the finite-\(K\) normaliser and Gaussian
cross-entropy parameters by
\begin{align}
\kappa_{x,K}
&:=
\mathbb{E}_h\!\left[
R_h(U)q_{x,K}(U)
\right]
=
\mathbb{E}_{f_U}\!\left[
q_{x,K}(U)
\right],
\label{eq:cos_population_tail_normalizer}
\\
\mu_{x,K}^{\star}
&:=
\frac{
\mathbb{E}_h\!\left[
Z R_h(U)q_{x,K}(U)
\right]
}{
\kappa_{x,K}
},
\label{eq:cos_population_gaussian_mean}
\\
\Sigma_{x,K}^{\star}
&:=
\frac{
\mathbb{E}_h\!\left[
\bigl(Z-\mu_{x,K}^{\star}\bigr)
\bigl(Z-\mu_{x,K}^{\star}\bigr)^\top
R_h(U)q_{x,K}(U)
\right]
}{
\kappa_{x,K}
}.
\label{eq:cos_population_gaussian_covariance}
\end{align}
Their exact counterparts are obtained by replacing \(q_{x,K}\) with
\(q_x\).

\begin{thm}[Propagation to the Gaussian CE parameters]
\label{thm:error_gaussian_ce_parameter_rate}
Assume that \(\kappa_x=\mathbb P(L\geq x)>0\),
\(\mathbb E[\|Z\|_2^2]<\infty\), and
\eqref{eq:error_uniform_clipped_tail_rate} holds. Then
\begin{equation}
|\kappa_{x,K}-\kappa_x|
\leq
\epsilon_{x,K}.
\label{eq:error_kappa_explicit_bound}
\end{equation}
For all sufficiently large \(K\), so that
\(\epsilon_{x,K}\leq\kappa_x/2\),
\begin{equation}
\|\mu_{x,K}^\star-\mu_x^\star\|_2
\leq
\frac{2\epsilon_{x,K}}{\kappa_x}
\mathbb E\!\left[\|Z-\mu_x^\star\|_2\right],
\label{eq:error_mu_explicit_bound}
\end{equation}
and
\begin{align}
\|\Sigma_{x,K}^\star-\Sigma_x^\star\|_{\mathrm F}
&\leq
\frac{2\epsilon_{x,K}}{\kappa_x}
\mathbb E\!\left[
\left\|
(Z-\mu_x^\star)(Z-\mu_x^\star)^\top
-\Sigma_x^\star
\right\|_{\mathrm F}
\right]
\notag\\
&\quad+
\left(
\frac{2\epsilon_{x,K}}{\kappa_x}
\mathbb E\!\left[\|Z-\mu_x^\star\|_2\right]
\right)^2.
\label{eq:error_sigma_explicit_bound}
\end{align}
In particular,
\begin{equation}
|\kappa_{x,K}-\kappa_x|=O(K^{-p}),
\qquad
\|\mu_{x,K}^\star-\mu_x^\star\|_2=O(K^{-p}),
\qquad
\|\Sigma_{x,K}^\star-\Sigma_x^\star\|_{\mathrm F}=O(K^{-p}).
\label{eq:error_gaussian_parameter_rates}
\end{equation}
\end{thm}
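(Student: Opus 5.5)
The plan is to apply Lemma~\ref{lemma:error_weighted_ratio_identity} repeatedly with $q=q_x$, $q_K=q_{x,K}$, $e_K=q_{x,K}-q_x$, and $\|e_K\|_\infty=\epsilon_{x,K}$. All expectations are taken under $f_U$, which is legitimate because $\mathbb E_h[R_h\,\cdot\,]=\mathbb E_{f_U}[\cdot]$.

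First, the normaliser. We have $\kappa_{x,K}-\kappa_x=\mathbb E[e_K(U)]$, so $|\kappa_{x,K}-\kappa_x|\le\epsilon_{x,K}$, which is \eqref{eq:error_kappa_explicit_bound}. If $\epsilon_{x,K}\le\kappa_x/2$, then $\kappa_{x,K}\ge\kappa_x/2>0$, so every ratio below is well defined and $1/\kappa_{x,K}\le 2/\kappa_x$. This lower bound on the perturbed normaliser is the only place where ``sufficiently large $K$'' enters. Theorem~\ref{thm:error_uniform_conditional_cos_rate} guarantees $\epsilon_{x,K}\to0$, so the condition does hold for large $K$.

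For the mean, apply the lemma with $g(U)=Z$, so that $\bar g=\mu_x^\star$ and $\bar g_K=\mu_{x,K}^\star$. The bound \eqref{eq:error_weighted_ratio_bound} combined with $1/\kappa_{x,K}\le2/\kappa_x$ gives \eqref{eq:error_mu_explicit_bound} directly. Integrability holds because $\mathbb E\|Z\|_2\le(\mathbb E\|Z\|_2^2)^{1/2}<\infty$.

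The covariance is the step that needs care, because $\Sigma_{x,K}^\star$ is centred at $\mu_{x,K}^\star$ rather than at $\mu_x^\star$. I will use the recentring identity
\[
\Sigma_{x,K}^\star=\frac{\mathbb E[(Z-\mu_x^\star)(Z-\mu_x^\star)^\top q_{x,K}]}{\kappa_{x,K}}-\delta_K\delta_K^\top,
\qquad \delta_K=\mu_{x,K}^\star-\mu_x^\star .
\]
To verify it, expand $Z-\mu_{x,K}^\star=(Z-\mu_x^\star)-\delta_K$ and observe that the $q_{x,K}$-weighted mean of $Z-\mu_x^\star$ equals $\delta_K$. Next apply the lemma with $g(U)=(Z-\mu_x^\star)(Z-\mu_x^\star)^\top$, whose exact weighted mean is $\Sigma_x^\star$. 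This bounds the first term minus $\Sigma_x^\star$ in Frobenius norm by $\frac{2\epsilon_{x,K}}{\kappa_x}\mathbb E\|(Z-\mu_x^\star)(Z-\mu_x^\star)^\top-\Sigma_x^\star\|_{\mathrm F}$. That expectation is finite because $\mathbb E\|Z\|_2^2<\infty$. For the remaining term, $\|\delta_K\delta_K^\top\|_{\mathrm F}=\|\delta_K\|_2^2$, and the mean bound already proved controls it. The triangle inequality then yields \eqref{eq:error_sigma_explicit_bound}.

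Finally, the constants multiplying $\epsilon_{x,K}$ are finite and independent of $K$, and $\epsilon_{x,K}=O(K^{-p})$ by \eqref{eq:error_uniform_clipped_tail_rate}. The squared term is $O(K^{-2p})\subseteq O(K^{-p})$. Together these give \eqref{eq:error_gaussian_parameter_rates}. The only real obstacle is the recentring bookkeeping in the covariance step; every other step is a direct instance of the lemma.
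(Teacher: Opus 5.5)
Your proposal is correct and matches the paper's proof step for step. The paper uses the same sequence: the normaliser bound giving $1/\kappa_{x,K}\le 2/\kappa_x$, the perturbation lemma with $g=Z$, the recentring identity $\Sigma_{x,K}^\star=\mathbb E[A_x(Z)q_{x,K}]/\kappa_{x,K}-\delta_{x,K}\delta_{x,K}^\top$ followed by the lemma applied to $A_x$, and finally $\|\delta\delta^\top\|_{\mathrm F}=\|\delta\|_2^2$.
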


\begin{proof}
Let \(e_{x,K}=q_{x,K}-q_x\). The normalising constants satisfy
\[
|\kappa_{x,K}-\kappa_x|
=
|\mathbb E[e_{x,K}(U)]|
\leq
\epsilon_{x,K},
\]
which proves \eqref{eq:error_kappa_explicit_bound}. Hence
\(\kappa_{x,K}\geq\kappa_x/2\) for sufficiently large \(K\).

Apply Lemma~\ref{lemma:error_weighted_ratio_identity} with
\(g(U)=Z\), \(q=q_x\), and \(q_K=q_{x,K}\). Since the exact weighted mean is
\(\mu_x^\star\),
\begin{equation}
\mu_{x,K}^\star-\mu_x^\star
=
\frac{
\mathbb E\!\left[(Z-\mu_x^\star)e_{x,K}(U)\right]
}{\kappa_{x,K}}.
\label{eq:error_mean_exact_identity}
\end{equation}
Taking norms, using the uniform error bound, and replacing
\(1/\kappa_{x,K}\) by \(2/\kappa_x\) gives
\eqref{eq:error_mu_explicit_bound}.

For the covariance, set
\[
A_x(Z)
=
(Z-\mu_x^\star)(Z-\mu_x^\star)^\top,
\qquad
\delta_{x,K}
=
\mu_{x,K}^\star-\mu_x^\star.
\]
Because the \(q_{x,K}\)-weighted mean of
\(Z-\mu_x^\star\) is \(\delta_{x,K}\), expanding around
\(\mu_x^\star\) yields
\begin{equation}
\Sigma_{x,K}^\star
=
\frac{\mathbb E[A_x(Z)q_{x,K}(U)]}{\kappa_{x,K}}
-
\delta_{x,K}\delta_{x,K}^\top.
\label{eq:error_covariance_expansion_around_exact_mean}
\end{equation}
The exact \(q_x\)-weighted mean of \(A_x(Z)\) is
\(\Sigma_x^\star\). Applying
Lemma~\ref{lemma:error_weighted_ratio_identity} to \(A_x\) and using
\eqref{eq:error_covariance_expansion_around_exact_mean} gives
\begin{equation}
\Sigma_{x,K}^\star-\Sigma_x^\star
=
\frac{
\mathbb E\!\left[
\bigl(A_x(Z)-\Sigma_x^\star\bigr)e_{x,K}(U)
\right]
}{\kappa_{x,K}}
-
\delta_{x,K}\delta_{x,K}^\top.
\label{eq:error_covariance_exact_identity}
\end{equation}
Taking Frobenius norms and using
\(\|\delta\delta^\top\|_{\mathrm F}=\|\delta\|_2^2\) proves
\eqref{eq:error_sigma_explicit_bound}. The moment assumption makes the
constants finite, and
\(\epsilon_{x,K}=O(K^{-p})\) gives
\eqref{eq:error_gaussian_parameter_rates}.
\end{proof}

The explicit bounds show what is hidden by the order notation. The algebraic
order is inherited without loss, but the constants contain
\(1/\kappa_x\). Therefore, the theorem is a fixed-threshold result; it is not
a uniform statement as \(x\) moves indefinitely deeper into the tail. For a
rarer event, a given absolute error in \(q_{x,K}\) can produce a larger error
in the normalised CE moments.

The same perturbation identity controls the scale component of the
Student~\(t\)-copula proposal. Define
\begin{equation}
m_{\log,x,K}
=
\frac{\mathbb E[(\log W)q_{x,K}(Z,W)]}{\kappa_{x,K}},
\qquad
m_{-1,x,K}
=
\frac{\mathbb E[W^{-1}q_{x,K}(Z,W)]}{\kappa_{x,K}}.
\label{eq:error_finite_k_inverse_gamma_moments}
\end{equation}

\begin{coro}[Student \(t\)-copula scale parameters]
\label{coro:error_inverse_gamma_parameter_rate}
Assume the conditions of
Theorem~\ref{thm:inverse_gamma_ce_projection},
\[
\mathbb E[|\log W|+W^{-1}]<\infty,
\]
and \eqref{eq:error_uniform_clipped_tail_rate}. Let
\((\alpha_{x,K}^\star,\beta_{x,K}^\star)\) be the inverse-Gamma CE
parameters obtained from the finite-COS moments in
\eqref{eq:error_finite_k_inverse_gamma_moments}. Then
\begin{equation}
|m_{\log,x,K}-m_{\log,x}|
+
|m_{-1,x,K}-m_{-1,x}|
=O(K^{-p}),
\label{eq:error_inverse_gamma_moment_rate}
\end{equation}
and
\begin{equation}
|\alpha_{x,K}^\star-\alpha_x^\star|
+
|\beta_{x,K}^\star-\beta_x^\star|
=O(K^{-p}).
\label{eq:error_inverse_gamma_parameter_rate}
\end{equation}
\end{coro}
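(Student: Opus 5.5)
The proof has two stages: first the moment rates, then the parameter rates via the implicit function theorem applied to the shape equation of Theorem~\ref{thm:inverse_gamma_ce_projection}.

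\textbf{Moment rates.} I would apply Lemma~\ref{lemma:error_weighted_ratio_identity} twice, once with \(g(U)=\log W\) and once with \(g(U)=W^{-1}\), taking \(q=q_x\) and \(q_K=q_{x,K}\). As in the proof of Theorem~\ref{thm:error_gaussian_ce_parameter_rate}, \eqref{eq:error_kappa_explicit_bound} gives \(\kappa_{x,K}\geq\kappa_x/2\) once \(\epsilon_{x,K}\leq\kappa_x/2\). The bound \eqref{eq:error_weighted_ratio_bound} then yields
\[
|m_{\log,x,K}-m_{\log,x}|
\leq\frac{2\epsilon_{x,K}}{\kappa_x}\,\mathbb E\bigl[|\log W-m_{\log,x}|\bigr],
\]
and the analogous bound holds for \(W^{-1}\). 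Both expectations are finite under the unconditional integrability assumption \(\mathbb E[|\log W|+W^{-1}]<\infty\). Note that the lemma needs integrability under \(f_U\), not under the tail law, which is why this assumption is stated separately. Theorem~\ref{thm:error_uniform_conditional_cos_rate} then gives \eqref{eq:error_inverse_gamma_moment_rate}.

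\textbf{Parameter rates.} Write \(c_x=m_{\log,x}+\log m_{-1,x}\) and \(c_{x,K}=m_{\log,x,K}+\log m_{-1,x,K}\). The proof of Theorem~\ref{thm:inverse_gamma_ce_projection} shows that \(m_{-1,x}>0\) and \(c_x>0\) (strict Jensen). Since the moment errors tend to zero, for \(K\) large we have \(m_{-1,x,K}\geq m_{-1,x}/2\), so \(\log\) is Lipschitz on the relevant range and \(|c_{x,K}-c_x|=O(K^{-p})\). In particular \(c_{x,K}\geq c_x/2>0\). By the monotonicity argument of that theorem, \(h(\alpha)=\log\alpha-\psi(\alpha)\) is a strictly decreasing bijection from \((0,\infty)\) onto \((0,\infty)\). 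Hence \(\alpha^\star_{x,K}=h^{-1}(c_{x,K})\) exists uniquely and \(\alpha_x^\star=h^{-1}(c_x)\).

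\textbf{Main obstacle.} The main obstacle is the Lipschitz control of \(h^{-1}\), which requires \(h'\) to stay away from zero near the root. I would argue as follows. The values \(c_{x,K}\) lie in the compact interval \(I=[c_x/2,\,2c_x]\) for large \(K\). Since \(h\) is a continuous decreasing bijection, \(h^{-1}(I)\) is a compact interval \([\alpha_-,\alpha_+]\subset(0,\infty)\). On this interval
\[
h'(\alpha)=1/\alpha-\psi_1(\alpha)<0
\]
by the trigamma inequality, and \(h'\) is continuous, so \(|h'|\geq\delta>0\) there. The mean value theorem then gives
\[
|\alpha^\star_{x,K}-\alpha^\star_x|\leq\delta^{-1}|c_{x,K}-c_x|=O(K^{-p}).
\]

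\textbf{Scale parameter.} Finally, \(\beta=\alpha/m_{-1}\) is a smooth function of \((\alpha,m_{-1})\) on the set where \(\alpha\) stays bounded and \(m_{-1}\) stays bounded away from zero. Writing
\[
\beta^\star_{x,K}-\beta^\star_x=\frac{\alpha^\star_{x,K}-\alpha^\star_x}{m_{-1,x,K}}+\alpha^\star_x\Bigl(\frac1{m_{-1,x,K}}-\frac1{m_{-1,x}}\Bigr)
\]
and bounding each term gives \(|\beta^\star_{x,K}-\beta^\star_x|=O(K^{-p})\), which completes \eqref{eq:error_inverse_gamma_parameter_rate}.
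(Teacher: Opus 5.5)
Your proof is correct and follows the paper's argument. The moment rates come from the perturbation lemma applied to \(\log W\) and \(W^{-1}\), the nonvanishing derivative \(\psi_1(\alpha)-1/\alpha\) of the shape equation transfers the rate to \(\alpha\), and \(\beta=\alpha/m_{-1}\) does the rest; the only difference is that you replace the paper's local implicit-function-theorem step by the global monotone inverse \(h^{-1}\) with a mean-value bound on a compact interval, which has the small added benefit of showing directly that \(\alpha_{x,K}^\star\) exists and is unique for large \(K\).
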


\begin{proof}
Applying Lemma~\ref{lemma:error_weighted_ratio_identity} to
\(g(W)=\log W\) and \(g(W)=W^{-1}\) gives
\eqref{eq:error_inverse_gamma_moment_rate}. The inverse-Gamma shape parameter
is defined by
\begin{equation}
H(\alpha,m_{\log},m_{-1})
:=
\psi(\alpha)-\log\alpha+m_{\log}+\log m_{-1}=0.
\label{eq:error_inverse_gamma_implicit_equation}
\end{equation}
At the exact solution,
\begin{equation}
\frac{\partial H}{\partial\alpha}
=
\psi_1(\alpha_x^\star)-\frac1{\alpha_x^\star}>0,
\label{eq:error_inverse_gamma_nonzero_derivative}
\end{equation}
where \(\psi_1\) is the trigamma function. The implicit function theorem
therefore makes \(\alpha\) locally continuously differentiable, and hence
locally Lipschitz, in \((m_{\log},m_{-1})\). Finally,
\(\beta=\alpha/m_{-1}\) is smooth because \(m_{-1,x}>0\). This proves
\eqref{eq:error_inverse_gamma_parameter_rate}.
\end{proof}

\subsection{Finite-COS covariance and the combined calibration error}
\label{subsec:error_combined}

The exact covariance order in
Theorem~\ref{thm:error_exact_covariance_reduction} does not hold literally
for a finite COS expansion, because \(q_{x,K}(U)\) is not exactly
\(\mathbb E[I_x\mid U]\). Nevertheless, the finite-\(K\) raw-moment
covariance approaches the Rao--Blackwell covariance at the same rate as the
conditional-probability approximation.

Define
\begin{equation}
G_{x,K}^{\mathrm{ISCOS}}
:=
R_h(U)q_{x,K}(U)T(U),
\qquad
V_{x,K}
:=
\operatorname{Cov}_h(G_{x,K}^{\mathrm{ISCOS}}),
\qquad
V_x^{\mathrm{RB}}
:=
\operatorname{Cov}_h(G_x^{\mathrm{RB}}).
\label{eq:error_finite_k_raw_covariances}
\end{equation}

\begin{prop}[Finite-\(K\) perturbation of the Rao--Blackwell covariance]
\label{prop:error_finite_k_covariance_perturbation}
Assume \eqref{eq:error_uniform_clipped_tail_rate} and
\[
\mathbb E_h\!\left[R_h(U)^2\|T(U)\|_2^2\right]<\infty.
\]
Then
\begin{equation}
\|V_{x,K}-V_x^{\mathrm{RB}}\|_2
=O(K^{-p}).
\label{eq:error_finite_k_covariance_rate}
\end{equation}
If the parameter map \(\mathcal G\) is twice continuously differentiable in
a neighbourhood of the target raw moments, the corresponding finite-\(K\)
asymptotic covariance of the fitted parameters is also an
\(O(K^{-p})\) perturbation of the exact Rao--Blackwell asymptotic covariance.
\end{prop}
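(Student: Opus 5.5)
We write \(G_K=G_{x,K}^{\mathrm{ISCOS}}\) and \(G=G_x^{\mathrm{RB}}\). The difference is \(G_K-G=R_h(U)e_{x,K}(U)T(U)\), with \(e_{x,K}=q_{x,K}-q_x\). By \eqref{eq:error_uniform_clipped_tail_rate}, \(|e_{x,K}|\le\epsilon_{x,K}\) uniformly in \(u\). Also \(|q_x|\le1\) and \(|q_{x,K}|\le1\), because of clipping. Hence both \(G\) and \(G_K\) have finite second moments under the stated assumption \(\mathbb E_h[R_h^2\|T\|_2^2]<\infty\). The plan is to expand the covariance bilinearly around \(G\):
\[
V_{x,K}-V_x^{\mathrm{RB}}
=\operatorname{Cov}_h(G,\Delta_K)+\operatorname{Cov}_h(\Delta_K,G)+\operatorname{Cov}_h(\Delta_K),
\qquad \Delta_K:=G_K-G .
\]
For any square-integrable random vectors \(X\) and \(Y\), we use \(\|\operatorname{Cov}(X,Y)\|_2\le\|\mathbb E[XY^\top]\|_2+\|\mathbb E X\|_2\|\mathbb E Y\|_2\le 2\,(\mathbb E\|X\|_2^2)^{1/2}(\mathbb E\|Y\|_2^2)^{1/2}\), which follows from Cauchy--Schwarz and Jensen.

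The key estimate is \(\mathbb E_h\|\Delta_K\|_2^2\le\epsilon_{x,K}^2\,\mathbb E_h[R_h^2\|T\|_2^2]\). Combined with \(\mathbb E_h\|G\|_2^2\le\mathbb E_h[R_h^2\|T\|_2^2]=:B\), this gives
\[
\|V_{x,K}-V_x^{\mathrm{RB}}\|_2\le 4B\,\epsilon_{x,K}+2B\,\epsilon_{x,K}^2=O(K^{-p}),
\]
which is \eqref{eq:error_finite_k_covariance_rate}. The cross terms are linear in \(\epsilon_{x,K}\), so they set the order. The quadratic term only affects the constant, in the same way that \(\delta_{x,K}\delta_{x,K}^\top\) did in Theorem~\ref{thm:error_gaussian_ce_parameter_rate}.

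For the parameter statement, we compare \(\mathcal V_{x,K}=D\mathcal G(m_{x,K})V_{x,K}D\mathcal G(m_{x,K})^\top\) with \(\mathcal V_{\mathrm{RB}}=D\mathcal G(m_x)V_x^{\mathrm{RB}}D\mathcal G(m_x)^\top\). Here \(m_{x,K}=\mathbb E_h[G_K]\) is the finite-\(K\) raw-moment target, and the CLT/delta argument of Corollary~\ref{coro:error_parameter_covariance_reduction} applies verbatim to the averages of \(G_K\). First, \(\|m_{x,K}-m_x\|_2\le\mathbb E_h\|\Delta_K\|_2\le\epsilon_{x,K}B^{1/2}=O(K^{-p})\). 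For large \(K\) this places \(m_{x,K}\) in the neighbourhood where \(\mathcal G\) is \(C^2\); in the Gaussian case this uses \(\kappa_{x,K}\ge\kappa_x/2\), as in Theorem~\ref{thm:error_gaussian_ce_parameter_rate}. Since \(D^2\mathcal G\) is bounded on a compact ball around \(m_x\), the mean value theorem gives \(\|D\mathcal G(m_{x,K})-D\mathcal G(m_x)\|=O(K^{-p})\). The telescoping identity
\[
ABA^\top-A_0B_0A_0^\top=(A-A_0)BA^\top+A_0(B-B_0)A^\top+A_0B_0(A-A_0)^\top,
\]
together with the uniform boundedness of \(\|D\mathcal G(m_{x,K})\|\) and \(\|V_{x,K}\|\) for large \(K\), then yields \(\|\mathcal V_{x,K}-\mathcal V_{\mathrm{RB}}\|=O(K^{-p})\).

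The main obstacle is bookkeeping rather than depth. The covariance bound must use the \emph{uniform} error \(\epsilon_{x,K}\) and not merely pointwise convergence, so that it can be pulled out of the expectation of \(R_h^2\|T\|^2\). One must also check that \(m_{x,K}\) stays in the \(C^2\) region of \(\mathcal G\). For the Student \(t\) case, that region comes from the nonvanishing derivative \eqref{eq:error_inverse_gamma_nonzero_derivative} and \(m_{-1,x}>0\), exactly as in Corollary~\ref{coro:error_inverse_gamma_parameter_rate}.
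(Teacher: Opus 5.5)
Your proof is correct and follows the paper's argument. Both use the difference $R_h(q_{x,K}-q_x)T$, bound it in $L^2(h)$ through the uniform COS error, expand the covariance bilinearly with Cauchy--Schwarz on the cross terms, and then combine the raw-moment perturbation with the Jacobian of $\mathcal G$. You go further by making the constants explicit ($4B\epsilon_{x,K}+2B\epsilon_{x,K}^2$) and by using the $C^2$ hypothesis to get a Lipschitz Jacobian and the telescoping identity; this is what actually yields the $O(K^{-p})$ rate for the parameter covariance, since mere continuity of the Jacobian, as the paper phrases it, would give only $o(1)$.
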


\begin{proof}
Write
\begin{equation}
D_{x,K}
:=
G_{x,K}^{\mathrm{ISCOS}}-G_x^{\mathrm{RB}}
=
R_h(U)\bigl(q_{x,K}(U)-q_x(U)\bigr)T(U).
\label{eq:error_finite_k_raw_difference}
\end{equation}
The uniform COS bound implies
\begin{equation}
\|D_{x,K}\|_{L^2(h)}
\leq
\epsilon_{x,K}
\left(
\mathbb E_h[R_h(U)^2\|T(U)\|_2^2]
\right)^{1/2}
=O(K^{-p}).
\label{eq:error_finite_k_l2_difference}
\end{equation}
Now expand
\[
\operatorname{Cov}(G_x^{\mathrm{RB}}+D_{x,K})
-
\operatorname{Cov}(G_x^{\mathrm{RB}})
=
\operatorname{Cov}(G_x^{\mathrm{RB}},D_{x,K})
+
\operatorname{Cov}(D_{x,K},G_x^{\mathrm{RB}})
+
\operatorname{Cov}(D_{x,K}).
\]
Cauchy--Schwarz bounds the two cross-covariance terms by a constant times
\(\|D_{x,K}\|_{L^2(h)}\), while the final term is of order
\(\|D_{x,K}\|_{L^2(h)}^2\). This proves
\eqref{eq:error_finite_k_covariance_rate}. The parameter statement follows
by combining this bound with the \(O(K^{-p})\) perturbation of the population
raw moments and the continuity of the Jacobian of \(\mathcal G\).
\end{proof}

Proposition~\ref{prop:error_finite_k_covariance_perturbation} gives a norm
convergence result, not a finite-\(K\) Loewner order. At a particular
resolution, the deterministic approximation error can move the covariance in
either direction. What is guaranteed is that, as \(K\) increases, the
finite-COS covariance approaches the lower Rao--Blackwell covariance.

We now add the pilot sampling error. For clarity, define the finite-\(K\) raw
moment average
\begin{equation}
\widehat m_{x,K}
=
\frac1{M_0}
\sum_{m=1}^{M_0}
R_h(U^{(m)})q_{x,K}(U^{(m)})T(U^{(m)}),
\label{eq:error_finite_k_raw_moment_average}
\end{equation}
with population mean \(m_{x,K}\). When the ISCOS weights in
\eqref{eq:ceis_iscos_weight_summary} are used, denote the estimators in
\eqref{eq:weighted_gaussian_mean}--
\eqref{eq:weighted_gaussian_covariance} by
\[
\widehat{\mu}_{x,K}^{\mathrm{ISCOS}}
:=
\widehat{\mu}_{\omega},
\qquad
\widehat{\Sigma}_{x,K,\mathrm{raw}}^{\mathrm{ISCOS}}
:=
\widehat{\Sigma}_{\omega,\mathrm{raw}}.
\]
These estimators are components of
\(\mathcal G(\widehat m_{x,K})\), while
\((\mu_{x,K}^{\star},\Sigma_{x,K}^{\star})\) are components of
\(\mathcal G(m_{x,K})\).

\begin{thm}[Combined ISCOS calibration error]
\label{thm:error_combined_iscos_rate}
Suppose that \(\kappa_x>0\),
\eqref{eq:error_uniform_clipped_tail_rate} holds, and
\begin{equation}
\mathbb E_h\!\left[
R_h(U)^2\bigl(1+\|Z\|_2^4\bigr)
\right]<\infty.
\label{eq:error_pilot_moment_condition}
\end{equation}
Then, for the fixed threshold \(x\),
\begin{equation}
\left\|
\widehat\mu_{x,K}^{\mathrm{ISCOS}}-\mu_x^\star
\right\|_2
=
O_{\mathbb P}(M_0^{-1/2})+O(K^{-p}),
\label{eq:error_combined_mu_rate}
\end{equation}
and
\begin{equation}
\left\|
\widehat\Sigma_{x,K,\mathrm{raw}}^{\mathrm{ISCOS}}
-\Sigma_x^\star
\right\|_{\mathrm F}
=
O_{\mathbb P}(M_0^{-1/2})+O(K^{-p}).
\label{eq:error_combined_sigma_rate}
\end{equation}
The same decomposition holds for the Student~\(t\)-copula Gaussian and
inverse-Gamma parameters under the additional moment conditions used in
Corollary~\ref{coro:error_inverse_gamma_parameter_rate}.
\end{thm}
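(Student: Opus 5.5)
The plan follows the decomposition \eqref{eq:error_basic_decomposition}. Writing \(\vartheta=(\mu,\Sigma)\), we have \(\widehat\vartheta_{x,K,M_0}=\mathcal G(\widehat m_{x,K})\), \(\vartheta_{x,K}^\star=\mathcal G(m_{x,K})\) and \(\vartheta_x^\star=\mathcal G(m_x)\). The deterministic term \(\mathcal G(m_{x,K})-\mathcal G(m_x)\) is already \(O(K^{-p})\) by Theorem~\ref{thm:error_gaussian_ce_parameter_rate}. The condition \eqref{eq:error_pilot_moment_condition} gives \(\mathbb E_f\|Z\|_2^2<\infty\) whenever it is needed, e.g.\ for \(h=f_U\); in general one can bound \(\mathbb E_f\|Z\|_2^2=\mathbb E_h[R_h\|Z\|_2^2]\) by Cauchy--Schwarz against \eqref{eq:error_pilot_moment_condition}. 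So the task reduces to showing \(\mathcal G(\widehat m_{x,K})-\mathcal G(m_{x,K})=O_{\mathbb P}(M_0^{-1/2})\), and to doing so with constants that do not deteriorate with \(K\).

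For the statistical term I would first control the raw moments. The weight satisfies \(0\le q_{x,K}\le1\) after clipping, so each summand in \eqref{eq:error_finite_k_raw_moment_average} has second moment at most \(\mathbb E_h[R_h^2\|T_{\mathrm G}\|_2^2]\). Since \(\|T_{\mathrm G}\|_2^2\lesssim 1+\|Z\|_2^2+\|Z\|_2^4\), this is finite by \eqref{eq:error_pilot_moment_condition}, and the bound is uniform in \(K\). Chebyshev's inequality, applied coordinatewise, then gives \(\|\widehat m_{x,K}-m_{x,K}\|_2=O_{\mathbb P}(M_0^{-1/2})\) uniformly in \(K\). This uniformity is what allows the two error terms to be simply added, including in regimes where \(K\) grows with \(M_0\).

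Next I would pass through \(\mathcal G\). By \eqref{eq:error_kappa_explicit_bound}, once \(\epsilon_{x,K}\le\kappa_x/2\) we have \(\kappa_{x,K}\ge\kappa_x/2\). Moreover \(m_{x,K}\) stays in a bounded set, because its entries are bounded by \(\mathbb E_f[1+\|Z\|_2+\|Z\|_2^2]\). On the compact set \(\{\kappa\ge\kappa_x/4,\ \|m\|\le C\}\), the map \(\mathcal G\) of \eqref{eq:error_gaussian_parameter_map} is continuously differentiable, and hence Lipschitz with a constant \(L_x\) independent of \(K\); the differentials \eqref{eq:error_gaussian_mean_differential}--\eqref{eq:error_gaussian_covariance_differential} give this explicitly. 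On the event \(\Omega_{M_0}=\{\|\widehat m_{x,K}-m_{x,K}\|\le\kappa_x/4\}\) we have \(\widehat\kappa\ge\kappa_x/4\), so \(S_\omega>0\) and the estimators are well defined there. The probability of \(\Omega_{M_0}\) tends to \(1\) uniformly in \(K\). On this event, \(\|\mathcal G(\widehat m_{x,K})-\mathcal G(m_{x,K})\|\le L_x\|\widehat m_{x,K}-m_{x,K}\|=O_{\mathbb P}(M_0^{-1/2})\). It remains to note that the weighted mean and covariance \eqref{eq:weighted_gaussian_mean}--\eqref{eq:weighted_gaussian_covariance} coincide exactly with \(\mathcal G(\widehat m_{x,K})\). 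The \(1/M_0\) factors cancel in the ratios, and the covariance equals \(\widehat C/\widehat\kappa-\widehat\mu\widehat\mu^\top\). This part is routine algebra. Combining it with the triangle inequality proves \eqref{eq:error_combined_mu_rate}--\eqref{eq:error_combined_sigma_rate}.

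For the Student~\(t\) case I would append \(\log W\) and \(W^{-1}\) to the statistic vector, as in \eqref{eq:error_t_statistic_vector}. The raw-moment CLT-type bound then needs \(\mathbb E_h[R_h^2((\log W)^2+W^{-2})]<\infty\), which I would take as the additional moment condition. The map from the raw moments to \((m_{\log},m_{-1})\) is Lipschitz near the target because \(\kappa\) is bounded below. The map from \((m_{\log},m_{-1})\) to \((\alpha,\beta)\) is locally Lipschitz by the implicit-function argument in Corollary~\ref{coro:error_inverse_gamma_parameter_rate}.

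The main obstacle is that this last Lipschitz property is only local. One must ensure that both \(m_{x,K}\) and \(\widehat m_{x,K}\) eventually lie in a fixed neighbourhood of \(m_x\) on which \(\partial H/\partial\alpha\) stays bounded away from zero. I would handle it as follows. The positivity \(c>0\) from Jensen's inequality and the strict monotonicity of \(h(\alpha)=\log\alpha-\psi(\alpha)\) from Theorem~\ref{thm:inverse_gamma_ce_projection} make \(\alpha=h^{-1}(c)\) a smooth function on \(c>0\). Its derivative is bounded on any compact subinterval of \((0,\infty)\). It therefore suffices to confine \(c\) to such an interval around \(c_x\), which holds with probability tending to one for large \(K\) and \(M_0\).
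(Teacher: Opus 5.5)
Your proof is correct. It shares the paper's skeleton: the split \eqref{eq:error_basic_decomposition}, the deterministic term from Theorem~\ref{thm:error_gaussian_ce_parameter_rate}, and the bound $\kappa_{x,K}\ge\kappa_x/2$. It takes a different route for the statistical term.

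The paper invokes a multivariate CLT for $\widehat m_{x,K}$ and then the delta method at the $K$-dependent point $m_{x,K}$. This gives $O_{\mathbb P}(M_0^{-1/2})$ for each fixed $K$. As a by-product it gives an asymptotic normal law that connects to Corollary~\ref{coro:error_parameter_covariance_reduction} and Proposition~\ref{prop:error_finite_k_covariance_perturbation}. It is silent, however, on whether the implied constant depends on $K$.

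Your argument combines two pieces. The first is a Chebyshev bound that uses only $0\le q_{x,K}\le 1$ and \eqref{eq:error_pilot_moment_condition}. The second is a Lipschitz constant for $\mathcal G$ on the convex compact set $\{\kappa\ge\kappa_x/4,\ \|m\|\le C\}$. Together they are non-asymptotic and uniform in $K$, and they handle the event $S_\omega>0$ explicitly. Uniformity is what actually licenses adding the two orders when $K$ grows with $M_0$, as in the balancing choice \eqref{eq:error_balanced_k_choice}. You give up only the distributional limit, which this theorem does not need.

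In the Student~$t$ part you are also more careful than the paper. The corollary's first-moment conditions alone do not yield a root-$M_0$ rate. Your added assumption $\mathbb E_h[R_h^2((\log W)^2+W^{-2})]<\infty$, which is automatic when $h=f_U$, is the condition the paper's ``in the same way'' implicitly requires. Confining $c$ to a compact subinterval of $(0,\infty)$ makes the local inversion rigorous. You should add one detail: also keep $\widehat m_{-1}$ and $m_{-1,x,K}$ bounded away from zero, so that $\log m_{-1}$ and $\beta=\alpha/m_{-1}$ are Lipschitz there. This follows from the same convergence.
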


\begin{proof}
Because the clipped COS weights satisfy \(0\leq q_{x,K}\leq1\), the moment
condition \eqref{eq:error_pilot_moment_condition} gives a multivariate central
limit theorem for the sample normaliser, first moment, and second moment. By
Theorem~\ref{thm:error_uniform_conditional_cos_rate},
\(\kappa_{x,K}\geq\kappa_x/2\) for all sufficiently large \(K\), so the
ratio map is smooth in a neighbourhood of the finite-\(K\) population
moments. The delta method therefore gives
\begin{equation}
\widehat\mu_{x,K}^{\mathrm{ISCOS}}-\mu_{x,K}^\star
=O_{\mathbb P}(M_0^{-1/2}),
\qquad
\widehat\Sigma_{x,K,\mathrm{raw}}^{\mathrm{ISCOS}}
-\Sigma_{x,K}^\star
=O_{\mathbb P}(M_0^{-1/2}).
\label{eq:error_finite_k_sampling_rates}
\end{equation}
Using the decomposition
\begin{align*}
\widehat\mu_{x,K}^{\mathrm{ISCOS}}-\mu_x^\star
&=
\bigl(
\widehat\mu_{x,K}^{\mathrm{ISCOS}}-\mu_{x,K}^\star
\bigr)
+
\bigl(
\mu_{x,K}^\star-\mu_x^\star
\bigr),\\
\widehat\Sigma_{x,K,\mathrm{raw}}^{\mathrm{ISCOS}}-\Sigma_x^\star
&=
\bigl(
\widehat\Sigma_{x,K,\mathrm{raw}}^{\mathrm{ISCOS}}
-\Sigma_{x,K}^\star
\bigr)
+
\bigl(
\Sigma_{x,K}^\star-\Sigma_x^\star
\bigr),
\end{align*}
and applying
Theorem~\ref{thm:error_gaussian_ce_parameter_rate} proves
\eqref{eq:error_combined_mu_rate}--\eqref{eq:error_combined_sigma_rate}.
The Student~\(t\)-copula conclusion follows in the same way from the smooth
parameter maps and Corollary~\ref{coro:error_inverse_gamma_parameter_rate}.
\end{proof}

Theorem~\ref{thm:error_combined_iscos_rate} separates the two tuning
parameters clearly. For a fixed \(K\), increasing \(M_0\) removes the pilot
sampling error but leaves convergence to the finite-COS target rather than to
the exact CE target. To make the COS error negligible on the root-\(M_0\)
scale, one needs
\begin{equation}
K^{-p}=o(M_0^{-1/2}).
\label{eq:error_root_m_negligible_cos_condition}
\end{equation}
Balancing the two displayed orders gives the heuristic choice
\begin{equation}
K\asymp M_0^{1/(2p)}.
\label{eq:error_balanced_k_choice}
\end{equation}
This balances deterministic and statistical error only. It is not necessarily
work-optimal, because the direct calibration cost grows approximately linearly
in both \(M_0\) and \(K\).

Under the usual additional uniform-integrability conditions for the ratio
estimators, the componentwise mean-squared error corresponding to
\eqref{eq:error_combined_mu_rate}--\eqref{eq:error_combined_sigma_rate} has
the form
\begin{equation}
\operatorname{MSE}
=
O(M_0^{-1})+O(K^{-2p}).
\label{eq:error_componentwise_mse_rate}
\end{equation}
The qualification is relevant because a ratio estimator can have unusually
large finite-sample moments if its random denominator is allowed to approach
zero. In implementation, the total calibration weight and the effective
sample size should therefore be monitored alongside the asymptotic rate.

Covariance regularisation contributes an additional deterministic term. For
the ridge estimator in \eqref{eq:covariance_ridge}, and fixed dimension \(d\),
\begin{equation}
\left\|
\widehat\Sigma_{x,K,\rho}^{\mathrm{ISCOS}}-\Sigma_x^\star
\right\|_{\mathrm F}
=
O_{\mathbb P}(M_0^{-1/2})+O(K^{-p})+O(\rho).
\label{eq:error_regularised_ridge_rate}
\end{equation}
For the shrinkage estimator in \eqref{eq:covariance_shrinkage}, the analogous
additional term is \(O(\lambda)\). Consistency for the unregularised
population covariance requires \(\rho\to0\) and \(\lambda\to0\). If the
regularisation parameters are held fixed, the estimator instead converges to
the corresponding regularised target, which may nevertheless be preferable
for finite-sample importance-sampling stability.

\subsection{Threshold estimation and scope of the result}
\label{subsec:error_scope_and_threshold}

The preceding results condition on a fixed threshold \(x\). In the numerical
implementation, \(x\) is obtained from a preliminary empirical VaR. For a
discrete loss distribution, this additional step is qualitatively different
from the smooth moment perturbations considered above.

Let
\(x_\alpha=\inf\{y:F_L(y)\geq\alpha\}\). If the confidence level lies
strictly inside the jump at \(x_\alpha\), namely
\begin{equation}
F_L(x_\alpha^-)<\alpha<F_L(x_\alpha),
\label{eq:error_quantile_gap_condition}
\end{equation}
then the empirical quantile is locally stable. More precisely, for an
independent preliminary sample of size \(M_{\mathrm{pre}}\), define
\[
\delta_-
=
\alpha-F_L(x_\alpha^-)>0,
\qquad
\delta_+
=
F_L(x_\alpha)-\alpha>0.
\]
Hoeffding's inequality gives the elementary bound
\begin{equation}
\mathbb P(\widehat x_\alpha\neq x_\alpha)
\leq
\exp(-2M_{\mathrm{pre}}\delta_-^2)
+
\exp(-2M_{\mathrm{pre}}\delta_+^2).
\label{eq:error_empirical_quantile_stability_bound}
\end{equation}
Indeed, selecting a loss below \(x_\alpha\) requires the empirical CDF just
below \(x_\alpha\) to cross \(\alpha\), while selecting a larger loss requires
the empirical CDF at \(x_\alpha\) to fall below \(\alpha\). Under
\eqref{eq:error_quantile_gap_condition}, both events are large deviations.

If \(\alpha\) coincides with one side of a CDF jump, one of these gaps
vanishes and the empirical quantile can move to an adjacent atom under a
small CDF perturbation. In that case, the threshold error cannot in general be
absorbed into the smooth
\(O_{\mathbb P}(M_0^{-1/2})+O(K^{-p})\) expansion. It should be reported
separately, controlled by a larger independent preliminary sample, or removed
from the calibration comparison by fixing a common threshold across methods.

Finally, the covariance comparison in this section concerns proposal
calibration, not every downstream importance-sampling estimator. A more
accurate or lower-variance estimate of the CE moments need not produce a
smaller variance for every final VaR, expected-shortfall, or obligor-level
ratio estimator. The production variance also depends on the nonlinear
likelihood ratio, the conditional Bernoulli twisting, the selected
regularisation, and the particular final integrand. Moreover, the CE proposal
is the Kullback--Leibler projection of the tail-conditioned factor law within
the chosen family; it is not the unrestricted variance-minimising proposal for
every possible downstream quantity.

The main conclusions can therefore be summarised as
\begin{equation}
\boxed{
\operatorname{Cov}_h(G_x^{\mathrm{RB}})
\preceq
\operatorname{Cov}_h(G_x^{\mathrm{CEIS}})
}
\label{eq:error_summary_covariance_order}
\end{equation}
for exact conditional weighting,
\begin{equation}
\boxed{
\epsilon_{x,K}=O(K^{-p})
\quad\Longrightarrow\quad
\|\mu_{x,K}^\star-\mu_x^\star\|_2
+
\|\Sigma_{x,K}^\star-\Sigma_x^\star\|_{\mathrm F}
=O(K^{-p})
}
\label{eq:error_summary_parameter_rate}
\end{equation}
for the deterministic COS approximation, and
\begin{equation}
\boxed{
\widehat\vartheta_{x,K,M_0}-\vartheta_x^\star
=
O_{\mathbb P}(M_0^{-1/2})+O(K^{-p})
}
\label{eq:error_summary_combined_rate}
\end{equation}
for the combined calibration error, componentwise for the Gaussian proposal
parameters and, under the stated moment conditions, for the Student~\(t\)
scale parameters.

%

\section{Numerical Experiments: Gaussian Copula}
\label{sec:numerical_gaussian}

Section~\ref{sec:error_analysis} separates the sampling error of proposal
calibration from the deterministic error of the finite COS approximation. This
section studies both effects on an eleven-factor Gaussian-copula benchmark. The
first experiment fixes the pilot factors and compares filtered COS weights with
exact conditional tail probabilities obtained by block convolution. The second
uses a common pilot, common production budgets, and common random-number streams
to compare three practical factor proposals: the paper-aligned Glasserman mean
shift \(\mathcal N(\mu_x^{\mathrm{GL}},I_{11})\), indicator-weighted CEIS, and
filtered ISCOS with exactly 32 modes. The Hessian covariance extension of the
Glasserman construction is not used in this comparison.

We retain the historical terminology used throughout the manuscript:
\begin{equation}
\mathrm{CVaR}_{k,\alpha}
=
\mathbb E[l_kY_k\mid L=\mathrm{VaR}_\alpha],
\qquad
\mathrm{CES}_{k,\alpha}
=
\mathbb E[l_kY_k\mid L\geq\mathrm{VaR}_\alpha].
\label{eq:num_gaussian_contribution_definitions}
\end{equation}
For a discrete loss, the threshold-tail mean
\(\mathbb E[L\mid L\geq\mathrm{VaR}_\alpha]\) need not equal the
quantile-integral definition of expected shortfall because the VaR atom may
require partial inclusion.

All intervals reported below are conditional, nominal, marginal pointwise
95\% intervals from the production ratio estimators. They do not propagate the
uncertainty of the preliminary VaR, proposal calibration, or the finite COS
approximation. They are not intervals for pairwise method differences, and
interval overlap is not used as a significance test.

\subsection{Benchmark portfolio and matched experimental design}
\label{subsec:gaussian_benchmark_setup}

We use Example~4 of Glasserman~\cite{Glasserman2005}. The portfolio contains
100 obligors divided into ten homogeneous blocks of ten. Every obligor has
unconditional default probability \(0.01\). The first of eleven independent
standard Gaussian factors is market-wide, while factor \(g+1\) is specific to
block \(g\). The loss-at-default in block \(g\) is \(c_g\), with
\begin{equation}
(c_1,\ldots,c_{10})=(1,1,4,4,9,9,16,16,25,25).
\label{eq:num_block_exposures}
\end{equation}
Thus each exposure level \(1,4,9,16,25\) occurs for twenty obligors.

The implementation uses the convention that positive factor realisations are
adverse. For an obligor in block \(g\),
\begin{equation}
p_g(z)
=
\Phi\!\left(
\frac{\Phi^{-1}(0.01)+0.3z_1+0.8z_{g+1}}{\sqrt{0.27}}
\right).
\label{eq:num_example4_conditional_pd}
\end{equation}
This is equivalent to the lower-tail convention of
Section~\ref{sec:credit_portfolio_models} after reversing the signs of the
non-zero loadings. Conditional on \(Z=z\), the block default counts are
independent and
\begin{equation}
D_g\mid Z=z\sim\operatorname{Binomial}\bigl(10,p_g(z)\bigr),
\qquad
L\mid Z=z=\sum_{g=1}^{10}c_gD_g.
\label{eq:num_block_model}
\end{equation}
The loss lies on the integer lattice and satisfies \(0\leq L\leq1100\).

A preliminary Monte Carlo sample gives the attainable \(99.9\%\) threshold
\(x=250\), consistent with the benchmark in~\cite{Glasserman2005}. The main
settings are collected in Table~\ref{tab:num_reproducibility_parameters}.
The mode count includes the zero mode throughout.

\begin{table}[htbp]
\centering
\caption{Experimental design for the Gaussian-copula experiments.}
\label{tab:num_reproducibility_parameters}
\small
\begin{tabular}{@{}p{0.31\textwidth}p{0.63\textwidth}@{}}
\toprule
Quantity & Specification \\
\midrule
Confidence level and threshold
& \(\alpha=0.999\), \(x=\widehat{\mathrm{VaR}}_\alpha=250\) \\
Shared pilot sample
& \(M_0=\num{250000}\), seed 42 \\
Production budget per method
& \(M=\num{250000}\) for \(L=250\) and \(M=\num{250000}\) for \(L\geq250\) \\
Production seeds
& exact-level event: 420; threshold-tail event: 421 \\
Random-number coupling
& the same base Gaussian draws and Bernoulli uniforms are used across methods within each event; the transformed factors and defaults still differ by proposal \\
Exact-convolution validation
& \(K\in\{16,32,64,128,256,512,1024\}\), EXP-4 filter \\
Matched-budget ISCOS
& exactly \(K=32\) modes, \(k=0,\ldots,31\), with \(\sigma_{k,32}=\exp[-8(k/32)^4]\) \\
COS interval and lattice cutoff
& \([a,b]=[-0.5,1100.5]\), \(y_x=249.5\), loss step \(\Delta=1\) \\
Implementation details
& batches of \(\num{5000}\); ten homogeneous blocks; ridge \(\rho=10^{-8}\); raw COS probabilities projected onto \([0,1]\) \\
\bottomrule
\end{tabular}
\end{table}

The three methods differ only in the Gaussian factor proposal used in the
production stage. The benchmark proposals are
\begin{align}
g_{\mathrm{GL}}(z)
&=
\phi_{11}(z;\mu_x^{\mathrm{GL}},I_{11}),
\label{eq:num_glasserman_mean_only_proposal}
\\
g_{\mathrm{CEIS}}(z)
&=
\phi_{11}(z;\widehat\mu_x^{\mathrm{CEIS}},
                 \widehat\Sigma_x^{\mathrm{CEIS}}),
\qquad
\omega_m^{\mathrm{CEIS}}=\mathbf 1_{\{L^{(m)}\geq250\}},
\label{eq:num_ceis_proposal}
\\
g_{\mathrm{ISCOS}}(z)
&=
\phi_{11}(z;\widehat\mu_{x,32}^{\mathrm{ISCOS}},
                 \widehat\Sigma_{x,32}^{\mathrm{ISCOS}}),
\qquad
\omega_m^{\mathrm{ISCOS}}=q_{x,32}(Z^{(m)}).
\label{eq:num_iscos32_proposal}
\end{align}
The Glasserman mean is obtained by the saddle-point optimisation used in the
paper's Algorithms~6.1 and~6.2, while its covariance remains the identity.
CEIS and ISCOS use the same weighted Gaussian fit and the same ridge; only
their calibration weights differ.

All three proposals are followed by the same conditional Bernoulli twist,
\begin{equation}
p_n^{(\theta)}(z)
=
\frac{p_n(z)e^{\theta l_n}}
     {1+p_n(z)(e^{\theta l_n}-1)},
\qquad
\psi(\theta,z)
=
\sum_{n=1}^{N}
\log\!\left[1+p_n(z)(e^{\theta l_n}-1)\right].
\label{eq:num_conditional_twist}
\end{equation}
For \(\{L=250\}\), \(\theta\) solves
\(\sum_n l_np_n^{(\theta)}(z)=250\). For \(\{L\geq250\}\), the
implementation uses \(\theta^+(z)=\max\{\theta(z),0\}\), following
\cite{GlassermanLi2005,Glasserman2005}. If \(R_Z\) denotes the Gaussian
factor likelihood ratio, the complete production likelihood ratio is
\[
R_Z(z)\exp\{-\theta L+\psi(\theta,z)\}.
\]
Thus the credit model, conditional twist, event definitions, and final ratio
estimators are identical across methods.

\subsection{Accuracy of the filtered COS calibration}
\label{subsec:num_exact_cos_benchmark}

The block structure permits an exact conditional benchmark. For each pilot
factor draw \(Z^{(m)}\), finite-state convolution of the ten binomial block
counts gives
\begin{equation}
q_x^{\mathrm{ex}}(Z^{(m)})
=
\mathbb P\!\left(
\sum_{g=1}^{10}c_gD_g\geq250
\,\middle|\,
Z^{(m)}
\right).
\label{eq:num_exact_conditional_tail}
\end{equation}
The dynamic program retains states below the threshold and accumulates mass as
soon as the threshold is crossed. This avoids subtracting two nearly equal
probabilities when the conditional tail is very small.

On the common finite pilot, the exact-weight Gaussian fit is
\begin{align}
\widehat\mu_x^{\mathrm{ex}}
&=
\frac{\sum_m q_x^{\mathrm{ex}}(Z^{(m)})Z^{(m)}}
     {\sum_m q_x^{\mathrm{ex}}(Z^{(m)})},
\label{eq:num_exact_pilot_mu}
\\
\widehat\Sigma_x^{\mathrm{ex}}
&=
\frac{\sum_m q_x^{\mathrm{ex}}(Z^{(m)})
 (Z^{(m)}-\widehat\mu_x^{\mathrm{ex}})
 (Z^{(m)}-\widehat\mu_x^{\mathrm{ex}})^\top}
 {\sum_m q_x^{\mathrm{ex}}(Z^{(m)})}
 +10^{-8}I_{11}.
\label{eq:num_exact_pilot_sigma}
\end{align}
These quantities are exact only with respect to the conditional loss
calculation; they remain finite-pilot estimates of the population CE moments.
For the filtered COS fit, define
\begin{equation}
e_\mu(K)
=
\frac{\|\widehat\mu_{x,K}^{\mathrm{ISCOS}}-
        \widehat\mu_x^{\mathrm{ex}}\|_2}
     {\|\widehat\mu_x^{\mathrm{ex}}\|_2},
\qquad
e_\Sigma(K)
=
\frac{\|\widehat\Sigma_{x,K}^{\mathrm{ISCOS}}-
        \widehat\Sigma_x^{\mathrm{ex}}\|_{\mathrm F}}
     {\|\widehat\Sigma_x^{\mathrm{ex}}\|_{\mathrm F}}.
\label{eq:num_proposal_relative_errors}
\end{equation}
The exact conditional weights have sample mean
\(\num{1.19926e-3}\) and calibration ESS \(\num{587.60}\). In
Table~\ref{tab:num_exact_cos_benchmark}, ``spurious mass'' is the share of
total clipped COS weight carried by pilot states with
\(q_x^{\mathrm{ex}}(Z^{(m)})\leq10^{-12}\).

\begin{table}[htbp]
\centering
\caption{EXP-4 COS calibration against exact conditional block convolution
on the same \(\num{250000}\) pilot factors.}
\label{tab:num_exact_cos_benchmark}
\small
\begin{tabular}{@{}rrrrrr@{}}
\toprule
\(K\)
& \(M_0^{-1}\sum_m q_{x,K}^{(m)}\)
& mean \(\lvert q_{x,K}-q_x^{\mathrm{ex}}\rvert\)
& spurious mass
& \(e_\mu(K)\)
& \(e_\Sigma(K)\) \\
\midrule
16   & \num{1.1699e-2} & \num{1.071e-2} & 86.7408\% & 94.59\% & 58.67\% \\
32   & \num{1.1070e-3} & \num{2.082e-4} &  2.9546\% &  5.67\% &  8.95\% \\
64   & \num{1.0369e-3} & \num{1.801e-4} &  0.2448\% &  6.44\% &  7.82\% \\
128  & \num{1.0542e-3} & \num{1.568e-4} &  0.0012\% &  5.72\% &  6.72\% \\
256  & \num{1.0764e-3} & \num{1.354e-4} &  0.0762\% &  4.78\% &  5.81\% \\
512  & \num{1.0985e-3} & \num{1.073e-4} &  0.0276\% &  3.88\% &  4.71\% \\
1024 & \num{1.1301e-3} & \num{7.125e-5} &  0.0008\% &  2.53\% &  3.07\% \\
\bottomrule
\end{tabular}
\end{table}

\begin{figure}[htbp]
\centering
\includegraphics[width=0.78\textwidth]
{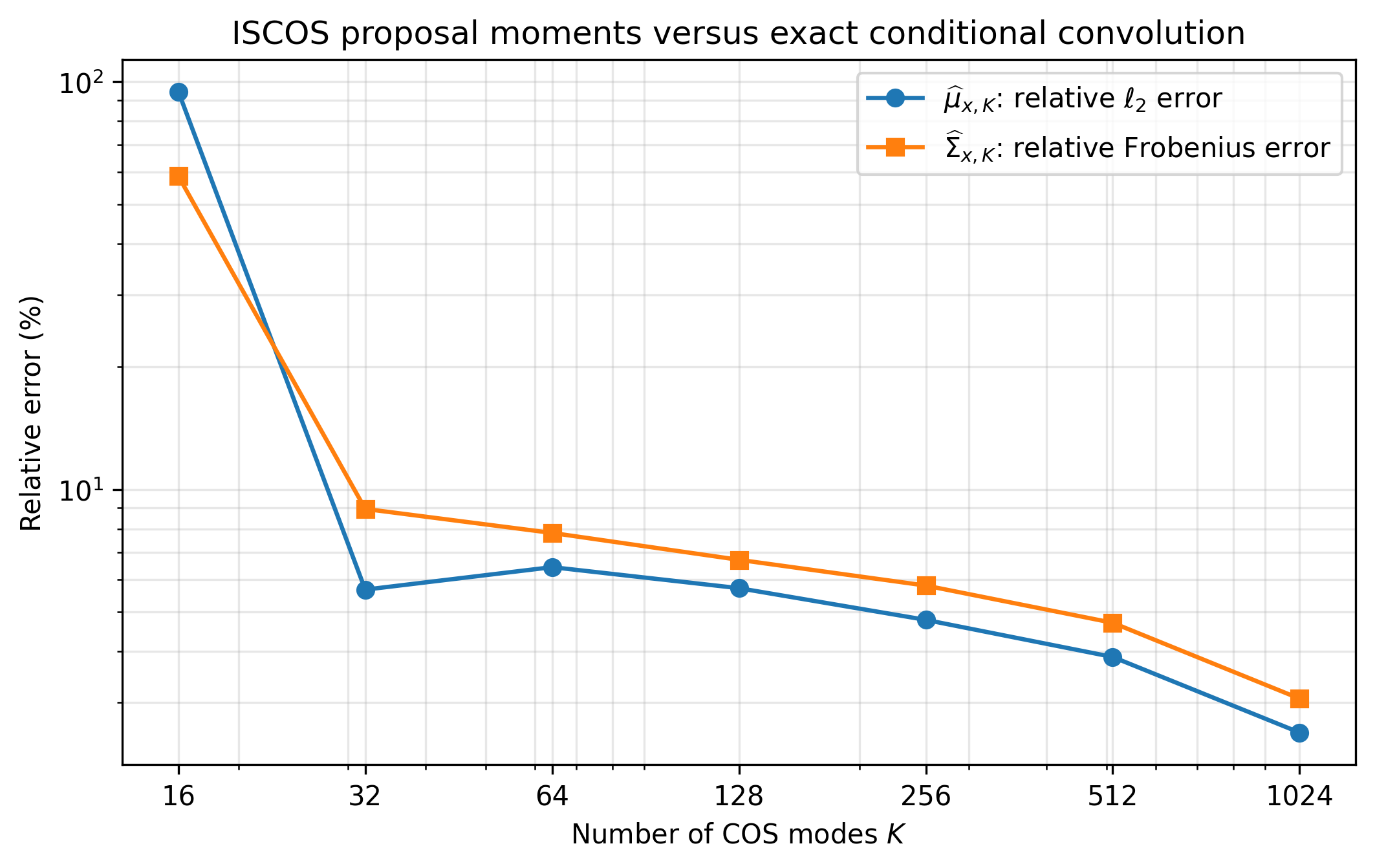}
\caption{Relative errors of the EXP-4 ISCOS proposal moments against exact
conditional-convolution moments on the same pilot sample.}
\label{fig:num_exact_proposal_convergence}
\end{figure}

The 16-mode approximation is inadequate: most of its fitted weight is assigned
to factor states whose exact conditional tail probability is negligible. The
largest improvement occurs between \(K=16\) and \(K=32\). Thereafter the mean
absolute weight error and the proposal-moment errors decrease overall, although
not monotonically at every adjacent mode count. At \(K=1024\), the relative
mean and covariance errors are \(2.53\%\) and \(3.07\%\), respectively. The
finite-pilot, clipping, and near-zero-weight effects visible in the table make
this a convergence diagnostic rather than an estimate of the asymptotic
exponent in Section~\ref{sec:error_analysis}.

The matched-budget ISCOS run uses exactly the filtered \(K=32\) specification
in the second row of Table~\ref{tab:num_exact_cos_benchmark}. Before clipping,
its COS probabilities have sample mean \(\num{1.06296e-3}\), minimum
\(\num{-4.29299e-2}\), and maximum \(\num{1.004806}\). The fractions below
zero and above one are \(8.7728\%\) and \(0.0020\%\), respectively. After
projection onto \([0,1]\), the weight mean is \(\num{1.10700e-3}\) and the
calibration ESS is \(\num{601.8}\). These clipping fractions are useful
numerical diagnostics but are not pointwise error measures; the exact
conditional comparison is the relevant accuracy check.

\subsection{Proposal-calibration diagnostics}
\label{subsec:num_proposal_method_comparison}

For a fitted Gaussian proposal \((\widehat\mu,\widehat\Sigma)\), define its
relative distance from the exact-weight finite-pilot fit by
\begin{equation}
d_\mu
=
\frac{\|\widehat\mu-\widehat\mu_x^{\mathrm{ex}}\|_2}
     {\|\widehat\mu_x^{\mathrm{ex}}\|_2},
\qquad
d_\Sigma
=
\frac{\|\widehat\Sigma-\widehat\Sigma_x^{\mathrm{ex}}\|_{\mathrm F}}
     {\|\widehat\Sigma_x^{\mathrm{ex}}\|_{\mathrm F}}.
\label{eq:num_distances_to_exact_ce_fit}
\end{equation}
For the Glasserman proposal these are descriptive distances, because its
mean-only mode approximation is not designed to match the CE covariance.

We also audit the Gaussian factor likelihood ratio. If
\(f=\mathcal N(0,I_d)\), \(g=\mathcal N(\mu,\Sigma)\), and \(R_Z=f/g\), then
\begin{equation}
\mathbb E_g[R_Z^2]<\infty
\quad\Longleftrightarrow\quad
2I_d-\Sigma^{-1}\succ0
\quad\Longleftrightarrow\quad
\lambda_{\min}(\Sigma)>\tfrac12.
\label{eq:num_gaussian_lr_second_moment_condition}
\end{equation}
We report the margin
\begin{equation}
m_{\mathrm{LR}}(\Sigma)
=
\lambda_{\min}(2I_d-\Sigma^{-1}),
\label{eq:num_lr_margin}
\end{equation}
so a positive value means that the untruncated Gaussian factor ratio passes
this second-moment check.

\begin{table}[htbp]
\centering
\caption{Proposal-calibration diagnostics under the common pilot budget. The
exact conditional CE row uses block-convolution weights on the same finite
pilot sample.}
\label{tab:num_proposal_calibration_comparison}
\small
\resizebox{\textwidth}{!}{%
\begin{tabular}{@{}lrrrrrr@{}}
\toprule
Method
& mean weight
& calibration ESS
& \(d_\mu\)
& \(d_\Sigma\)
& \(\lambda_{\min}(\widehat\Sigma)\)
& \(m_{\mathrm{LR}}(\widehat\Sigma)\) \\
\midrule
Exact conditional CE
& \num{1.1993e-3} & \num{587.6} & 0.00\% & 0.00\% & 0.5050 & 0.0197 \\
Glasserman \(\mathcal N(\mu^{\mathrm{GL}},I)\)
& \multicolumn{1}{c}{--} & \multicolumn{1}{c}{--} & 36.26\% & 59.66\% & 1.0000 & 1.0000 \\
CEIS
& \num{1.1760e-3} & \num{294.0} & 7.04\% & 10.99\% & 0.4726 & -0.1158 \\
ISCOS-32 (EXP-4)
& \num{1.1070e-3} & \num{601.8} & 5.67\% & 8.95\% & 0.7296 & 0.6294 \\
\bottomrule
\end{tabular}%
}
\end{table}

The CEIS pilot contains 294 realised tail observations, so its binary-weight
ESS is exactly 294. The smooth ISCOS weights raise the calibration ESS to
\(601.8\), an increase of \(104.7\%\) under the same pilot budget. In this run,
filtered ISCOS-32 is also closer to the exact-weight fit than CEIS in both
reported norms: \(d_\mu\) falls from \(7.04\%\) to \(5.67\%\), and
\(d_\Sigma\) from \(10.99\%\) to \(8.95\%\). The exact-weight covariance is
itself close to the likelihood-ratio boundary, with
\(m_{\mathrm{LR}}=0.0197\). Sampling noise moves the CEIS covariance just
across that boundary, whereas the filtered ISCOS covariance remains more
diffuse and has a positive margin of \(0.6294\).

The Glasserman mean-only proposal automatically satisfies the Gaussian
second-moment condition because its covariance is \(I_{11}\). Its fitted mean
shifts the market factor and the two industry factors attached to the
largest-loss blocks most strongly, in line with the portfolio's exposure
profile. Its distances from the exact CE moments should not be interpreted as
failure of its own objective: it is a local mode-based approximation with a
fixed covariance, whereas CEIS and ISCOS are global moment fits.

\subsection{Risk contributions and nominal pointwise intervals}
\label{subsec:num_gaussian_risk_contributions}

Let \(A\) denote either \(\{L=250\}\) or \(\{L\geq250\}\), and let
\(\Lambda^{(m)}\) be the complete factor-and-default likelihood ratio. The
obligor-level ratio estimator and the standard error used for the displayed
bands are
\begin{align}
\widehat r_k(A)
&=
\frac{\sum_{m=1}^{M}\Lambda^{(m)}l_kY_k^{(m)}
      \mathbf 1_{A^{(m)}}}
     {\sum_{m=1}^{M}\Lambda^{(m)}\mathbf 1_{A^{(m)}}},
\label{eq:num_ratio_contribution_estimator}
\\
\widehat{\operatorname{se}}\bigl(\widehat r_k(A)\bigr)
&=
\frac{
\left[\sum_{m=1}^{M}
\left\{\Lambda^{(m)}\mathbf 1_{A^{(m)}}
\bigl(l_kY_k^{(m)}-\widehat r_k(A)\bigr)\right\}^{2}\right]^{1/2}}
{\sum_{m=1}^{M}\Lambda^{(m)}\mathbf 1_{A^{(m)}}}.
\label{eq:num_ratio_standard_error}
\end{align}
The reported interval is
\(\widehat r_k\pm1.96\widehat{\operatorname{se}}(\widehat r_k)\).
The event-weight ESS is
\[
M_{\mathrm{eff},A}
=
\frac{\bigl(\sum_m\Lambda^{(m)}\mathbf 1_{A^{(m)}}\bigr)^2}
     {\sum_m\bigl(\Lambda^{(m)}\mathbf 1_{A^{(m)}}\bigr)^2}.
\]

\begin{table}[htbp]
\centering
\caption{Downstream importance-sampling diagnostics. Half-lengths are
averages over the 100 obligors. All intervals are nominal and subject to the
second-moment qualifications discussed in the text.}
\label{tab:num_downstream_comparison}
\small
\resizebox{\textwidth}{!}{%
\begin{tabular}{@{}lrrrrrrr@{}}
\toprule
Method
& \shortstack{CVaR\\hit rate}
& \shortstack{CVaR\\ESS}
& \shortstack{mean CVaR\\half-length}
& \shortstack{CES\\hit rate}
& \shortstack{CES\\ESS}
& \shortstack{mean CES\\half-length}
& \shortstack{tail-loss\\mean} \\
\midrule
Glasserman \(\mathcal N(\mu^{\mathrm{GL}},I)\)
& 12.6040\% & \num{402.5}  & 0.2667 & 75.2372\% & \num{1291.6}  & 0.2560 & 280.003 \\
CEIS
& 13.8720\% & \num{5337.8} & 0.07623 & 70.3120\% & \num{29354.6} & 0.05068 & 280.470 \\
ISCOS-32 (EXP-4)
& 13.1556\% & \num{9404.8} & 0.05993 & 70.2516\% & \num{45413.0} & 0.03916 & 280.508 \\
\bottomrule
\end{tabular}%
}
\end{table}

The estimated exact-event probabilities are
\(\num{2.77015e-4}\), \(\num{2.72541e-4}\), and
\(\num{2.74853e-4}\) for Glasserman, CEIS, and ISCOS, respectively. The
corresponding tail-event estimates are \(\num{1.10660e-3}\),
\(\num{1.15660e-3}\), and \(\num{1.15360e-3}\). All three CVaR allocation
vectors add to 250 up to numerical rounding, while each CES vector adds to its
estimated threshold-tail mean. For CEIS and ISCOS, these means are 280.470 and
280.508, a difference below \(0.02\%\). Their contribution vectors are also
close: the ISCOS--CEIS relative \(\ell_2\) differences are \(1.52\%\) for
CVaR and \(0.39\%\) for CES. In the \(l_k=25\) group, an obligor contributes
about 12.35 to CVaR and 11.47 to CES under both methods.

Raw proposal hit rates do not rank the methods by efficiency. Glasserman
produces many event hits, but its event-weight ESS is only \(402.5\) for CVaR
and \(1291.6\) for CES because a small number of likelihood weights dominate.
Relative to CEIS, ISCOS raises the exact-event ESS by \(76.2\%\) and the
tail-event ESS by \(54.7\%\), despite slightly lower hit rates. Its average
nominal half-length is \(21.4\%\) smaller for CVaR and \(22.7\%\) smaller for
CES. The componentwise CVaR ordering is mixed---ISCOS is narrower for 50 of
100 obligors---whereas it is narrower for 99 of 100 CES components. The
reported average reductions therefore describe this matched run and should not
be read as a uniform componentwise ordering.

\begin{figure}[p]
\centering
\begin{subfigure}[t]{0.49\textwidth}
\centering
\includegraphics[width=\textwidth]
{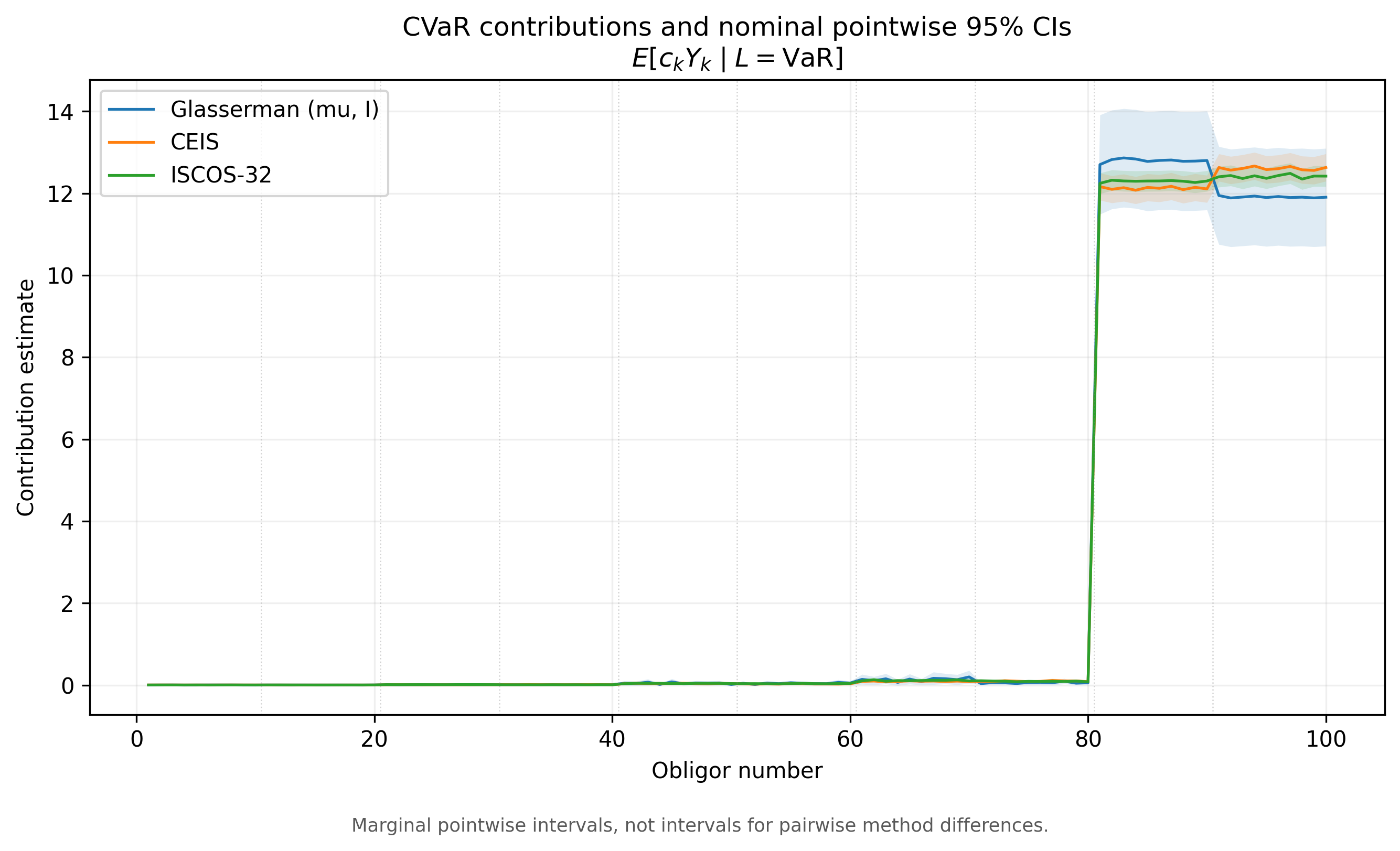}
\caption{CVaR contributions conditional on \(L=250\).}
\end{subfigure}
\hfill
\begin{subfigure}[t]{0.49\textwidth}
\centering
\includegraphics[width=\textwidth]
{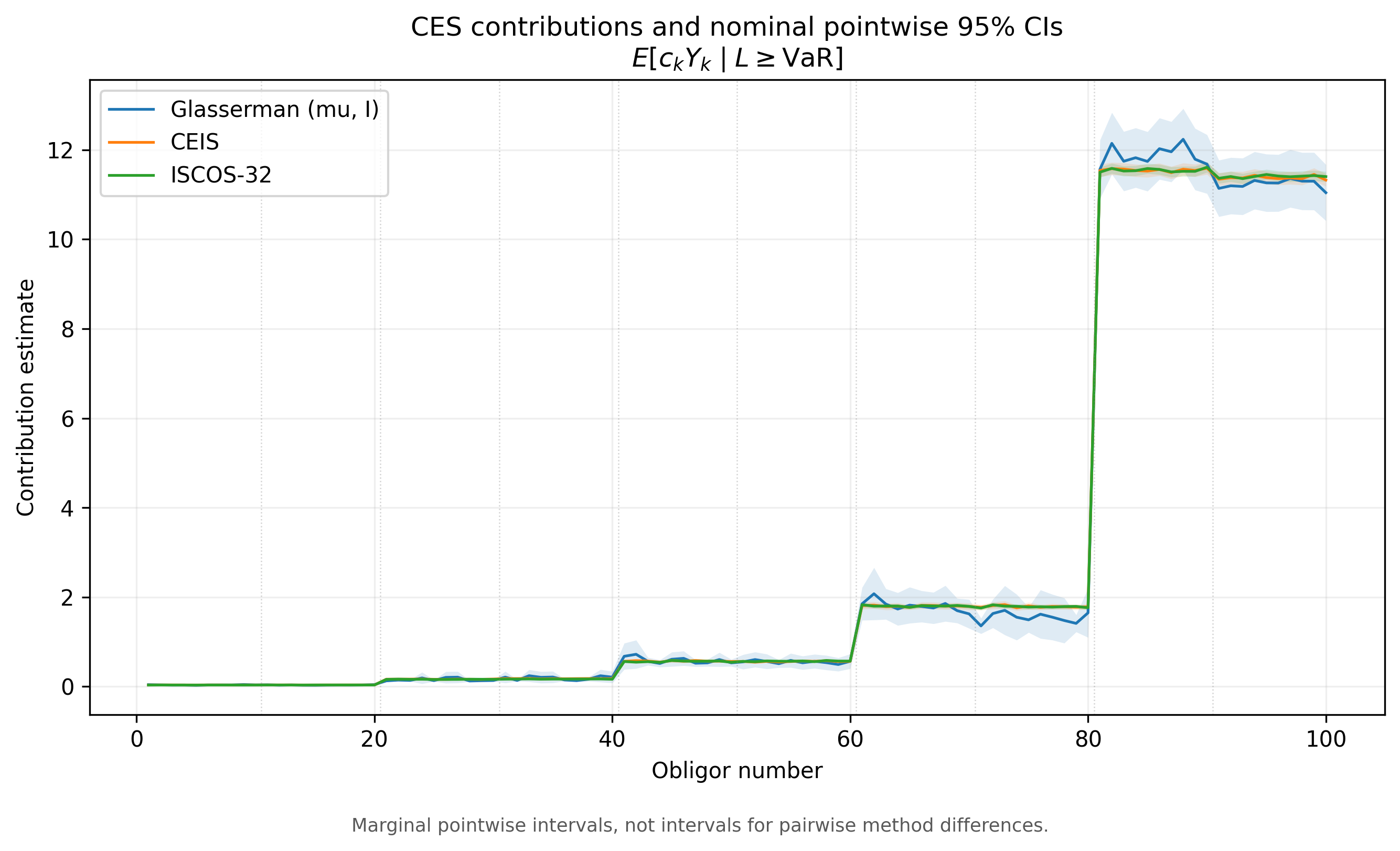}
\caption{CES contributions conditional on \(L\geq250\).}
\end{subfigure}
\caption{Obligor-level contribution estimates with nominal marginal
pointwise 95\% intervals. The methods share the threshold, budgets, and base
random-number streams within each event. The bands are not intervals for
pairwise method differences.}
\label{fig:num_contribution_estimates}
\end{figure}

\begin{figure}[p]
\centering
\begin{subfigure}[t]{0.49\textwidth}
\centering
\includegraphics[width=\textwidth]
{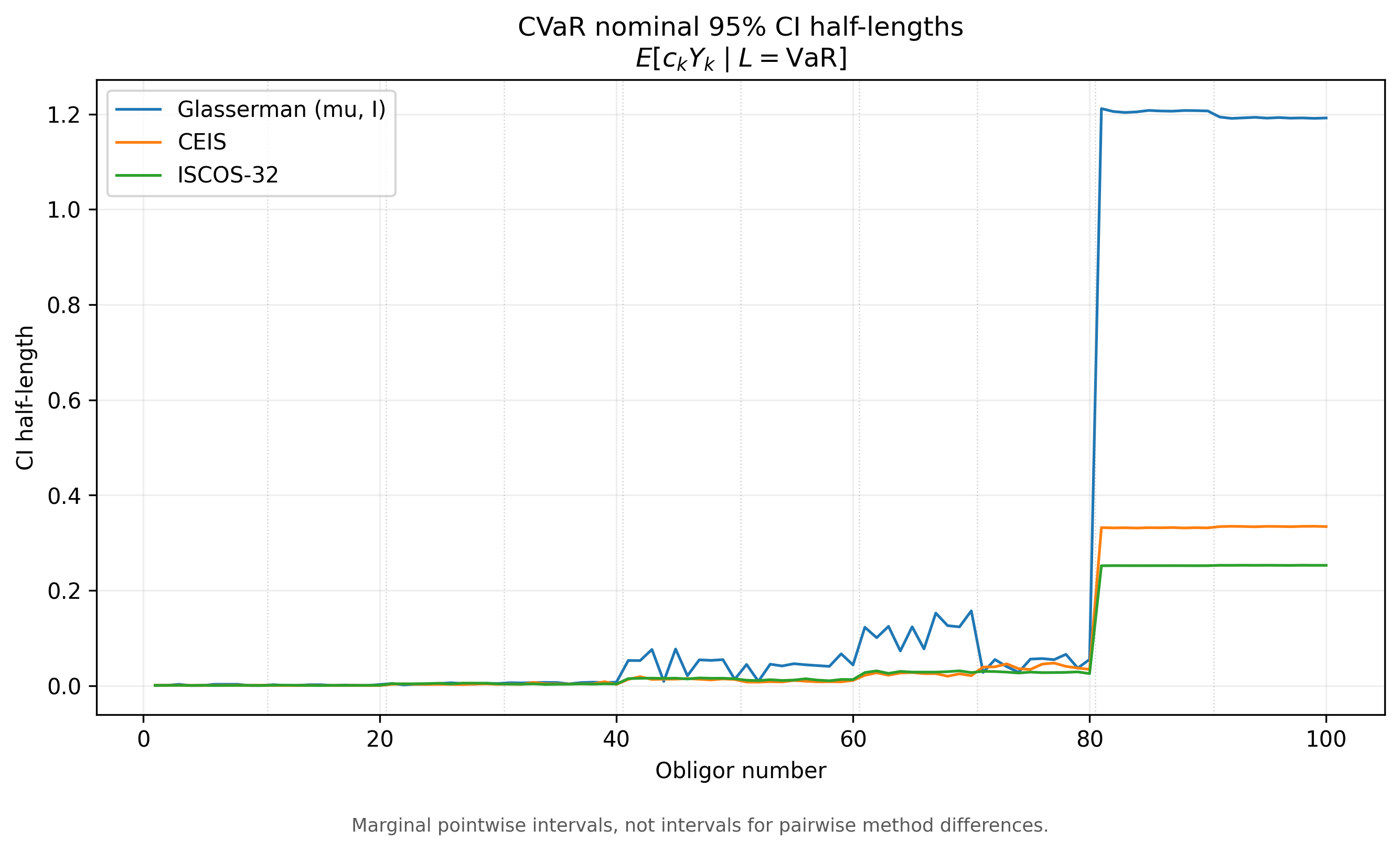}
\caption{Nominal CVaR interval half-lengths.}
\end{subfigure}
\hfill
\begin{subfigure}[t]{0.49\textwidth}
\centering
\includegraphics[width=\textwidth]
{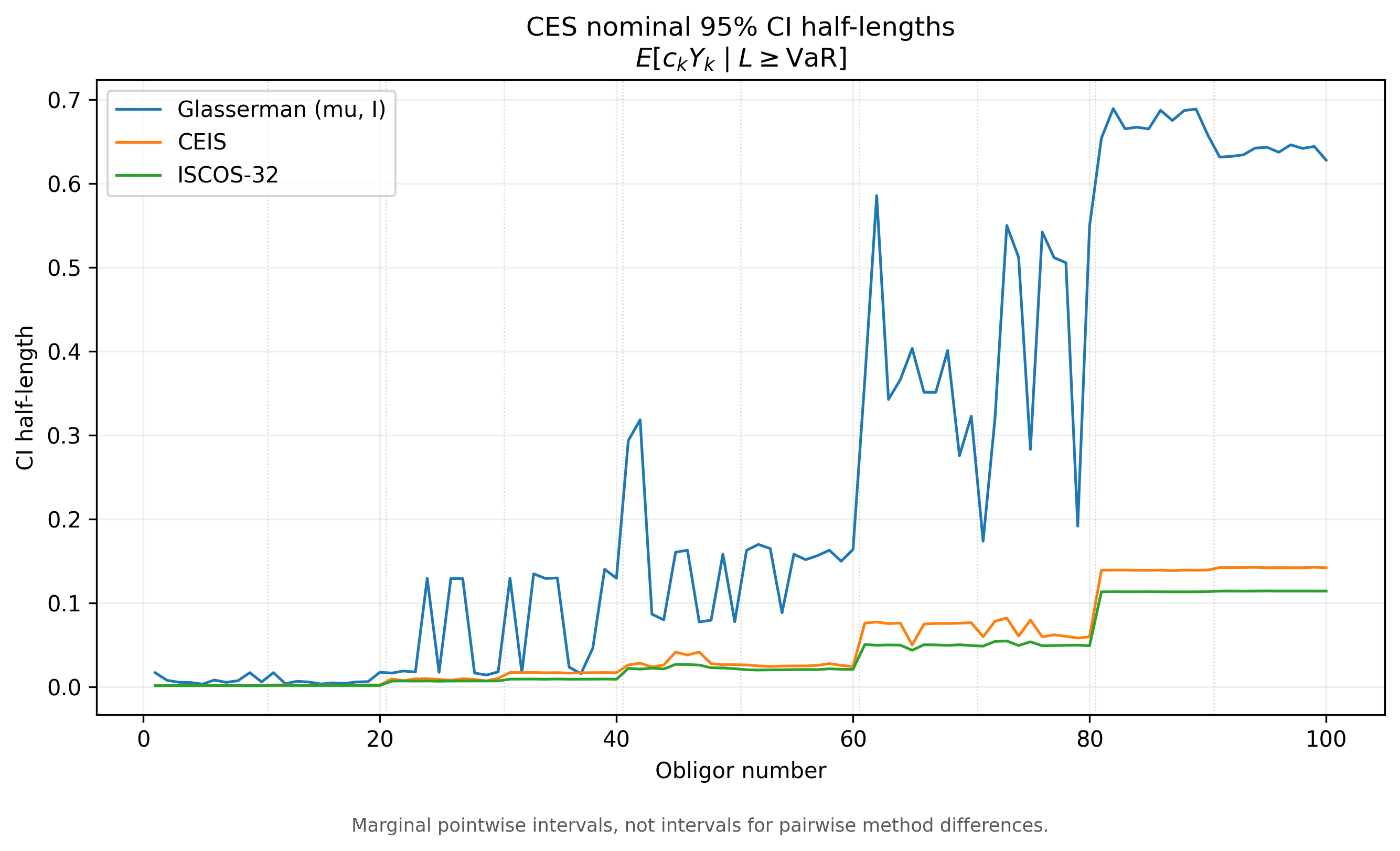}
\caption{Nominal CES interval half-lengths.}
\end{subfigure}
\caption{Pointwise interval half-lengths by obligor. The sharp increase for
obligors 81--100 reflects the two blocks with loss-at-default 25.}
\label{fig:num_ci_half_lengths}
\end{figure}

The intervals require a separate integrability qualification. The Gaussian
factor-ratio check in
\eqref{eq:num_gaussian_lr_second_moment_condition} passes for Glasserman and
ISCOS but fails for CEIS. More strongly, a direction-wise calculation finds a
CEIS covariance eigenvalue below \(1/2\) whose signed eigenvector has strictly
positive loadings for every obligor. This proves that the complete CEIS CES
importance weight has an infinite second moment in the reported configuration.
The CEIS CES bands are therefore finite-run dispersion summaries rather than
CLT-valid confidence intervals. The same sufficient divergence diagnostic
does not produce a witness for Glasserman or ISCOS, but failure to find a
witness is not a proof that every complete event-weight second moment is
finite. For the exact-level CVaR estimator, the event \(\{L=250\}\) may supply
additional decay, and a complete integrability analysis is not undertaken.
For these reasons all bands are described as nominal.

\subsection{Computational cost and limitations}
\label{subsec:num_computational_cost}

The homogeneous block structure reduces the cost of ISCOS calibration. Its
conditional characteristic function is
\begin{equation}
\varphi_{L\mid z}(\omega)
=
\prod_{g=1}^{10}
\left[1-p_g(z)+p_g(z)e^{\mathrm{i}\omega c_g}\right]^{10},
\label{eq:num_grouped_cf}
\end{equation}
so the implementation processes ten distinct Bernoulli groups rather than 100
individual obligors. The shared preliminary Monte Carlo stage takes
approximately 1.45 seconds. In the saved fixed-order timing run, the equivalent
stand-alone pipelines for all three methods are roughly 96--97 seconds; the
production CVaR and CES simulations account for almost all of each bar. The
additional filtered-COS calibration therefore has little effect on total cost
in this example.

\begin{figure}[htbp]
\centering
\includegraphics[width=0.88\textwidth]
{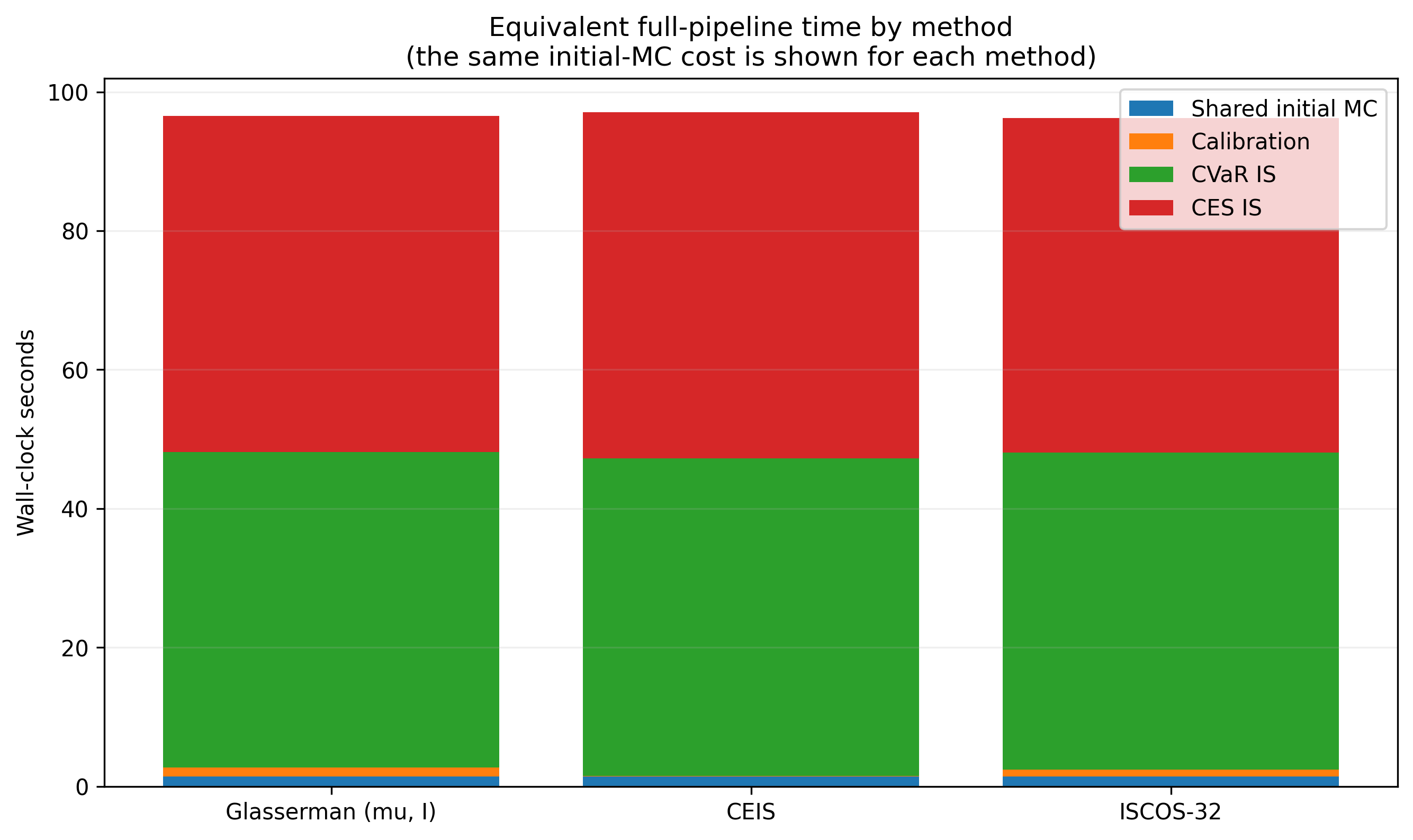}
\caption{Equivalent full-pipeline wall-clock time by method. The same shared
preliminary Monte Carlo cost is included in every bar so that each bar can be
read as a stand-alone pipeline. Plotting and output writing are excluded.}
\label{fig:num_method_timings}
\end{figure}

The timing comparison is descriptive. It consists of one sequential run in a
fixed method order, without warm-up repetitions or randomisation of that order,
and the recorded metadata do not identify the processor, memory, Python
version, or numerical-library versions. The absolute times are therefore
within-run measurements, not portable performance benchmarks.

The Gaussian experiments support three conclusions. First, the exact
conditional-convolution study confirms that the filtered COS proposal moments
approach their exact-weight counterparts as the resolution increases. Second,
under the matched pilot, filtered ISCOS-32 uses substantially more calibration
information than CEIS, is slightly closer to the exact finite-pilot CE moments,
and remains on the finite-Gaussian-ratio side of the second-moment boundary.
Third, its production weights are more balanced in this run, yielding higher
event-weight ESS and shorter average nominal intervals, with the clearer
componentwise improvement occurring for CES.

These conclusions remain limited to one pilot and one production run at a
fixed threshold. The experiment does not estimate repeated-run bias, variance,
coverage, or mean squared error; the displayed intervals exclude pilot and
calibration uncertainty; common random numbers do not convert marginal bands
into paired-difference intervals; and closeness to the CE moment target does
not imply variance optimality for every downstream ratio estimator. The
second-moment warning for CEIS CES is especially important. The results thus
support smooth conditional-probability weighting as a proposal-calibration
strategy in this benchmark, not a universal downstream variance ordering.

\section{Numerical Experiments: Student \texorpdfstring{\(t\)}{t}-Copula}
\label{sec:numerical_student_t}

This section examines the Student~\(t\)-copula implementation of the
Gaussian--inverse-Gamma product proposal derived in
Section~\ref{sec:ce_factor_space}. The experiment is deliberately matched to
the preceding Gaussian study: it uses the same eleven-factor portfolio, the
same pilot and production sample sizes, and the same definitions of the
exact-level VaR contribution and the tail conditional expected-shortfall
contribution. The comparison is between CEIS-t and ISCOS-t. A direct
Student~\(t\) analogue of the Glasserman mean-shift benchmark is not included
in the supplied implementation.

The purpose of the experiment is therefore narrower than that of the Gaussian
section. We ask whether replacing binary pilot indicators by COS conditional
tail probabilities produces a more stable product-proposal fit and more
balanced production likelihood weights when the common state contains both a
Gaussian factor vector and a heavy-tailed scale variable. The experiment is a
single matched-budget comparison rather than a COS-resolution or repeated-run
convergence study.

Throughout this section, ``CVaR contribution'' retains the convention
\[
\mathbb{E}[l_kY_k\mid L=\operatorname{VaR}_{\alpha}],
\]
whereas CES denotes
\[
\mathbb{E}[l_kY_k\mid L\geq\operatorname{VaR}_{\alpha}].
\]
All confidence intervals reported below are conditional, nominal, marginal
pointwise intervals from the production ratio estimators. They do not include
uncertainty from the pilot VaR, proposal calibration, or the finite COS
expansion, and they are not intervals for a CEIS-t minus ISCOS-t difference.

\subsection{Portfolio, Product Proposal, and Reproducible Setup}
\label{subsec:t_copula_setup}

We reuse the 100-obligor, eleven-factor portfolio described in the Gaussian
experiment. The first factor is market-wide and the remaining ten factors are
industry-specific. Each obligor has unconditional default probability
\(0.01\), and the block losses-at-default are
\[
(c_1,\ldots,c_{10})=(1,1,4,4,9,9,16,16,25,25),
\]
with ten obligors in every block. Hence \(0\leq L\leq1100\) and the loss lies
on the integer lattice.

For the Student~\(t\)-copula, the common state is
\[
U=(Z,W),
\qquad
Z\sim\mathcal{N}_{11}(0,I_{11}),
\qquad
W=\frac{\nu}{V},
\quad V\sim\chi^2_{\nu},
\]
with \(Z\) and \(W\) independent under the original measure. We set
\(\nu=4\), so that
\[
W\sim\operatorname{InvGamma}(2,2)
\]
under the shape--scale parameterisation used in this paper. Conditional on
\((Z,W)=(z,w)\), the number of defaults in block \(g\) is
\[
D_g\mid Z=z,W=w
\sim
\operatorname{Binomial}\bigl(10,p_g(z,w)\bigr),
\qquad g=1,\ldots,10,
\]
independently across blocks, where
\begin{equation}
 p_g(z,w)
 =
 \Phi\left(
 \frac{
 T_4^{-1}(0.01)/\sqrt{w}-0.3z_1-0.8z_{g+1}
 }{\sqrt{0.27}}
 \right).
\label{eq:t_num_conditional_pd}
\end{equation}
This is the lower-tail sign convention used in the theoretical sections: a
more negative factor realisation increases conditional default probabilities.
Consequently, adverse proposal mean shifts appear with negative signs in the
reported parameter estimates.

Both methods fit the independent product family
\begin{equation}
 g(z,w)
 =
 \phi_{11}(z;\mu,\Sigma)
 f_{\mathrm{IG}}(w;a_W,b_W),
\label{eq:t_num_product_proposal}
\end{equation}
where the Gaussian component matches weighted first and second moments of
\(Z\), while the inverse-Gamma component matches the weighted sufficient
statistics \(\log W\) and \(W^{-1}\). CEIS-t uses the calibration weight
\(\mathbf{1}_{\{L\geq x\}}\). ISCOS-t replaces it by the clipped EXP-4 COS
approximation of
\(\mathbb{P}(L\geq x\mid Z,W)\), using 64 modes including the zero mode.

Table~\ref{tab:t_num_setup} records the common experimental design. The two
methods share the same pilot sample and empirical VaR. In the production
stage, the implementation also shares proposal-independent base random
numbers: standard normals for the Gaussian component, uniforms transformed by
the inverse-Gamma inverse CDF, and uniforms for the conditional Bernoulli
defaults. This common-random-number construction makes the method comparison
more directly aligned, but it does not turn the marginal intervals into
confidence intervals for method differences.

\begin{table}[t]
\centering
\caption{Parameters used in the Student~\(t\)-copula comparison. The mode
count includes the zero mode.}
\label{tab:t_num_setup}
\small
\begin{tabular}{@{}ll@{}}
\toprule
Quantity & Value \\
\midrule
Degrees of freedom
& \(\nu=4\) \\
Confidence level and threshold
& \(\alpha=0.999\), \(x=\widehat{\operatorname{VaR}}_{\alpha}=504\) \\
Shared pilot sample
& \(M_0=250{,}000\), seed 42 \\
Production samples per method
& \(250{,}000\) for \(L=504\) and \(250{,}000\) for \(L\geq504\) \\
Production seeds
& exact-level: 420; tail: 421 \\
Random-number coupling
& proposal-independent inverse-CDF common random numbers \\
COS specification
& \(K=64\), EXP-4 filter \(\exp[-8(k/K)^4]\) \\
COS interval and cutoff
& \([a,b]=[-0.5,1100.5]\), \(y_x=503.5\) \\
Batch size and homogeneous groups
& 5,000; ten groups of ten obligors \\
Covariance regularisation
& ridge \(\rho=10^{-8}\) \\
Probability safeguard
& raw COS tail probabilities projected onto \([0,1]\) \\
\bottomrule
\end{tabular}
\end{table}

The production estimator changes both components of the common state and also
exponentially twists the conditional Bernoulli probabilities. Its full
likelihood ratio is therefore the product of the Gaussian factor ratio, the
inverse-Gamma scale ratio, and the conditional Bernoulli ratio. The same
production estimator and the same twisting rule are used after CEIS-t and
ISCOS-t calibration; only the fitted product proposal differs.

\subsection{Proposal Calibration}
\label{subsec:t_copula_calibration}

Table~\ref{tab:t_num_calibration} summarises the fitted proposals. The CEIS-t
pilot contains 255 realised tail hits, so its binary calibration-weight ESS is
exactly 255. The smooth ISCOS-t weights have an ESS of 353.5, an increase of
\(38.6\%\) under the same pilot budget. The two fitted Gaussian means are
similar: their Euclidean distance is \(7.66\%\) of the CEIS-t mean norm. The
relative Frobenius distance between the two covariance matrices is
\(12.61\%\). These are between-method distances, not errors, because no exact
conditional-moment benchmark is available for this Student~\(t\) run.

Both methods shift the market factor most strongly, with fitted means
\(-1.298\) for CEIS-t and \(-1.333\) for ISCOS-t. The largest industry shifts
occur in the two blocks with loss-at-default 25, where the fitted means are
approximately \((-0.757,-0.838)\) and \((-0.797,-0.841)\), respectively.
Thus, the calibrated stress direction is economically consistent with the
portfolio's lumpy exposure profile.

\begin{table}[t]
\centering
\caption{Student~\(t\)-copula proposal-calibration diagnostics. Here
\(a_W\) and \(b_W\) are the inverse-Gamma shape and scale, and
\(\kappa_2(\widehat\Sigma)\) is the spectral condition number.}
\label{tab:t_num_calibration}
\small
\begin{tabular}{@{}lrrrrrr@{}}
\toprule
Method
& \shortstack{mean\\weight}
& \shortstack{calibration\\ESS}
& \(\widehat a_W\)
& \(\widehat b_W\)
& \(\lambda_{\min}(\widehat\Sigma)\)
& \(\kappa_2(\widehat\Sigma)\) \\
\midrule
CEIS-t
& \(1.0200\times10^{-3}\)
& 255.0
& 1.9500
& 36.1980
& 0.5508
& 2.658 \\
ISCOS-t
& \(1.0689\times10^{-3}\)
& 353.5
& 1.8928
& 33.8338
& 0.5933
& 2.305 \\
\bottomrule
\end{tabular}
\end{table}

The COS diagnostics require some care. Before clipping, the ISCOS-t weights
have sample mean \(1.06625\times10^{-3}\); after clipping, the mean is
\(1.06887\times10^{-3}\), so clipping changes the mean by only
\(2.62\times10^{-6}\). Nevertheless, \(94.14\%\) of the raw values are
negative and \(94.15\%\) of the clipped values are exactly zero. This large
fraction is compatible with the fact that most original-measure pilot states
have an extremely small conditional probability of producing a loss above
504, and the raw negative values can therefore be small oscillatory
approximations around zero. It is not, however, a substitute for a pointwise
accuracy check. In contrast to the Gaussian experiment, the supplied
Student~\(t\) results contain neither an exact conditional-convolution
benchmark nor a sweep over \(K\). The 64-mode specification should therefore
be interpreted as the implemented numerical choice rather than as an
empirically validated asymptotic regime.

A second qualification concerns likelihood-ratio moments. For the product
proposal in \eqref{eq:t_num_product_proposal}, the unconditional common-state
likelihood ratio has a finite second moment only if
\begin{equation}
 \lambda_{\min}(\Sigma)>\frac12,
 \qquad
 a_W<\nu,
 \qquad
 b_W<\nu.
\label{eq:t_num_common_state_second_moment}
\end{equation}
The Gaussian condition is satisfied by both fitted covariances, and both
inverse-Gamma shapes are below \(\nu=4\). The fitted scales, however, are
\(36.198\) and \(33.834\), so the scale condition fails for both methods.
Consequently, the common-state second-moment diagnostic does not support a
central limit theorem for either set of displayed intervals. This diagnostic
is a warning rather than a complete analysis of every event-weighted
estimator: the event indicator and the conditional Bernoulli twist may provide
additional tail decay. We therefore describe all intervals below as nominal
finite-run dispersion measures.

\subsection{Risk Contributions and Nominal Pointwise Intervals}
\label{subsec:t_copula_contributions}

Table~\ref{tab:t_num_downstream} reports the main production diagnostics. The
raw proposal hit rates are nearly the same for the two methods and are
slightly lower under ISCOS-t. The event-weight ESS nevertheless increases from
257.4 to 326.7 for the exact-level event and from 22,688.8 to 43,964.5 for the
tail event. Thus, relative to CEIS-t, ISCOS-t raises the exact-event ESS by
\(26.9\%\) and the tail-event ESS by \(93.8\%\). The gain comes from more
balanced likelihood weights rather than from a higher event frequency.

\begin{table}[t]
\centering
\caption{Downstream Student~\(t\)-copula importance-sampling diagnostics.
CI half-lengths are averages over the 100 obligors and are nominal; the
second-moment qualification in the text is essential.}
\label{tab:t_num_downstream}
\small
\begin{tabular}{@{}lrrrrrr@{}}
\toprule
Method
& \shortstack{CVaR\\hit rate}
& \shortstack{CVaR\\ESS}
& \shortstack{mean CVaR\\half-length}
& \shortstack{CES\\hit rate}
& \shortstack{CES\\ESS}
& \shortstack{mean CES\\half-length} \\
\midrule
CEIS-t
& 0.8152\%
& 257.4
& 0.6404
& 77.7988\%
& 22,688.8
& 0.06891 \\
ISCOS-t
& 0.8060\%
& 326.7
& 0.5775
& 77.5692\%
& 43,964.5
& 0.05031 \\
\bottomrule
\end{tabular}
\end{table}

The two proposals give close estimates of the underlying risk quantities. The
estimated exact-level probabilities are
\(8.4364\times10^{-6}\) under CEIS-t and
\(8.5581\times10^{-6}\) under ISCOS-t. The corresponding CVaR contribution
vectors satisfy the allocation identity
\[
\sum_{k=1}^{100}\widehat{\mathrm{CVaR}}_{k,\alpha}=504
\]
up to numerical rounding. For the tail event, the estimated probabilities are
\(1.0298\times10^{-3}\) and \(1.0223\times10^{-3}\), and the threshold-based
tail means are 601.026 and 601.465. The two tail means differ by less than
\(0.1\%\).

The contribution profile follows the five exposure levels. Under CEIS-t and
ISCOS-t, the average CES contributions per obligor in the exposure groups
\(1,4,9,16,25\) are approximately
\[
(0.342,1.460,3.827,8.148,16.274)
\]
and
\[
(0.341,1.472,3.821,8.189,16.250),
\]
respectively. The CVaR contribution estimates are noisier because the
exact-level event has an event-weight ESS of only a few hundred. Even so, both
methods reproduce the same blockwise allocation pattern and place the largest
contributions in the two highest-exposure blocks.

Figure~\ref{fig:t_num_contribution_estimates} shows the point estimates and
nominal pointwise intervals. The methods are almost indistinguishable for CES,
whereas the exact-level CVaR curves display greater within-block variation.
ISCOS-t has a smaller interval half-length for 92 of the 100 CVaR components
and for all 100 CES components. Averaged over obligors, its nominal
half-length is \(9.8\%\) smaller for CVaR and \(27.0\%\) smaller for CES.

\begin{figure}[H]
\centering
\begin{subfigure}[t]{0.88\textwidth}
\centering
\includegraphics[width=\textwidth,trim=0 72bp 0 0,clip]
{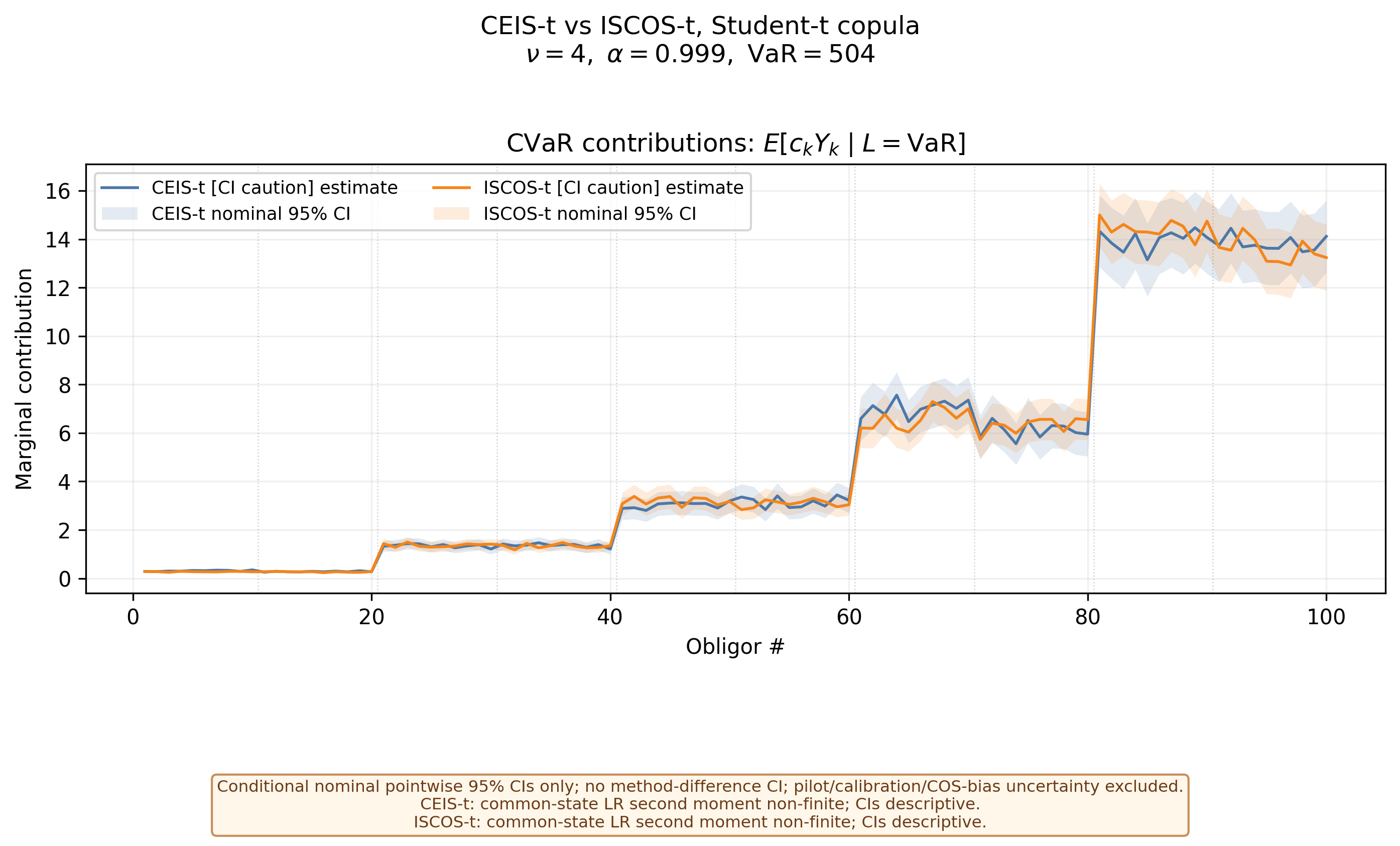}
\caption{CVaR contributions conditional on \(L=504\).}
\end{subfigure}

\vspace{0.8em}

\begin{subfigure}[t]{0.88\textwidth}
\centering
\includegraphics[width=\textwidth,trim=0 72bp 0 0,clip]
{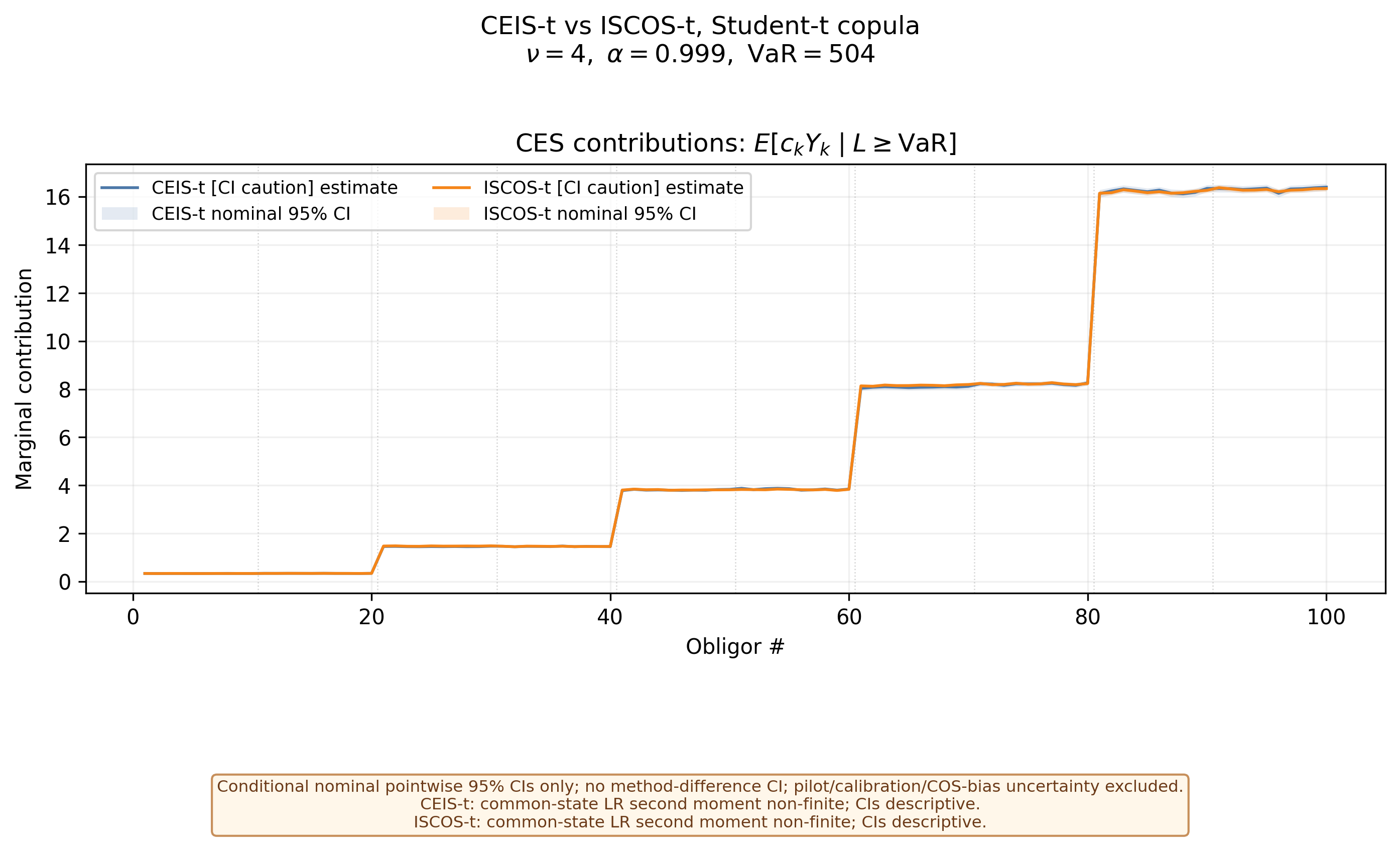}
\caption{CES contributions conditional on \(L\geq504\).}
\end{subfigure}
\caption{CEIS-t and ISCOS-t obligor-level contribution estimates with nominal
pointwise 95\% intervals. The methods share the pilot sample, VaR threshold,
production budgets, and proposal-independent base random numbers. The
intervals are descriptive because the common-state likelihood-ratio
second-moment condition fails for both proposals.}
\label{fig:t_num_contribution_estimates}
\end{figure}

Figure~\ref{fig:t_num_half_length_ratio} plots the componentwise ratio
\[
\frac{\text{CEIS-t CI half-length}}
     {\text{ISCOS-t CI half-length}}.
\]
Values above one favour ISCOS-t. The median ratios are 1.114 for CVaR and
1.396 for CES. The systematic separation of the CES curve from one is
consistent with the much larger tail-event ESS under ISCOS-t. These ratios are
comparisons of marginal nominal intervals; they are not test statistics for a
method difference.

\begin{figure}[H]
\centering
\includegraphics[width=0.92\textwidth,trim=0 70bp 0 0,clip]
{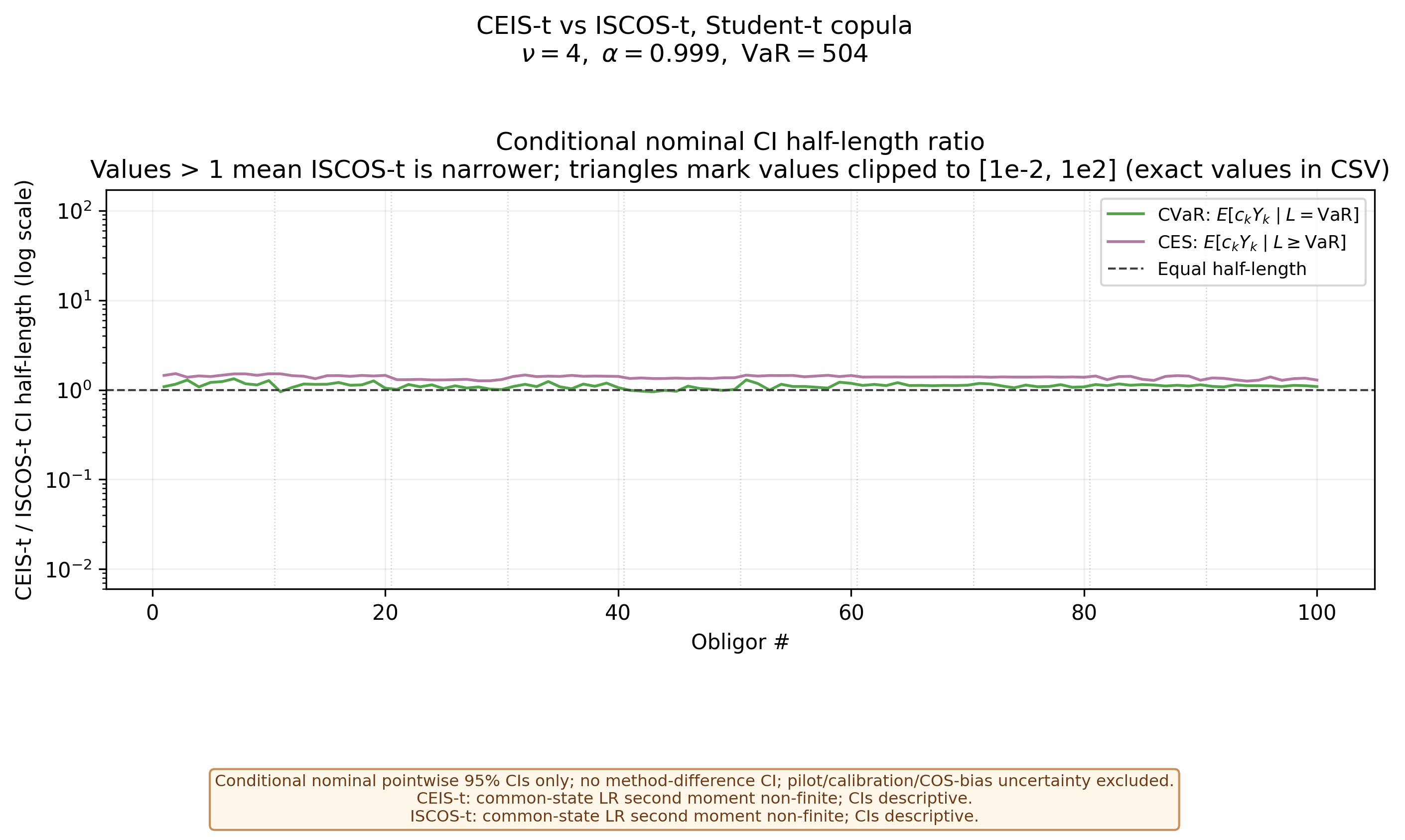}
\caption{Componentwise nominal CI half-length ratio, CEIS-t divided by
ISCOS-t. Values above one indicate a shorter interval under ISCOS-t.}
\label{fig:t_num_half_length_ratio}
\end{figure}

\subsection{Computational Cost}
\label{subsec:t_copula_cost}

The shared initial simulation takes 2.015 seconds. CEIS-t calibration is
almost costless relative to production, taking 0.101 seconds, whereas ISCOS-t
calibration takes 2.656 seconds because it evaluates the grouped conditional
characteristic function at 64 frequencies. The grouping reduces the
conditional characteristic-function product from 100 obligors to ten distinct
blocks.

The production stages dominate the total cost. The exact-level and tail runs
take 87.469 and 96.652 seconds after CEIS-t calibration, compared with 90.527
and 99.029 seconds after ISCOS-t calibration. Counting the shared initial
simulation once for each stand-alone pipeline gives equivalent total times of
186.237 seconds for CEIS-t and 194.226 seconds for ISCOS-t. ISCOS-t therefore
increases total runtime by 7.99 seconds, or \(4.3\%\), while approximately
doubling the tail-event ESS and reducing the average nominal CES interval
half-length by \(27.0\%\).

\begin{figure}[H]
\centering
\includegraphics[width=0.88\textwidth,trim=0 48bp 0 0,clip]
{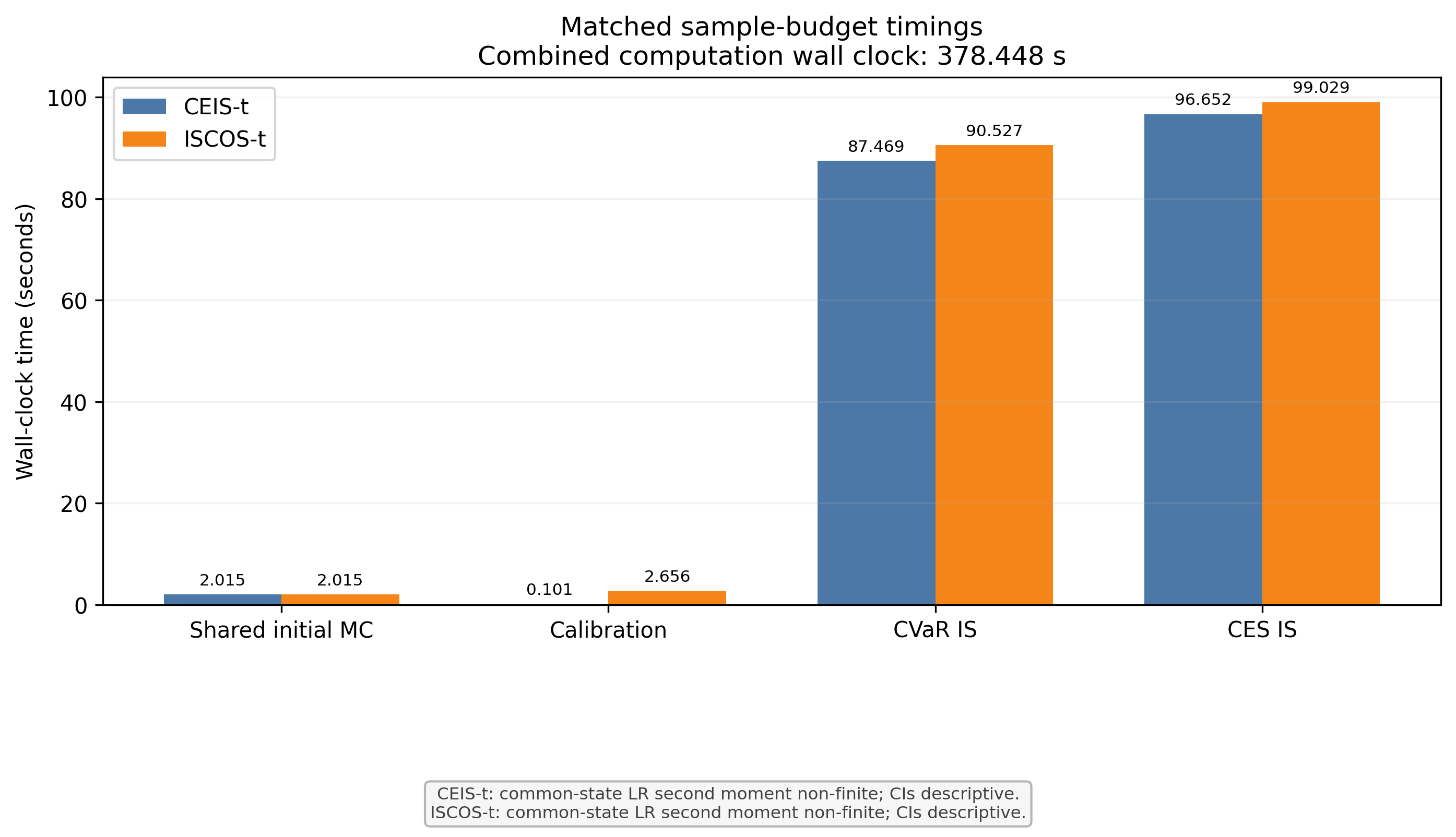}
\caption{Matched-budget wall-clock times for the Student~\(t\)-copula
comparison. The same initial Monte Carlo cost is displayed for both methods so
that each set of bars can be read as a stand-alone pipeline.}
\label{fig:t_num_timings}
\end{figure}

As in the Gaussian experiment, the recorded metadata do not identify the
processor, memory, Python version, or numerical-library versions. The absolute
times should therefore be read as within-run comparisons rather than as
portable performance benchmarks.

\subsection{Summary and Limitations of the Student \texorpdfstring{\(t\)}{t}
Experiment}
\label{subsec:t_copula_summary}

The matched Student~\(t\)-copula run gives the same broad empirical message as
the Gaussian experiment. Smooth conditional-probability weights increase the
amount of information used in proposal calibration and lead to more balanced
production likelihood weights. Under the supplied configuration, ISCOS-t
raises calibration ESS by \(38.6\%\), exact-event ESS by \(26.9\%\), and
tail-event ESS by \(93.8\%\). Its average nominal pointwise intervals are
shorter for both risk contributions, with the larger reduction occurring for
CES, while total runtime remains within \(4.3\%\) of CEIS-t.

These findings remain descriptive for four reasons. First, the experiment
contains one pilot and one production run, so it does not estimate repeated-run
bias, variance, coverage, or mean squared error. Second, there is no exact
conditional benchmark or mode sweep for the Student~\(t\) COS weights. Third,
the independent product proposal cannot represent the dependence between
\(Z\) and \(W\) induced by conditioning on a large loss. Fourth, the fitted
inverse-Gamma scales violate the common-state likelihood-ratio second-moment
condition in \eqref{eq:t_num_common_state_second_moment}. The numerical results
therefore support the practical promise of ISCOS-t, especially for tail
contributions, but they do not establish a universal variance ordering or
CLT-valid confidence intervals.

\section{Conclusion}
\label{sec:conclusion}

This paper has developed ISCOS as a cross-entropy proposal-calibration method
for rare-event simulation in multi-factor credit portfolios.  The method starts
from the common-state distribution conditional on a large portfolio loss and
projects it onto a tractable family.  This gives a Gaussian proposal for the
systematic factors and, for the Student \(t\)-copula, a
Gaussian--inverse-Gamma product proposal for the factor vector and common
scale.  The construction remains in the low-dimensional common-state space and
can be combined with conditional default twisting in the production stage.

ISCOS differs from CEIS only in the information used to fit that proposal.
CEIS uses the realised indicator \(\mathbf 1_{\{L\geq x\}}\); ISCOS uses an
approximation of
\[
q_x(U)=\mathbb P(L\geq x\mid U).
\]
Conditional independence makes the loss characteristic function available in
closed form, so a filtered COS expansion can compute this smooth weight without
an additional conditional default draw.  The half-step convention for lattice
losses evaluates the CDF away from its jump and therefore preserves the
non-strict event \(\{L\geq x\}\).

The analysis separates the benefit of conditional weighting from the error of
the finite expansion.  With exact \(q_x\), ISCOS is a Rao--Blackwellisation of
the CEIS raw sufficient-statistic estimator and has an explicitly smaller
covariance matrix in the positive-semidefinite order.  The corresponding
ordering for the fitted proposal parameters is first-order asymptotic.  If the
conditional COS approximation is uniformly \(O(K^{-p})\), the population
proposal parameters converge at the same order, while a pilot sample of size
\(M_0\) contributes the usual \(O_{\mathbb P}(M_0^{-1/2})\) term.

The numerical experiments support this calibration principle while also
showing its limits.  In the Gaussian model, exact conditional convolution
confirms that the COS-calibrated moments approach their exact-weight
counterparts as the resolution increases.  In the matched run, filtered
ISCOS-32 more than doubles the calibration ESS relative to CEIS, is slightly
closer to the finite-pilot exact-weight moments, and remains on the
finite-Gaussian-ratio side of the reported second-moment boundary.  It also
produces higher event-weight ESS and shorter average nominal intervals,
particularly for CES, at essentially the same total cost.  The componentwise
CVaR interval ordering remains mixed, and the CEIS CES second-moment warning
prevents interpreting these finite-run bands as a universal variance result.

The Student \(t\)-copula experiment reaches the same broad conclusion for a
common state containing both Gaussian factors and a heavy-tailed scale.  Under
the reported configuration, ISCOS-t increases calibration ESS by \(38.6\%\)
and tail-event ESS by \(93.8\%\), while reducing the average nominal CES
half-length by \(27.0\%\); the equivalent stand-alone runtime is \(4.3\%\)
higher than for CEIS-t.  The gain comes from more balanced event weights rather
than a higher proposal hit rate.

These results do not establish that ISCOS minimises the variance of every
final VaR or tail-contribution estimator.  The theoretical covariance order is
a calibration result, the experiments are single matched runs rather than
repeated-run studies, and the displayed intervals are nominal.  In the Student
\(t\) experiment, there is no exact conditional benchmark or COS mode sweep,
the product proposal cannot represent the dependence between \(Z\) and \(W\)
induced by the tail event, and the fitted inverse-Gamma scales fail the stated
common-state likelihood-ratio second-moment condition.  These qualifications
are essential when interpreting the interval comparisons.

Further work should therefore focus on repeated, work-normalised comparisons;
a Student \(t\) resolution study and conditional benchmark; and richer
proposals that allow the Gaussian factors to depend on the common scale.
Adaptive choices of the pilot size, COS resolution, and covariance
regularisation are also natural.  The error decomposition suggests increasing
\(K\) only until the finite-COS error is comparable with pilot sampling and
regularisation error.  More generally, ISCOS shows how conditional tail
information can stabilise proposal calibration without changing the final
importance-sampling estimator.

\bibliographystyle{plain}
\bibliography{biblio}
\end{document}